\documentclass[10pt,journal,compsoc]{IEEEtran}

\ifCLASSOPTIONcompsoc
  \usepackage[nocompress]{cite}
\else
  \usepackage{cite}
\fi

%
\ifCLASSINFOpdf
\else
\fi

\ifCLASSOPTIONcompsoc

\else

\fi

\ifCLASSINFOpdf

\else

\fi
\usepackage{float}
\usepackage[center]{caption2}
\usepackage{xspace}
\usepackage{balance}
\usepackage{algorithm}
\usepackage{algorithmic}
\usepackage{diagbox}
\usepackage{bbding}
\usepackage{amsthm}
\usepackage{amsmath,amssymb,amsfonts}
\usepackage{makecell}
\usepackage{multirow}
\usepackage{threeparttable}
\usepackage{subfigure}
\usepackage{booktabs}
\usepackage{tikz}
\newcommand{\Name}{\textbf{Hercules}\xspace}
\renewcommand{\algorithmicrequire}{\textbf{Input:}}
\renewcommand{\algorithmicensure}{\textbf{Output:}}
\newenvironment{packeditemize}{
\begin{list}{$\bullet$}{
\setlength{\labelwidth}{8pt}
\setlength{\itemsep}{0pt}
\setlength{\leftmargin}{\labelwidth}
\addtolength{\leftmargin}{\labelsep}
\setlength{\parindent}{0pt}
\setlength{\listparindent}{\parindent}
\setlength{\parsep}{0pt}
\setlength{\topsep}{3pt}}}{\end{list}}
\newtheorem{theorem}{Theorem}[section]

\newtheorem{lemma}[theorem]{Lemma}
\setcounter{MaxMatrixCols}{10}
\newcommand*{\circled}[1]{\lower.7ex\hbox{\tikz\draw (0pt, 0pt)
    circle (.4em) node {\makebox[0.9em][c]{\small #1}};}}

\hyphenation{op-tical net-works semi-conduc-tor}

\renewcommand{\raggedright}{\leftskip=0pt \rightskip=0pt plus 0cm}
\raggedright
\setlength{\abovecaptionskip}{10pt}
\setlength{\belowcaptionskip}{10pt}

\begin{document}
\title{Hercules: Boosting the Performance of Privacy-preserving Federated Learning}

\author{{Guowen~Xu, Xingshuo~Han,  Shengmin~Xu, Tianwei~Zhang, Hongwei~Li, Xinyi~Huang, Robert~H.~Deng,~\IEEEmembership{Fellow,~IEEE}}

\IEEEcompsocitemizethanks{\IEEEcompsocthanksitem Guowen~Xu, Xingshuo~Han, and Tianwei~Zhang are with the School of Computer Science and Engineering, Nanyang Technological University. (e-mail: guowen.xu@ntu.edu.sg; xingshuo001@e.ntu.edu.sg; tianwei.zhang@ntu.edu.sg)

\IEEEcompsocthanksitem Shengmin~Xu  and Xinyi~Huang are with the College of Computer and Cyber Security, Fujian
Normal University, Fuzhou, China (e-mail: smxu1989@gmail.com; xyhuang81@gmail.com)

\IEEEcompsocthanksitem Hongwei~Li  is with the school of Computer Science and Engineering,  University of Electronic Science and Technology of China, Chengdu 611731, China.(e-mail: hongweili@uestc.edu.cn)

\IEEEcompsocthanksitem  Robert~H.~Deng  is with the School of Information Systems, Singapore Management University, 178902 Singapore
(e-mail:robertdeng@smu.edu.sg)}}

 \IEEEcompsoctitleabstractindextext{
\begin{abstract}
 \raggedright
 In this paper, we address the problem of privacy-preserving  federated neural network training with  $N$ users. We present \Name, an  efficient and high-precision  training framework that can tolerate collusion of up to $N-1$ users.  \Name follows the POSEIDON framework proposed by Sav  \textit{et al.} (NDSS'21), but makes a qualitative leap in performance with the following contributions: (i) we design  a novel parallel homomorphic computation method for matrix operations, which  enables  fast Single Instruction and Multiple Data (SIMD) operations  over ciphertexts.  For the multiplication of two $h\times h$ dimensional matrices, our method reduces the computation complexity from $O(h^3)$ to $O(h)$. This greatly improves the training efficiency of the neural network since the ciphertext computation is dominated by the convolution operations; (ii) we present an efficient approximation on the sign function  based on the composite polynomial approximation. It is used to approximate non-polynomial functions (i.e.,  \texttt{ReLU} and \texttt{max}),  with  the optimal asymptotic complexity.  Extensive experiments on various benchmark datasets (BCW, ESR, CREDIT, MNIST, SVHN, CIFAR-10 and CIFAR-100) show that compared with  POSEIDON, \Name obtains up to $4\%$ increase in model accuracy, and up to $60\times$ reduction in the computation and communication cost.
\end{abstract}
\begin{IEEEkeywords}
 Privacy Protection, Federated Learning,  Polynomial Approximation.
\end{IEEEkeywords}}
\maketitle

\IEEEdisplaynotcompsoctitleabstractindextext

\IEEEpeerreviewmaketitle

\section{Introduction}
As a promising neural network training mechanism, Federated Learning (FL) has been highly sought after with some attractive features including amortized overhead and mitigation of privacy threats. However, the conventional FL setup has some inherent privacy issues \cite{mouchet2020multiparty, chen2019efficient}.
Consider  a scenario where a company (referred to as the cloud server) pays multiple users and requires them to train a target neural network model collaboratively.  Although each user is only required to upload the intermediate data (e.g., gradients) instead of the original training data to the server during the training process, a large amount of sensitive information can still be leaked implicitly from these intermediate values.  Previous works have demonstrated many powerful attacks to achieve this, such as attribute inference attacks and gradient reconstruction attacks \cite{geiping2020inverting,zhu2020deep,chen2020gan}.
On the other hand, the target model is locally distributed to each user according to the FL protocol, which ignores the  model privacy and may be impractical in  real-world scenarios. Actually, to protect the model privacy, the server must keep users ignorant of the details of the model parameters throughout the training process.

\subsection{Related Works}
\label{Related Works and  Challenges}
Extensive works have been proposed to mitigate the above privacy threats. In general,  existing privacy-preserving deep learning solutions mainly rely on the following two lines of technologies: \textit{Differential Privacy} (DP) \cite{abadi2016deep,yu2019differentially} and  \textit{crypto-based multiparty secure computing} (MPC) \cite{mohassel2018aby3,agrawal2019quotient,rathee2020cryptflow2,zheng2019helen,sav2020poseidon}. Each one has merits and demerits depending on the scenario to which it is applied.

\noindent\textbf{Differential Privacy.}
DP is usually applied in the training phase \cite{abadi2016deep,yu2019differentially}. To ensure the indistinguishability between individual samples while maintaining high training accuracy, each user is required to add noise to the gradient or local parameters that meets the preset privacy budget.  Abadi \textit{et al.} \cite{abadi2016deep} propose the first differentially private stochastic gradient descent (SGD) algorithm. They carefully implement gradient clipping,  hyperparameter tuning, and moment accountant to obtain a tight estimate of overall privacy loss, both asymptotically and empirically. Yu \textit{et al.} \cite{yu2019differentially} design a new DP-SGD, which employs a new primitive called  zero concentrated differential privacy (zCDP) for privacy accounting, to achieve a rigorous estimation of the privacy loss. In recent years, many variants of the above works have been designed and applied to specific scenarios \cite{wei2020federated,wu2020value,bagdasaryan2019differential,wang2019sparse}. Most of them follow the principle that the minimum accumulated noise is added to the gradient or local parameters while meeting the preset privacy budget.

DP is cost-effective because each user is only required to add noise that obeys a specific distribution during training. However, it is forced to make a trade-off between training accuracy and privacy, i.e., a strong privacy protection level can be reached at the cost of certain model accuracy drop \cite{jayaraman2019evaluating,hitaj2017deep}. This goes against the motivation of this paper, as our goal is to design a highly secure FL training framework without compromising the model accuracy.


\noindent\textbf{Crypto-based multiparty secure computing.} The implementation of this strategy mainly relies on two general techniques, secret sharing \cite{patra2021aby2} and homomorphic encryption (HE) \cite{zheng2019helen}. MPC enables the calculation of arbitrary functions collaboratively by multiple parties without revealing the secret input of each party. To support  privacy-preserving neural network training,  most existing works \cite{patra2021aby2,mohassel2017secureml,mohassel2018aby3,agrawal2019quotient,rathee2020cryptflow2} rely on splitting the training task into two or more servers, who are usually assumed to be non-colluding.  Then, state-of-the-art secret sharing methods, including arithmetic sharing \cite{patra2021aby2}, boolean sharing \cite{mohassel2018aby3}, and Yao's garbled circuit \cite{kumar2020cryptflow} are carefully integrated to efficiently implement various mathematical operations under the ciphertext. Mohassel \textit{et al.} \cite{mohassel2017secureml} propose SecureML, the first privacy-preserving machine learning framework for generalized linear model regression and neural network training.  It lands on the setting of two non-colluding servers, where users securely outsource local data to them. Then, several types of secret sharing methods are mixed and used to complete complex ciphertext operations. 
Other works, \textit{e.g.}, ABY$^3$ \cite{mohassel2018aby3}, QUOTIENT\cite{agrawal2019quotient}, BLAZE \cite{patrablaze}, Trident\cite{Trident}, are also exclusively based on the MPC protocol between multiple non-colluding servers (or a minority of malicious servers) to achieve fast model training and prediction.

It is  cost-effective to outsource the training task among multiple users to several non-colluding servers, avoiding the high communication overhead across large-scale users. However, it may be impractical in real scenarios where the setting of multiple servers is not available. Especially in FL scenarios, users are more inclined to keep their datasets locally rather than uploading data to untrusted servers. To alleviate this problem, several works \cite{zheng2019helen,sav2020poseidon,chen2019efficient,froelicher2020scalable} propose to use multi-party homomorphic encryption (a.k.a. threshold homomorphic encryption, as a variant of the standard HE), as the underlying technology to support direct interactions among multiple data owners for distributed learning. For example, Zheng \textit{et al.} \cite{zheng2019helen} present Helen, a secure distributed  learning approach for linear models,  where the threshold Paillier scheme \cite{galbraith2002elliptic} is used to protect users' local data.  Froelicher \textit{et al.} \cite{froelicher2020scalable} reduce the computation overhead of Helen by using the packed plaintext encoding with the SIMD technology \cite{chen2019efficient}. Sav \textit{et al.} propose {POSEIDON} \cite{sav2020poseidon}, the first distributed training framework with multi-party homomorphic encryption.  It relies on the multiparty version of the CKKS (MCKKS) cryptosystem \cite{cheon2017homomorphic} to encrypt users' local data. Compared with  the standard CKKS, the secret key  of MCKKS is securely shared with multiple entities.  As a result, each entity still performs the function evaluation under the same public key. However, the decryption of the result requires the participation of all entities. Besides, non-polynomial functions are approximated as polynomial functions  to be efficiently  executed by CKKS.

\subsection{Technical Challenges}

 In this paper, we follow the specifications of POSEIDON to design our FL training framework, because such a technical architecture enables the users' data to be kept locally without incurring additional servers.
 However, there are still several critical issues that have not been solved well. (1) Computation overhead is the main obstacle hindering the development of HE. It usually requires more computing resources to perform the same machine learning tasks compared to outsourcing-based solutions \cite{mohassel2018aby3,agrawal2019quotient,rathee2020cryptflow2}. Although there are some optimization methods such as parameter quantization and model compression \cite{agrawal2019quotient,zhang2020batchcrypt}, they inevitably degrade the model accuracy. Recently, Zhang \textit{et al.}  \cite{zhang2021gala} design GALA, which employs a novel coding technique for matrix-vector multiplication.  In this way, multiple plaintexts are packed into one ciphertext to perform efficient homomorphic SIMD operations without reducing the calculation accuracy. However, GALA is specifically designed for the MPC protocol that uses a mixture of HE and garbled circuits, and its effectiveness is highly dependent on the assistance of the inherent secret sharing strategy. Therefore, it is necessary to design a computation optimization method that is completely suitable for HE, without sacrificing the calculation accuracy.
 (2) There is a lack of satisfactory approximation mechanisms for non-polynomial functions in HE. HE basically supports homomorphic addition and multiplication. For non-polynomial functions, especially \texttt{ReLU}, one of the most popular activation functions in  hidden layers,  we need to approximate them to polynomials for ciphertext evaluation. The common polynomial approximation method, such as the minimax method, aims to find the approximate polynomial with the smallest degree on the objective function under the condition of a given error bound. However, the computation complexity of evaluating these polynomials is enormous, making it quite inefficient to obtain the fitting function with high-precision \cite{iliashenko2021faster,cheon2020efficient}. Recently, Lu \textit{et al.} \cite{lu2020pegasus} propose PEGASUS, which can efficiently switch back and forth between a packed CKKS ciphertext and FHEW ciphertext \cite{ducas2015fhew} without decryption, allowing us to evaluate both polynomial and non-polynomial functions on encrypted data.  However, its performance is still far from practical.

\subsection{Our Contributions}
\label{Our Contributions}
As discussed above, the HE-based FL is more in line with the needs of most real-world applications, compared to other methods. However, it suffers from computing bottlenecks and poor compatibility with non-polynomial functions. To mitigate these limitations,  we present \Name, an  efficient, privacy-preserving and high-precision framework for FL. \Name  follows the tone of the state-of-the-art work POSEIDON \cite{sav2020poseidon}, but  makes a qualitative leap in performance.  Specifically, we first devise a new method for parallel homomorphic computation of matrix, which supports fast homomorphic SIMD operations, including addition, multiplication, and transposition.  Then, instead of fitting the replacement function of \texttt{ReLU} for training in POSEIDON, we design an efficient  method based on the composite polynomial approximation. In short, the contributions of \Name are summarized as follows:

\begin{packeditemize}

\item  We design a new method to execute matrix operations in parallel, which can pack multiple  plaintexts into a ciphertext to achieve fast homomorphic SIMD operations (Section~\ref{Parallelized Matrix Homomorphic Operations}).  Our key insight is to minimize the number of plaintext slots that need to be rotated in matrix multiplication through customized permutations. Compared with existing works \cite{sav2020poseidon, halevi2015bootstrapping},  our solution  reduces the computation complexity from $O(h^3)$ to $O(h)$ for the multiplication of any two $h\times h$ matrices.  It greatly improves the neural network training efficiency since the ciphertext computation  is dominated by the convolution operations. We describe the detail of efficiently executing matrix transposition on packed ciphertexts, and packing multiple matrices into one ciphertext, yielding better-amortized performance.

\item We present  an efficient approximation on the sign function  based on the composite polynomial approximation, with optimal asymptotic complexity (Section \ref{Approximation for Comparison Function}). The  core of our solution is to carefully construct a polynomial $g$ with a constant degree, and then make the composite polynomial $g\circ g\circ g\circ \cdots \circ g$ infinitely close to the sign function, as the number of $g$ increases.  In this way, our new algorithm only requires $\Theta(\log (1/\delta))+\Theta(\log \sigma)$ computation complexity to obtain an approximate sign function result of  $m\in[-1,-\delta]\cup [\delta, 1]$  within $2^{-\sigma}$ error. For example, for an encrypted 20-bit integer $m$, we can obtain the result of the sign function within $2^{-20}$ error with an amortized running time of 20.05 milliseconds, which is $33\times$  faster than the state-of-the-art work \cite{cheon2019numerical}.

\item We show that \Name provides semantic security in the FL scenario consisting of $N$ users and a parameter server, and tolerates collusion among up to $N-1$ passive users (Section \ref{Implementation of Hercules}). This is mainly inherited from the property of the MCKKS.

\item We conduct extensive experiments on various benchmark datasets (BCW, ESR, CREDIT, MNIST, SVHN, CIFAR-10 and CIFAR-100) to demonstrate the superiority of \Name in terms of classification accuracy, and overhead of computation and communication (Section \ref{sec:PERFORMANCE EVALUATION}). Specifically, compared with  POSEIDON, we obtain up to $4\%$ increase in model accuracy, and up to $60\times$ reduction in the computation and communication cost. 

\end{packeditemize}

\noindent\textbf{Roadmap}:  In  Section \ref{sec:PROBLEM STATEMENT}, we review some basic concepts used in this paper,  and introduce the scenarios and threat models. In  Sections \ref{Parallelized Matrix Homomorphic Operations} to \ref{Implementation of Hercules}, we give the details of \Name.   Performance evaluation is presented in  \ref{sec:PERFORMANCE EVALUATION}.  Section \ref{sec:conclusion} concludes the paper.
\section{Preliminaries}
\label{sec:PROBLEM STATEMENT}

\subsection{Neural Network Training}
\label{sec:Neural Network Training}
A neural network usually consists of an input layer, one or more hidden layers, and an output layer, where hidden layers  include convolutional layers, pooling layers, activation function layers, and fully connected layers. The connections between neurons in adjacent layers are parameterized by ${\omega}$ (\textit{i.e.}, model parameters), and each neuron is associated with an element-wise activation function $\varphi$ (such as sigmoid, \texttt{ReLU}, and softmax).
Given the training sample set $({x}, {y})\in D$,  training  a neural network of $\mathbb{L}$ layers is generally divided into two phases: \textit{feedforward} and \textit{backpropagation}. Specifically, at the $k$-th iteration, the weights between layers $j$ and $j+1$ are denoted as a matrix ${\omega}_{j}^{k}$;  matrix $M_j$ represents the activation of neurons in the $j$-th layer. Then the input ${x}$ is sequentially propagated  to each layer with operations of linear transformation (i.e, $E_j^{k}={\omega}_{j}^{k}\times M_{j-1}^{k}$) and non-linear transformation (i.e., $M_j^{k}=\varphi(E_j^{k})$) to obtain the final classification result ${\bar{y}}=M_\mathbb{L}^{k}$. With the loss function ${L}$ which is usually set as ${L}$=$||{y}-\bar{{y}}||_2$,   the mini-batch based Stochastic Gradient Descent (SGD) algorithm  \cite{sav2020poseidon} is  exploited to optimize the parameter ${\omega}$. The parameter update rule is $ {\omega}_{j}^{k+1}={\omega}_{j}^{k}-\frac{\eta}{\mathcal{B}}\bigtriangledown {\omega}_{j}^{k}$, where $\eta$ and $\mathcal{B}$ indicate the learning rate and the random batch size of input samples, and $\bigtriangledown {\omega}_{j}^{k}=\frac{\partial L}{\partial {\omega}_{j}^{k}}$. Since the transposition of matrices/vectors is  involved in the \textit{backpropagation}, we use $V^{T}$ to represent the transposition of variable $V$. The \textit{feedforward} and \textit{backpropagation} steps are performed iteratively until the neural network meets the given convergence constraint. The detailed implementation is shown in \textbf{Algorithm}~\ref{algorithm 1}.
\renewcommand{\thempfootnote}{\alph{mpfootnote}}
 \begin{algorithm}
\small
\caption{ Mini-batch based SGD algorithm }
\label{algorithm 1}
\begin{algorithmic}[1]
\REQUIRE  ${\omega}_{1}^{k}, {\omega}_{2}^{k}, \cdots, {\omega}_{\mathbb{L}}^{k}$.
\ENSURE  ${\omega}_{1}^{k+1}, {\omega}_{2}^{k+1}, \cdots, {\omega}_{\mathbb{L}}^{k+1}$.
\FOR{{$t=1$ to $\mathcal{B}$}}
\STATE $M_0=X[t]$ \qquad \qquad \qquad \qquad \textit{$\rhd$ feedforward}
\FOR{{$j=1$ to $\mathbb{L}$}}
\STATE $E_j^{k}={\omega}_{j}^{k}\times M_{j-1}^{k}$
\STATE $M_j^{k}=\varphi(E_j^{k})$
\ENDFOR
\STATE $L_\mathbb{L}^{k}=||y[t]-M_\mathbb{L}^{k}||_{2}$\qquad \qquad \textit{$\rhd$ backpropagation}
\STATE $L_\mathbb{L}^{k}=\varphi'(E_\mathbb{L}^{k})\odot L_\mathbb{L}^{k}$ \footnotemark{}
\STATE $\bigtriangledown {\omega}_{\mathbb{L}}^{k}+=(M_{\mathbb{L}-1}^{k})^{T}\times L_\mathbb{L}^{k}$
\FOR{{$j=\mathbb{L}-1$ to $1$}}
\STATE $L_j^{k}=L_{j+1}^{k}\times (\omega_{j+1}^{k})^{T}$
\STATE $L_j^{k}=\varphi'(E_j^{k})\odot L_j^{k}$
\STATE $\bigtriangledown {\omega}_{j}^{k}+=(M_{j-1}^{k})^{T}\times L_j^{k}$
\ENDFOR
\ENDFOR
\FOR{{$j=1$ to $\mathbb{L}$}}
\STATE ${\omega}_{j}^{k+1}={\omega}_{j}^{k}-\frac{\eta}{\mathcal{B}}\bigtriangledown {\omega}_{j}^{k}$
\ENDFOR
\end{algorithmic}
\end{algorithm}
\footnotetext{$\varphi'(\cdot)$ and $\odot$ indicate partial derivative and element-wise product.}

\subsection{Multiparty Version of CKKS }
\label{Multiparty Version of CKKS Homomorphic Encryption}
\Name relies on the multiparty version of Cheon-Kim-Kim-Song (MCKKS) \cite{sav2020poseidon} fully homomorphic encryption to protect users' data as well as the model's parameter privacy. Compared with  the standard CKKS, the secret key  of MCKKS is securely shared with all entities.  As a result, each entity still performs ciphertext evaluation under the same public key, while the decryption of the result requires the participation of all entities. As shown in \cite{sav2020poseidon}, MCKKS has several attractive properties: (i) it is naturally suitable for floating-point arithmetic circuits, which facilitates the implementation of machine learning; (ii) it flexibly supports collaborative computing among multiple users without revealing the respective share of the secret key; (iii) it supports the function of key-switch, making it possible to convert a ciphertext encrypted under a public key into a ciphertext under another public key without decryption. Such a property facilitates the decryption of ciphertexts collaboratively.  We provide a short description of MCKKS and list all the functions required by \Name in Figure~\ref{MCKKS}.  Informally, given a cyclotomic polynomial ring with a dimension of $\mathcal{N}$, the plaintext and ciphertext space of MCKKS is defined as $R_{Q_
\mathcal{L}}=\mathbb{Z}_{Q_\mathcal{L}}[X]/(X^{\mathcal{N}}+1)$, where $Q_\mathcal{L}=\prod_{0}^{\mathcal{L}}q_i$, and each $q_i$ is a unique prime. $Q_\mathcal{L}$ is the ciphertext module under the initial level $\mathcal{L}$. In CKKS, a plaintext vector with up to $\mathcal{N}/2$ values can be encoded into a ciphertext.  As shown in Figure~\ref{MCKKS}, given a plaintext $m\in R_{Q_\mathcal{L}}$ (or a plaintext vector $\mathbf{m}=(m_1, \cdots, m_n)\in R_{Q_\mathcal{L}}^{n}$, with $n\leq \mathcal{N}/2$) with its encoded (packed) plaintext $\hat{m}$, the corresponding  ciphertext is denoted as $[\mathbf{c}]_{pk}=(c_1, c_2)\in R_{Q_\mathcal{L}}^{2}$. Besides, we use  symbols $\mathcal{L}_{\mathbf{c}_{pk}}$, $\Delta_{\mathbf{c}_{pk}}$, $\mathcal{L}$, $\Delta$ to indicate the current level of $[\mathbf{c}]_{pk}$, the current scale of $\mathbf{c}$, the initial level, and the initial scale of a fresh ciphertext, respectively.  All functions named starting with $\mathbf{D}$ (except for $\mathbf{Dcd}(\cdot)$) in Figure~\ref{MCKKS} need to be executed cooperatively by all the users, while the rest operations can be executed locally by each user with the public key. For more details about MCKKS,  please refer to literature \cite{sav2020poseidon,froelicher2020scalable,mouchet2020multiparty}.
\renewcommand\tablename{Fig.}
\renewcommand \thetable{\arabic{table}}
\setcounter{table}{0}
\begin{table}[!htb]
\small
\centering
\begin{tabular}{|p{8.2cm}|}
\hline
\begin{itemize}

 \item [1)]$\mathbf{SecKeyGen}(1^{\lambda})$: Given a security parameter $\lambda$,  output a secret key $sk_i$ for each user $i\in[N]$, where $[N]$ is the shorthand $\{1, 2, \cdots N\}$ and  $\sum_{i=1}^{i=N}sk_i=sk$.
\item [2)] $\mathbf{DKeyGen}(\{sk_i\})$: Given the set of secret keys $\{sk_i\}$, $i\in[N]$, output the collective public key $pk$.
 \item [3)]$\mathbf{Ecd}(\cdot)$: Given a plaintext  $m$ (or a plaintext vector $\mathbf{m}$ whose dimension does not exceed $\mathcal{N}/2$), output the encoded (packed) plaintext $\hat{m}\in R_{Q_\mathcal{L}}$, with scale $\Delta$.
  \item [4)]$\mathbf{Dcd}(\hat{m})$: Given an encoded (packed) plaintext $\hat{m}\in R_{Q_{\mathcal{L}_m}}$ with scale $\Delta_m$, output the decoding of $m$ (or the plaintext vector $\mathbf{m}$).
  \item [5)]$\mathbf{Enc}(pk,\hat{m})$: Given the collective public key $pk$, and an encoded (packed) plaintext $\hat{m}\in R_{Q_{\mathcal{L}}}$, output the ciphertext $\mathbf{[c]}_{pk}\in R_{Q_\mathcal{L}}^{2}$ with scale $\Delta$.
  \item [6)]$\mathbf{DDec}(\mathbf{[c]}_{pk}, \{sk_i\})$: Given a  ciphertext $\mathbf{[c]}_{pk}\in R_{Q_{\mathcal{L}_\mathbf{c}}}^2$ with scale $\Delta_{\mathbf{c}_{pk}}$, and the set of secret keys $\{sk_i\}$, $i\in[1, N]$, output the plaintext $p\in R_{Q_{\mathcal{L}_\mathbf{c}}}$ with scale $\Delta_{\mathbf{c}_{pk}}$.
  \item [7)]$\mathbf{Add}(\mathbf{[c]}_{pk}, \mathbf{[c']}_{pk})$: Given two ciphertexts $\mathbf{[c]}_{pk}$ and $\mathbf{[c']}_{pk}$ encrypted with the same public key $pk$, output $[\mathbf{c}+\mathbf{c'}]_{pk}$ with level $\min(\mathcal{L}_{\mathbf{c}_{pk}}, \mathcal{L}_{\mathbf{c}'_{pk}})$ and scale $\max(\Delta_{\mathbf{c}_{pk}}, \Delta_{\mathbf{c}'_{pk}})$.
  \item [8)]$\mathbf{Sub}(\mathbf{[c]}_{pk}, \mathbf{[c']}_{pk})$: Given two ciphertexts $\mathbf{[c]}_{pk}$ and $\mathbf{[c']}_{pk}$, output $[\mathbf{c}-\mathbf{c}']_{pk}$ with level $\min(\mathcal{L}_{\mathbf{c}_{pk}}, \mathcal{L}_{\mathbf{c}'_{pk}})$ and scale $\max(\Delta_{\mathbf{c}_{pk}}, \Delta_{\mathbf{c}'_{pk}})$.
  \item [9)]$\mathbf{Mul}_{pt}([\mathbf{c}]_{pk}, \hat{m})$: Given a ciphertext $[\mathbf{c}]_{pk}$  and an encoded (packed) plaintext $\hat{m}$, output $[\mathbf{c}m]_{pk}$ with level $\min(\mathcal{L}_{\mathbf{c}_{pk}}, \mathcal{L}_{\mathbf{c}'_{pk}})$ and scale $\Delta_{\mathbf{c}_{pk}} \times \Delta_m$.
  \item [10)]$\mathbf{Mul}_{ct}([\mathbf{c}]_{pk}, [\mathbf{c}']_{pk})$: Given two ciphertexts $[\mathbf{c}]_{pk}$ and $[\mathbf{c}']_{pk}$, output $[\mathbf{c}\mathbf{c}']_{pk}$ with level $\min(\mathcal{L}_{\mathbf{c}_{pk}}, \mathcal{L}_{\mathbf{c}'_{pk}})$ and scale $\Delta_{\mathbf{c}_{pk}} \times \Delta_{\mathbf{c}'_{pk}}$.
 \item [11)]$\mathbf{Rot}([\mathbf{c}]_{pk}, k)$: Given a ciphertexts $[\mathbf{c}]_{pk}$, homomorphically rotate $[\mathbf{c}]_{pk}$  to the right ($k>0$) or to the left ($k<0$) by $k$ times.
  \item [12)]$\mathbf{RS}([\mathbf{c}]_{pk})$: Given a ciphertexts $[\mathbf{c}]_{pk}$, output $[\mathbf{c}]_{pk}$ with scale $\Delta_\mathbf{c}/q_{\Delta_\mathbf{c}}$ and level $\mathcal{L}_\mathbf{c}-1$.
 \item [13)]$\mathbf{DKeySwitch}([\mathbf{c}]_{pk}, pk', \{sk_i\})$: Given a ciphertexts $[\mathbf{c}]_{pk}$, another public key $pk'$, and the set of secret keys $\{sk_i\}$, $i\in[N]$,  output $[\mathbf{c}]_{pk'}$.
  \item [14)]$\mathbf{DBootstrap}([\mathbf{c}]_{pk}, \mathcal{L}_{\mathbf{c}_{pk}}, \Delta_{\mathbf{c}_{pk}}, \{sk_i\})$: Given a ciphertexts $[\mathbf{c}]_{pk}$ with level $\mathcal{L}_{\mathbf{c}_{pk}}$ and scale $\Delta_{\mathbf{c}_{pk}}$, and the set of secret keys $\{sk_i\}$, $i\in[N]$,  output $[\mathbf{c}]_{pk}$ with initial $\mathcal{L}$ and scale $\Delta$.
 \end{itemize}\\
    \hline
\end{tabular}
\caption{Cryptographic operations of MCKKS}
\label{MCKKS}
\end{table}

\subsection{Threat Model and Privacy Requirements}
\label{Threat Model and Privacy Requirements}
 We consider a FL scenario composed of a parameter server and $N$ users for training a neural network model collaboratively.  Specifically, the server (also the model owner) first initializes the target model $\mathcal{M}$ and broadcasts the encrypted model $[\mathbf{M}]_{pk}=\mathbf{Enc}(pk,\mathcal{M})$ (i.e., encrypting all the model parameters) to all the users\footnote{Note that the server knows nothing about the secret key $sk$ corresponding to $pk$. $sk$ is securely shared with $N$ users and can only be restored with the participation of all the users.}. Then, each user  $P_i$ with a dataset $\{x, y\}\in D_i$  trains $[\mathbf{M}]_{pk}$ locally using the mini-batch SGD algorithm and then sends the encrypted local gradients to the server. After receiving the gradients from all the users, the server homomorphically aggregates them and broadcasts back the global model parameters. All the participants perform the above process iteratively until the model converges. Since the final trained model is encrypted with the public key $pk$, for the accessibility of the server to the plaintext model, we rely on the function $\mathbf{DKeySwitch}$ (Figure~\ref{MCKKS}), which enables  the conversion of  $[\mathbf{M}]_{pk}$ under the  public key $pk$ into $[\mathbf{M}]_{pk'}$ under the server's public key $pk'$ without decryption (refer to Section~\ref{Implementation of Hercules} for more details). As a result, the server obtains the plaintext model by decrypting $[\mathbf{M}]_{pk'}$ with its secret key.

In  \Name, we consider a passive-adversary model with collusion of up to $N-1$ users\footnote{See Appendix~\ref{Security Extensions} for more discussion about malicious adversary model.}. Concretely, the server and each user abide by the agreement and perform the training procedure honestly. However,   there are two ways of colluding in  \Name by sharing their own inputs, outputs and observations during the training process for different purposes: (i) collusion among up to $N-1$ users to derive the training data of other users or the model parameters of the server; (ii) collusion among the server and no more than $N-1$ users to infer the training data of other users. Given such a threat model, in the training phase, the privacy requirements  of \Name are defined as below:
\begin{packeditemize}
\item  \textbf{Data privacy}: No participant (including the server) should learn more information about the input data (e.g., local datasets, intermediate values, local gradients) of other honest users, except for the information that can be inferred from its own inputs and outputs.

\item \textbf{Model privacy}: No user  should learn more information about the parameters of the model, except for information that can be inferred from its own inputs and outputs.
\end{packeditemize}
In Section~\ref{Implementation of Hercules}, we will provide  (sketch) proofs of these privacy requirements with the real/ideal simulation formalism \cite{canetti2014practical}.

\section{Parallelized Matrix Homomorphic Operations}
\label{Parallelized Matrix Homomorphic Operations}
\Name essentially exploits MCCK as the underlying architecture to implement privacy-preserving federated neural network training. Since the vast majority of the computation of a neural network consists of convolutions (equivalent to matrix operation), \Name is required to handle this type of operation homomorphically very frequently. In this section, we describe our optimization method to perform homomorphic matrix operations in a parallelized manner, thereby substantially improving the computation  performance of HE.
\subsection{Overview}
 At a high level, operations between two matrices, including multiplication and transposition, can be decomposed into a series of combinations of linear transformations.  To handle homomorphic matrix operations in an SIMD manner, a straightforward way is to directly perform the relevant linear operations under the packed ciphertext (Section~\ref{Preliminary Knowledge}). However, it is computationally intensive and requires $O(h^3)$ computation complexity for the multiplication of two $h\times h$-dimensional matrices (Section~\ref{Permutation for matrix multiplication}). Existing state-of-the-art methods \cite{halevi2015bootstrapping} propose to transform the multiplication of two $h\times h$-dimensional matrices into  inner products between multiple vectors. It can reduce the complexity from $O(h^3)$ to $O(h^2)$, however,  yielding $h$ ciphertexts to represent a matrix (Section~\ref{Homomorphic matrix multiplication}). Compared to existing efforts, our method  only needs $O(h)$ complexity and derives one ciphertext. Our key insight is to first formalize the linear transformations corresponding to matrix operations, and then tweak them to minimize redundant operations in the execution process. In the following  we present the technical details of our method. To facilitate understanding, Figure~\ref{Examples} also provides an intuitive example, where the detailed steps of the multiplication of two $3\times 3$-dimensional matrices are described for comprehensibility.



\subsection{Preliminary Knowledge}
\label{Preliminary Knowledge}
 We first introduce some useful symbols and concepts. Specifically, all the vectors in this section refer to row vectors, and are represented in bold (e.g., $\mathbf{a}$).  As shown in Figure~\ref{MCKKS},  given a plaintext vector $\mathbf{m}=(m_1, \cdots, m_n)\in R_{Q_\mathcal{L}}^{n}$, with $n\leq \mathcal{N}/2$,  CKKS enables to encode  the plaintext vector $\mathbf{m}$ into an encoded plaintext $\hat{m}\in R_{Q_\mathcal{L}}$, where each $m_i$, $i\in[n]$ has a unique position called a plaintext slot in the encoded  $\hat{m}$. Then, $\hat{m}$ is encrypted as a ciphertext $[\mathbf{c}]_{pk}$. Hence, performing arithmetic operations (including addition and multiplication) on $[\mathbf{c}]_{pk}$ is equivalent to doing the same operation on every plaintext slot at once.

The ciphertext packing technology is capable of packing multiple plaintexts into one ciphertext and realizing the homomorphic SIMD operation, thereby effectively reducing the space and time complexity of encryption/calculation of a single ciphertext. However, it is incapable of handling the arithmetic circuits when some inputs are in different plaintext slots. To combat that, CKKS provides a rotation function $\mathbf{Rot}([\mathbf{c}]_{pk}, k)$. Given a ciphertext $[\mathbf{c}]_{pk}$  of  a plaintext vector $\mathbf{m}=(m_1, \cdots, m_n)\in R_{Q_\mathcal{L}}^{n}$,  $\mathbf{Rot}([\mathbf{c}]_{pk}, k)$ transforms $[\mathbf{c}]_{pk}$ into an encryption of $\mathbf{R}(\mathbf{m}, k):= (m_k, \cdots, m_{n-1},$$ m_{0},\cdots, m_{k-1})$.
 $k$ can be either positive or negative and we have a rotation by $\mathbf{R}(\mathbf{m}, k)= \mathbf{R}(\mathbf{m}, n-k)$. 

Based on the above explanation, we  adopt a method proposed by Shai \textit{et al.} \cite{halevi2014algorithms}, which supports arbitrary linear transformations for encrypted vectors. Specifically, an arbitrary linear transformation $\mathcal{T}: R^{n}\rightarrow R^{n}$ on the plaintext vector can be expressed as $\mathcal{T}: \mathbf{m}\rightarrow U\cdot\mathbf{m} $ using some matrix $U\in R^{n\times n}$. This process can be implemented in ciphertext by
the rotation function $\mathbf{Rot}$ and constant multiplication operation $\mathbf{Mul}_{pt}$. Concretely, for $0\leq k<n$, a $k$-th \textit{diagonal vector} $U$ is defined as $\mathbf{u}_k=(U_{0,k}, U_{1, k+1}, \cdots, U_{n-k-1, n-1}, U_{n-k,0},\cdots, U_{n-1, k-1})\in R^{n}$.  Consequently, we have
\begin{small}
 \begin{equation}
 \label{eq1}
\begin{split}
U\cdot \mathbf{m}=\sum_{0\leq k<n}\mathbf{u}_k\odot \mathbf{R}(\mathbf{m},k).
\end{split}
\end{equation}
\end{small}
Hence, given the matrix $U$, and a ciphertext $[\mathbf{c}]_{pk}$ of the vector $\mathbf{m}$, \textbf{Algorithm}~\ref{algorithm 2} shows the details of computing encrypted $U\cdot \mathbf{m}$.
We observe that \textbf{Algorithm}~\ref{algorithm 2} requires $n$ additions, constant multiplications and rotations. Because the  rotation operation is much more intensive than the other two  operations, the computation complexity of \textbf{Algorithm}~\ref{algorithm 2} is usually regarded as asymptotically $O(n)$ rotations.

\renewcommand{\algorithmicrequire}{\quad \; \textbf{procedure}}
\begin{algorithm}
\caption{ Homomorphic linear transformation }
\label{algorithm 2}
\begin{algorithmic}[1]
\small
\REQUIRE \texttt{HE-LinTrans} $([\mathbf{c}]_{pk}, U)$
\STATE $[\mathbf{c}']_{pk}\leftarrow \mathbf{Mul}_{pt}([\mathbf{c}]_{pk}, \mathbf{u}_0)$
\FOR{{$k=1$ to $n-1$}}
\STATE $[\mathbf{c}']_{pk}\leftarrow \mathbf{Add}([\mathbf{c}']_{pk}, \mathbf{Mul}_{pt}(\mathbf{Rot}([\mathbf{c}]_{pk}, k), \mathbf{u}_k))$
\ENDFOR
\STATE \textbf{return} $[\mathbf{c}']_{pk}$
\end{algorithmic}
\end{algorithm}

In the following, we first describe how to express the multiplication between two matrices by permutation. Then, we introduce an encoding method that converts a matrix into a vector. Based on this, we describe the details of matrix multiplication on packed ciphertexts.

\subsection{Permutation for Matrix Multiplication}
\label{Permutation for matrix multiplication}
 Given a $(h\times h)$-dimensional matrix $A=(A_{i,j})_{0\leq i,j<h}$, we describe  four permutation operations ($\mu$, $\zeta$, $\phi$, $\pi$) on it. For simplicity,  we use $\mathbb{Z}\cap[0,h)$ to denote the representative of $\mathbb{Z}_h$,  $[i]_h$ indicates the reduction of an integer $i$ modulo $h$  into  that interval.  Below all indexes are integers modulo $h$.

We first define  four permutation operations as below.
\begin{small}
\begin{itemize}
\item[] $\mu(A)_{i,j}=A_{i, i+j}$; $\zeta(A)_{i,j}=A_{i+j, j}$; \item[]$\phi(A)_{i,j}=A_{i, j+1}$; $\pi(A)_{i,j}=A_{i+1,j}$.
\end{itemize}
\end{small}
We can see that $\phi$ and $\pi$ are actually shifts of the columns and rows of the matrix, respectively.  Given two $(h\times h)$-dimensional square matrices $A$ and $B$, the  multiplication of $A$ and $B$ can be parsed as
\begin{small}
 \begin{equation}
 \label{eq2}
\begin{split}
A\cdot B=\sum_{k=0}^{h-1}(\phi^{k}\circ \mu(A) )\odot(\pi^{k}\circ \zeta (B)).
\end{split}
\end{equation}
\end{small}
The correctness of Eq.(\ref{eq2}) is shown as follows by calculating the components of the matrix index $(i, j)$.

\begin{small}
 \begin{equation}
\begin{split}
\sum_{k=0}^{h-1}(\phi^{k}\circ \mu(A) )_{i,j}\cdot(\pi^{k}\circ \zeta (B))_{i,j}&=\sum_{k=0}^{h-1}\mu(A)_{i,j+k}\cdot \zeta(B)_{i+k,j}\\
&=\sum_{k=0}^{h-1}A_{i, i+j+k} \cdot B_{i+j+k, j}\\
&=\sum_{k=0}^{h-1}A_{i,k}\cdot B_{k,j}=(A\cdot B)_{i,j}.
\end{split}
\end{equation}
\end{small}

Note that while a single $\mu(A)_{i,j+k}\cdot \zeta(B)_{i+k,j}=A_{i, i+j+k} \cdot B_{i+j+k, j}$  is not equal to $A_{i, k} \cdot B_{k,j}$, it is easy to deduce that $\sum_{k=0}^{h-1}A_{i, i+j+k} \cdot B_{i+j+k, j}=\sum_{k=0}^{h-1}A_{i,k}\cdot B_{k,j}=(A\cdot B)_{i,j}$.   To be precise, given $i$ and $j$, $\sum_{k=0}^{h-1}A_{i, i+j+k} \cdot B_{i+j+k, j}=\sum_{t=(i+j)}^{h-1+i+j}A_{i,t}\cdot B_{t,j}$, where we set $t=i+j+k$. Then, we have $\sum_{t=(i+j)}^{h-1+i+j}A_{i,t}\cdot B_{t,j}=\sum_{t=0}^{h-1}A_{i,t}\cdot B_{t,j}$ since all the indexes are considered as integers modulo $h$. Therefore, $\sum_{t=0}^{h-1}A_{i,t}\cdot B_{t,j}=\sum_{k=0}^{h-1}A_{i,k}\cdot B_{k,j}$.

We observe that Eq.(\ref{eq2}) consists of permutation and multiplication of element components between matrix entries. Intuitively, we can evaluate it using the operations (shown in \textbf{Algorithm}~\ref{algorithm 2}) provided by CKKS for packed ciphertexts. However, since the matrix representation $U$ usually has $n=h^2$ nonzero diagonal vectors, if we directly use \textbf{Algorithm}~\ref{algorithm 2} to evaluate $A\mapsto \phi^{k}\circ \mu(A)$ and $B\mapsto \pi^{k}\circ \zeta (B)$ for $1\leq k <h$, each of them requires
 rotations  with the complexity of $O(h^2)$. As a result,  the total complexity is $O(h^3)$. To alleviate this,  we design a new method to substantively improve its efficiency.

\subsection{Matrix Encoding}
 We introduce an encoding method that converts a matrix into a vector.  Given a vector $\mathbf{a}=(a_k)_{0\leq k<n}$, where $n=h^2$,  the encoding map $\iota: R^{n}\rightarrow R^{h\times h}$ is shown as below.

\begin{small}
 \begin{equation}
\begin{split}
\iota : \mathbf{a}\mapsto A=(a_{h\cdot i+j})_{0\leq i, j<h}.
\end{split}
\end{equation}
\end{small}
This encoding method makes the vector $\mathbf{a}$ essentially an ordered concatenation of the rows of the matrix $A$. As a result, $\iota(\cdot)$ is isomorphic of addition, which means that matrix addition operations are equivalent to the same operations between the corresponding original vectors. Therefore, the matrix addition can be calculated homomorphically in the SIMD environment. The constant multiplication operations can also be performed homomorphically. In this paper, we use $\iota(\cdot)$ to identify two spaces $ R^{n}$ and $R^{h\times h}$. For example, we say that a ciphertext  is  the encryption of $A$ if  $\mathbf{a}=\iota^{-1}(A)$.

\subsection{Tweaks of Permutation}
From the definition of matrix encoding,  permutation on an $(h\times h)$-dimensional matrix can be regarded as a linear transformation $\mathcal{T}: R^{n}\rightarrow R^{n}$, where $n=h^2$.  In
general, its matrix representation $U\in\{0, 1\}^{n\times n}\subset R^{n\times n}$ has $n$ nonzero diagonal vectors. Therefore, as presented in Sections~\ref{Preliminary Knowledge} and ~\ref{Permutation for matrix multiplication}, if we directly use \textbf{Algorithm}~\ref{algorithm 2} to evaluate $A\mapsto \phi^{k}\circ \mu(A)$ and $B\mapsto \pi^{k}\circ \zeta (B)$ for $1\leq k <h$, each of them requires
 rotations  with the complexity of $O(h^2)$. The total complexity will be $O(h^3)$. To  alleviate this,
 based on Eq.(\ref{eq2}) and our matrix encoding map, we provide a tweak method for matrix permutation to reduce the complexity from $O(h^3)$ to $O(h)$. Specifically, for four permutation operations ($\mu$, $\zeta$, $\phi$, and $\pi$) on the matrix, we use $U^{\mu}$, $U^{\zeta}$, $V$ and $P$ to indicate the matrix representations corresponding to these  permutations, respectively.  $U^{\mu}$, $U^{\zeta}$ for permutations  $\mu$ and $\zeta$ can be parsed as below (readers can refer to the example in Figure~\ref{Examples} for ease of understanding).
 \begin{small}
\begin{equation}
 U_{h\cdot i+j, t}^{\mu}=\left\{
\begin{aligned}
&1 \quad \mathsf{if}\; t=h\cdot i+[i+j]_h;\\
& 0 \quad \mathsf{otherwise};\\
\end{aligned}
\right.
\end{equation}
\begin{equation}
U_{h\cdot i+j, t}^{\zeta}=\left\{
\begin{aligned}
&1 \quad \mathsf{if}\; t=h\cdot[i+j]_h+j,\\
& 0 \quad \mathsf{otherwise};\\
\end{aligned}
\right.
\end{equation}
\end{small}
where $0\leq i,j<h$ and $0\leq t<h^2$. Similarly,  for $1\leq k<h$,  the  matrix representations of  $\phi^{k}$ and $\pi^{k}$ (i.e., $V^{k}$ and $P^{k}$) can be denoted as below.

\begin{small}
\begin{equation}
 V_{h\cdot i+j, t}^{k}=\left\{
\begin{aligned}
&1 \quad \mathsf{if}\; t=h\cdot i+[j+k]_h;\\
& 0 \quad \mathsf{otherwise};\\
\end{aligned}
\right.
\end{equation}
\begin{equation}
P_{h\cdot i+j, t}^{k}=\left\{
\begin{aligned}
&1 \quad \mathsf{if}\; t=h\cdot[i+k]_h+j;\\
& 0 \quad \mathsf{otherwise};\\
\end{aligned}
\right.
\end{equation}
\end{small}
where $0\leq i,j<h$ and $0\leq t<h^2$. Reviewing Eq.(\ref{eq1}), we use the diagonal decomposition of matrix representation to perform multiplication with  encrypted vectors. Hence, we can count the number of nonzero diagonal vectors in $U^{\mu}$, $U^{\zeta}$, $V$, and $P$ to evaluate the complexity.  For simplicity, we  use $\mathbf{u}_t$ to represent the $t$-th diagonal vector of a matrix $U$, and  identify $\mathbf{u}_{h^2-t}$ with $\mathbf{u}_{-t}$. For matrix $U^{\mu}$, we can  observe that it has exactly $(2h-1)$ nonzero diagonal vectors, denoted by $\mathbf{u}_k^{\mu}$ for $k \in \mathbb{Z}\cap (-h, h)$. There are $h$ nonzero diagonal vectors in $U^{\zeta}$, because each $t$-th diagonal vector in $U^{\zeta}$ is nonzero if and only if $t$ is divisible by the integer $h$. For each matrix $V^k$, $1\leq k<h$, it has only two nonzero diagonal vectors $\mathbf{v}_k$ and $\mathbf{v}_{k-h}$. Similarly,  for each matrix $P^{k}$, it has only one nonzero diagonal vector $\mathbf{p}_{h\cdot k}$. Therefore, we only need  rotation operations of $O(h)$ complexity to perform permutation ${\mu}$ and ${\zeta}$, and $O(1)$ complexity for both  $\phi^{k}$ and $\pi^{k}$ where $1\leq k<h$.
\begin{table*}[!]
\centering
\normalsize
\begin{tabular}{|p{18.2cm}|}
\hline \\
\textbf{Setup:} Given two ciphertexts  $[\mathbf{A}]_{pk}$ and $[\mathbf{B}]_{pk}$  that are the encryption forms of  two  $(3\times 3)$-dimensional matrix matrices $A$ and $B$ (shown below), respectively,  we now describe how to efficiently evaluate their homomorphic matrix multiplication.
\begin{small}
$\centering {\begin{matrix}
A=\begin{bmatrix}
a_0 & a_1&a_2 \\
a_3 &a_4 & a_5 \\
a_6 &a_7 &a_8
\end{bmatrix};&   B =\begin{bmatrix}
b_0 & b_1&b_2 \\
b_3 &b_4 & b_5 \\
b_6 &b_7 &b_8
\end{bmatrix}
\end{matrix}}$,
\end{small}
where the vector representations of $A$ and $B$ are $\mathbf{a}=[a_0, a_1, a_2, a_3, a_4, a_5, a_6, a_7, a_8]$ and  $\mathbf{b}=[b_0, b_1, b_2, b_3, b_4, b_5, b_6, b_7, b_8]$, respectively.\\
 \textbf{Step 1-1}: From $A$ and $B$, we first compute $U^{\mu}$, $U^{\zeta}$, $V=\{V^1, V^2\}$ and $P=\{P^1, P^2\}$ based on Eqn.(5)-(8) as follows. \\
$$\centering \begin{small} {\begin{matrix}
U^{\mu}=\begin{bmatrix}
1 & 0 &0& 0& 0&0&0&0&0 \\
0 & 1 &0& 0& 0&0&0&0&0 \\
0 & 0 &1& 0& 0&0&0&0&0 \\
0 & 0 &0& 0& 1&0&0&0&0 \\
0 & 0 &0& 0& 0&1&0&0&0 \\
0 & 0 &0& 1& 0&0&0&0&0 \\
0 & 0 &0& 0& 0&0&0&0&1 \\
0 & 0 &0& 0& 0&0&1&0&0 \\
0 & 0 &0& 0& 0&0&0&1&0
\end{bmatrix}; U^{\zeta}=\begin{bmatrix}
1 & 0 &0& 0& 0&0&0&0&0 \\
0 & 0 &0& 0& 1&0&0&0&0 \\
0 & 0 &0& 0& 0&0&0&0&1 \\
0 & 0 &0& 1& 0&0&0&0&0 \\
0 & 0 &0& 0& 0&0&0&1&0 \\
0 & 0 &1& 0& 0&0&0&0&0 \\
0 & 0 &0& 0& 0&0&1&0&0 \\
0 & 1 &0& 0& 0&0&0&0&0 \\
0 & 0 &0& 0& 0&1&0&0&0 \\
\end{bmatrix} ;
V^1=\begin{bmatrix}
0 & 1 &0& 0& 0&0&0&0&0 \\
0 & 0 &1& 0& 0&0&0&0&0 \\
1 & 0 &0& 0& 0&0&0&0&0 \\
0 & 0 &0& 0& 1&0&0&0&0 \\
0 & 0 &0& 0& 0&1&0&0&0 \\
0 & 0 &0& 1& 0&0&0&0&0 \\
0 & 0 &0& 0& 0&0&0&1&0 \\
0 & 0 &0& 0& 0&0&0&0&1 \\
0 & 0 &0& 0& 0&0&1&0&0
\end{bmatrix}
\end{matrix}}
\end{small}$$
 $$\centering \small {\begin{matrix}V^2=\begin{bmatrix}
0 & 0 &1& 0& 0&0&0&0&0 \\
1 & 0 &0& 0& 0&0&0&0&0 \\
0 & 1 &0& 0& 0&0&0&0&0 \\
0 & 0 &0& 0& 0&1&0&0&0 \\
0 & 0 &0& 1& 0&0&0&0&0 \\
0 & 0 &0& 0& 1&0&0&0&0 \\
0 & 0 &0& 0& 0&0&0&0&1 \\
0 & 0 &0& 0& 0&0&1&0&0 \\
0 & 0 &0& 0& 0&0&0&1&0
\end{bmatrix};
P^1=\begin{bmatrix}
0 & 0 &0& 1& 0&0&0&0&0 \\
0 & 0 &0& 0& 1&0&0&0&0 \\
0 & 0 &0& 0& 0&1&0&0&0 \\
0 & 0 &0& 0& 0&0&1&0&0 \\
0 & 0 &0& 0& 0&0&0&1&0 \\
0 & 0 &0& 0& 0&0&0&0&1 \\
1 & 0 &0& 0& 0&0&0&0&0 \\
0 & 1 &0& 0& 0&0&0&0&0 \\
0 & 0 &1& 0& 0&0&0&0&0
\end{bmatrix};
 P^2=\begin{bmatrix}
0 & 0 &0& 0& 0&0&1&0&0 \\
0 & 0 &0& 0& 0&0&0&1&0 \\
0 & 0 &0& 0& 0&0&0&0&1 \\
1 & 0 &0& 0& 0&0&0&0&0 \\
0 & 1 &0& 0& 0&0&0&0&0 \\
0 & 0 &1& 0& 0&0&0&0&0 \\
0 & 0 &0& 1& 0&0&0&0&0 \\
0 & 0 &0& 0& 1&0&0&0&0 \\
0 & 0 &0& 0& 0&1&0&0&0
\end{bmatrix}
\end{matrix}}$$
 We  securely compute  $U^{\mu}\cdot \mathbf{a}$. Based on Eqn.(9),  we have
$\centering {\begin{matrix}
U^{\mu}\cdot \mathbf{a}=[a_0, a_1, a_2, a_4, a_5, a_3, a_8, a_6, a_7]\overset{\iota(\mathbf{a})}{=}\begin{bmatrix}
a_0 & a_1&a_2 \\
a_4 &a_5 & a_3 \\
a_8 &a_6 &a_7
\end{bmatrix}
\end{matrix}}$,
where  $U^{\mu}$ has exactly $(2\times 3-1)=5$ nonzero diagonal vectors (based on Eqn.(10) and (11)) , denoted by $\mathbf{u}_k^{\mu}$ for $k \in \mathbb{Z}\cap (-3, 3)$. Specifically,  $\mathbf{u}_{-2}^{\mu}=[0, 0, 0, 0, 0, 1, 0, 0, 0]$, $\mathbf{u}_{-1}^{\mu}=[0, 0, 0, 0, 0, 0, 0, 1, 1]$, $\mathbf{u}_{0}^{\mu}=[1, 1, 1, 0, 0, 0, 0, 0, 0]$, $\mathbf{u}_{1}^{\mu}=[0, 0, 0, 1, 1, 0, 0, 0, 0]$, and $\mathbf{u}_{2}^{\mu}=[0, 0, 0, 0, 0, 0, 1, 0, 0]$. Then, we can get the ciphertext of $U^{\mu}\cdot \mathbf{a}$, denoted by $[\mathbf{A}^{(0)}]_{pk}$, based on Eqn.(12).\\
\textbf{Step 1-2}: We securely compute  $U^{\zeta}\cdot \mathbf{b}$. Based on Eqn.(13), we have
$\centering {\begin{matrix}
U^{\zeta}\cdot \mathbf{b}=[b_0, b_4, b_8, b_3, b_7, b_2, b_6, b_1, b_5]\overset{\iota(\mathbf{b})}{=}\begin{bmatrix}
b_0 & b_4&b_8 \\
b_3 &b_7 & b_2 \\
b_6 &b_1 &b_5
\end{bmatrix}
\end{matrix}}$,
where  $U^{\zeta}$ has exactly $h=3$ nonzero diagonal vectors, denoted by $\mathbf{u}_{3\cdot k}^{\zeta}$, for $0\leq k<3$. Specifically,  $\mathbf{u}_{0}^{\zeta}=[1, 0, 0, 1, 0, 0, 1, 0, 0]$, $\mathbf{u}_{3}^{\zeta}=[0, 1, 0, 0, 1, 0, 0, 1, 0]$, $\mathbf{u}_{6}^{\zeta}=[0, 0, 1, 0, 0, 1, 0, 0, 1]$. Then, we can get the ciphertext of $U^{\zeta}\cdot \mathbf{b}$, denoted by $[\mathbf{B}^{(0)}]_{pk}$, based on Eqn.(14).\\
\textbf{Step 2}: This step is used to securely perform column and row shifting operations on $\mu(A)$ and $\zeta(B)$ respectively. Specifically, for each column shifting matrix $V^k$, $1\leq k<3$, it has two nonzero diagonal vectors $\mathbf{v}_k$ and $\mathbf{v}_{k-h}$ (based on Eqn.(15) and (16)). Hence, the nonzero diagonal vectors in $V^1$ are $\mathbf{v}_{1}=[1, 1, 0, 1, 1, 0, 1, 1, 0]$ and $\mathbf{v}_{-2}=[0, 0, 1, 0, 0, 1, 0, 0, 1]$, and  the nonzero diagonal vectors in $V^2$ are $\mathbf{v}_{2}=[1, 0, 0, 1, 0, 0, 1, 0, 0]$ and  $\mathbf{v}_{-1}=[0, 1, 1, 0, 1, 1, 0, 1, 1]$. Similarly, for each row shifting matrix $P^{k}$, it has only one nonzero diagonal vector $\mathbf{p}_{3\cdot k}$. Then the nonzero diagonal vector in $P^1$ is $\mathbf{p}_{3}=[1, 1, 1, 1, 1, 1, 1, 1, 1]$ and  the nonzero diagonal vector in $P^2$ are $\mathbf{p}_{6}=[1, 1, 1, 1, 1, 1, 1, 1, 1]$.  Based on this,
we can obtain the ciphertexts $[\mathbf{A}^{(1)}]_{pk}$, $[\mathbf{A}^{(2)}]_{pk}$,  $[\mathbf{B}^{(1)}]_{pk}$, and $[\mathbf{B}^{(2)}]_{pk}$  of the matrix $\phi^{1}\circ \mu(A)$,  $\phi^{2}\circ \mu(A)$, $\pi^{1}\circ \zeta (B)$, and $\pi^{2}\circ \zeta (B)$, respectively, where
\begin{small}
$$\centering {\begin{matrix}
\phi^{1}\circ \mu(A)=\begin{bmatrix}
a_1 & a_2&a_0 \\
a_5 &a_3 & a_4 \\
a_6 &a_7 &a_8
\end{bmatrix};  \phi^{2}\circ \mu(A) =\begin{bmatrix}
a_2 & a_0&a_1 \\
a_3 &a_4 & a_5 \\
a_7 &a_8 &a_6
\end{bmatrix}; \pi^{1}\circ \zeta (B)=\begin{bmatrix}
b_3 & b_7&b_2 \\
b_6 &b_1 & b_5 \\
b_0 &b_4 &b_8
\end{bmatrix}; \pi^{2}\circ \zeta (B)=\begin{bmatrix}
b_6 & b_1&b_5 \\
b_0 &b_4 & b_8 \\
b_3 &b_7 &b_2
\end{bmatrix}
\end{matrix}}$$
\end{small}\\
\textbf{Step 3}: For $0\leq k<3$, we compute the element-wise multiplication between $[\mathbf{A}^{(k)}]_{pk}$ and $[\mathbf{B}^{(k)}]_{pk}$.  Then,  $[\mathbf{AB}]_{pk}$  is obtained as below.\\
\begin{footnotesize}
$$\centering {\begin{matrix}
\begin{bmatrix}
a_0 & a_1&a_2 \\
a_3 &a_4 & a_5 \\
a_6 &a_7 &a_8
\end{bmatrix}\cdot \begin{bmatrix}
b_0 & b_1&b_2 \\
b_3 &b_4 & b_5 \\
b_6 &b_7 &b_8
\end{bmatrix}= \begin{bmatrix}
a_0 & a_1&a_2 \\
a_4 &a_5 & a_3 \\
a_8 &a_6 &a_7
\end{bmatrix}\odot \begin{bmatrix}
b_0 & b_4&b_8 \\
b_3 &b_7 & b_2 \\
b_6 &b_1 &b_5
\end{bmatrix}+\begin{bmatrix}
a_1 & a_2&a_0 \\
a_5 &a_3 & a_4 \\
a_6 &a_7 &a_8
\end{bmatrix}\odot \begin{bmatrix}
b_3 & b_7&b_2 \\
b_6 &b_1 & b_5 \\
b_0 &b_4 &b_8
\end{bmatrix}+\begin{bmatrix}
a_2 & a_0&a_1 \\
a_3 &a_4 & a_5 \\
a_7 &a_8 &a_6
\end{bmatrix}\odot \begin{bmatrix}
b_6 & b_1&b_5 \\
b_0 &b_4 & b_8 \\
b_3 &b_7 &b_2
\end{bmatrix}
\end{matrix}}$$
\end{footnotesize}\\
\hline
\end{tabular}
\vspace{5pt}
\caption{{{Homomorphic multiplication of two $3\times 3$-dimensional matrices }}}
\label{Examples}
\end{table*}
\renewcommand\tablename{TABLE}
\renewcommand \thetable{\Roman{table}}
\setcounter{table}{0}
\setcounter{figure}{2}

\subsection{Homomorphic Matrix Multiplication}
\label{Homomorphic matrix multiplication}
Given two ciphertexts  $[\mathbf{A}]_{pk}$ and $[\mathbf{B}]_{pk}$  that are the encryption forms of  two  $(h\times h)$-dimensional matrix matrices $A$ and $B$, respectively,  we now describe how to efficiently evaluate homomorphic matrix multiplication between them.

\textbf{Step 1-1}: We  perform a linear transformation on the ciphertext $[\mathbf{A}]_{pk}$  under the guidance of permutation $U^{\mu}$ (\textbf{Step 1-1} in Figure~\ref{Examples}).  As described above, $U^{\mu}$ has exactly $(2h-1)$ nonzero diagonal vectors, denoted by $\mathbf{u}_k^{\mu}$ for $k \in \mathbb{Z}\cap (-h, h)$.  Then such a linear transformation can be denoted as
\begin{small}
 \begin{equation}
 \label{eq5}
\begin{split}
U^{\mu}\cdot \mathbf{a}=\sum_{-h<k<h}(\mathbf{u}_k^{\mu}\odot\mathbf{R}(\mathbf{a},k)),
\end{split}
\end{equation}
\end{small}
where $\mathbf{a}=\iota^{-1}(A)\in R^{n}$ is the vector representation of $A$.  If $k\geq 0$, the $k$-th diagonal vector can be computed as
\begin{small}
\begin{equation}
 \mathbf{u}_k^{\mu}[t]=\left\{
\begin{aligned}
&1 \quad \mathsf{if}\; 0\leq t-h\cdot k< (h-k);\\
& 0 \quad \mathsf{otherwise},\\
\end{aligned}
\right.
\end{equation}
\end{small}
where $\mathbf{u}_k^{\mu}[t]$ represents the $t$-th component of  $\mathbf{u}_k^{\mu}$. Similarly, if
\begin{small}
$k< 0$, $\mathbf{u}_k^{\mu}$ is computed as
\begin{equation}
\mathbf{u}_k^{\mu}[t]=\left\{
\begin{aligned}
&1 \quad \mathsf{if}\; -k\leq t-(h+k)\cdot h <h;\\
& 0 \quad \mathsf{otherwise},\\
\end{aligned}
\right.
\end{equation}
\end{small}
As a result, Eq.(\ref{eq5}) can be securely computed as
\begin{small}
 \begin{equation}
\begin{split}
\sum_{-h<k<h}\mathbf{Mul}_{pt}(\mathbf{Rot}([\mathbf{A}]_{pk}, k), \mathbf{u}_k^{\mu}),
\end{split}
\end{equation}
\end{small}
where we get the ciphertext of $U^{\mu}\cdot \mathbf{a}$, denoted as $[\mathbf{A}^{(0)}]_{pk}$.  We observe that the computation cost is about $2h$ rotations, constant multiplications and additions.

\textbf{Step 1-2}: This step is to   perform the linear transformation on the ciphertext $[\mathbf{B}]_{pk}$  under the guidance of permutation $U^{\zeta}$ (\textbf{Step 1-2} in Figure~\ref{Examples}).  Since $U^{\zeta}$ has  $h$ nonzero diagonal vectors,   this process  can be denoted as
\begin{small}
 \begin{equation}
\label{eq7}
\begin{split}
U^{\zeta}\cdot \mathbf{b}=\sum_{0\leq k<h}(\mathbf{u}_{h\cdot k}^{\zeta}\odot\mathbf{R}(\mathbf{b},h\cdot k)),
\end{split}
\end{equation}
\end{small}
where $\mathbf{b}=\iota^{-1}(B)\in R^{n}$, $\mathbf{u}_{h\cdot k}^{\zeta}$ is the $(h\cdot k)$-th diagonal vector of the matrix $U^{\zeta}$. We observe that for any $0\leq k<h$, $\mathbf{u}_{h\cdot k}^{\zeta}$  is a non-zero vector because its $(k+h\cdot i)$-th element is non-zero for $0\leq i<h$, and zero for all other entries. Therefore, Eq.(\ref{eq7}) can be securely computed as
\begin{small}
 \begin{equation}
\begin{split}
\sum_{0\leq k<h}\mathbf{Mul}_{pt}(\mathbf{Rot}([\mathbf{B}]_{pk}, h\cdot k), \mathbf{u}_{h\cdot k}^{\zeta}),
\end{split}
\end{equation}
\end{small}
where we get the ciphertext of $U^{\zeta}\cdot \mathbf{b}$, denoted as $[\mathbf{B}^{(0)}]_{pk}$.  We observe that the computation cost is about $h$ rotations, constant multiplications and additions.

\textbf{Step 2}: This step is used to securely perform column and row shifting operations on $\mu(A)$ and $\zeta(B)$ respectively (\textbf{Step 2} in Figure~\ref{Examples}). Specifically, for each column shifting matrix $V^k$, $1\leq k<h$, it has only two nonzero diagonal vectors $\mathbf{v}_k$ and $\mathbf{v}_{k-h}$, which are computed as
\begin{small}
\begin{equation}
 \mathbf{v}_{k}[t]=\left\{
\begin{aligned}
&1 \quad \mathsf{if}\; 0\leq [t]_h < (h-k);\\
& 0 \quad \mathsf{otherwise},\\
\end{aligned}
\right.
\end{equation}

\begin{equation}
\mathbf{v}_{k-h}[t]=\left\{
\begin{aligned}
&1 \quad \mathsf{if}\; (h-k)\leq [t]_h<h;\\
& 0 \quad \mathsf{otherwise}.\\
\end{aligned}
\right.
\end{equation}
\end{small}
 By adding two ciphertexts $\mathbf{Mul}_{pt}(\mathbf{Rot}([\mathbf{A}^{(0)}]_{pk},  k), \mathbf{v}_{k})$ and  $\mathbf{Mul}_{pt}(\mathbf{Rot}([\mathbf{A}^{(0)}]_{pk},  k-h), \mathbf{v}_{k-h})$, we can obtain the ciphertext $[\mathbf{A}^{(k)}]_{pk}$ of the matrix $\phi^{k}\circ \mu(A)$.  Similarly, for each row shifting matrix $P^{k}$, it has only one nonzero diagonal vector $\mathbf{p}_{h\cdot k}$.  Then the encryption of $\pi^{k}\circ \zeta (B)$ can be computed as  $[\mathbf{B}^{(k)}]_{pk}\leftarrow \mathbf{Rot}([\mathbf{B}^{(0)}]_{pk},  h\cdot k)$.  The computation cost of this process is about $3h$ rotations, $2h$ constant multiplications and $d$ additions.

\textbf{Step 3}: For $0\leq k<h$, we now compute the element-wise multiplication of $[\mathbf{A}^{(k)}]_{pk}$ and $[\mathbf{B}^{(k)}]_{pk}$ (\textbf{Step 3} in Figure~\ref{Examples}).  Then,  the ciphertext  $[\mathbf{AB}]_{pk}$ of the product of $A$ and $B$ is finally obtained. The computation cost of this process is $h$ homomorphic multiplications and additions. In summary, the entire process of performing homomorphic matrix multiplication is described in \textbf{Algorithm} \ref{algorithm 3}.

\begin{algorithm}
\caption{ Homomorphic matrix  multiplication }
\label{algorithm 3}
\begin{algorithmic}[1]
\small
\REQUIRE \texttt{HE-MatMult} $([\mathbf{A}]_{pk}, [\mathbf{B}]_{pk})$
\STATE $[\mathbf{A}^{(0)}]_{pk}\leftarrow $ \texttt{HE-LinTrans} $([\mathbf{A}]_{pk}, U^{\mu})$
\STATE  $[\mathbf{B}^{(0)}]_{pk}\leftarrow $ \texttt{HE-LinTrans} $([\mathbf{B}]_{pk}, U^{\zeta})$
\FOR { $k=1$ to $h-1$}
\STATE    $[\mathbf{A}^{(k)}]_{pk}\leftarrow $ \texttt{HE-LinTrans} $([\mathbf{A}^{(0)}]_{pk}, V^{k})$
\STATE   $[\mathbf{B}^{(k)}]_{pk}\leftarrow $ \texttt{HE-LinTrans} $([\mathbf{B}^{(0)}]_{pk}, P^{k})$
\ENDFOR
\STATE   $[\mathbf{AB}]_{pk} \leftarrow \mathbf{Mul}_{ct}([\mathbf{A}^{(0)}]_{pk}, [\mathbf{B}^{(0)}]_{pk})$
\FOR {$k=1$ to $h-1$}
\STATE    $[\mathbf{AB}]_{pk}\leftarrow \mathbf{Add} ([\mathbf{AB}]_{pk}, \mathbf{Mul}_{ct}([\mathbf{A}^{(k)}]_{pk}, [\mathbf{B}^{(k)}]_{pk}))$
\ENDFOR
\STATE \textbf{return} $[\mathbf{AB}]_{pk}$
\end{algorithmic}
\end{algorithm}

\textit{Remark 3.1}: In general, the above homomorphic matrix multiplication requires a total of $5h$ additions, $5h$ constant multiplications and $6h$ rotations. We can further reduce the computation complexity by using the baby-step/giant-step algorithm \cite{coron2005new,jiang2018secure} (See Appendix~\ref{AP:the baby-step/giant-step algorithm} for technical details). This algorithm can be exploited to reduce the complexity of Steps 1-1 and 1-2. As a result,  {Table}~\ref{Complexity of Algorithm 3} summarizes the computation complexity required for each step in \textbf{Algorithm}~\ref{algorithm 3}.

\begin{table}[H]
\centering
\small
\caption{Complexity of algorithm 3}
\label{Complexity of Algorithm 3}
\begin{tabular}{c|c|c|c|c}
\Xhline{1pt}
\textbf{Step}&$Add$& $mul_{pt}$ &$Rot$ & $mul_{ct}$\\
\Xhline{1pt}
\textbf{1-1}&$2h$&$2h$&$3\sqrt{h}$&-\\
\textbf{1-2}&$h$&$h$&$2\sqrt{h}$&-\\
\textbf{2}&$2h$&$h$&$3h$&-\\
\textbf{3}&$h$&-&$-$&$h$\\
\Xhline{1pt}
\textbf{Total}&$6h$&$4h$&$3h+5\sqrt{h}$&$h$\\
\Xhline{1pt}
\end{tabular}
\vspace{-5pt}
\end{table}
\textit{{Remark 3.2}}: As described before, the  multiplication of $A$ and $B$ is parsed as
$A\cdot B=\sum_{k=0}^{h-1}(\phi^{k}\circ \mu(A) )\odot(\pi^{k}\circ \zeta (B))$. A simple way to calculate the product is to directly use \textbf{Algorithm}~\ref{algorithm 2}: we can evaluate $A\mapsto \phi^{k}\circ \mu(A)$ and $B\mapsto \pi^{k}\circ \zeta (B)$ for $1\leq k <h$. However, each of them requires $O(h^2)$ homomorphic rotation operations, which results in a total complexity of $O(h^3)$ \cite{gilad2016cryptonets}. Halevi \textit{et al.} \cite{halevi2015bootstrapping} introduce a matrix encoding method based on  diagonal decomposition. This method maps each diagonal vector into a separate ciphertext by arranging the matrix diagonally. As a result, it requires $h$ ciphertexts to represent a matrix, and each ciphertext is required to perform matrix-vector multiplication with the complexity of $O(h)$ rotations, resulting in a total computation  complexity of $O(h^2)$. Compared with these schemes, our strategy only needs a total computation complexity of $O(h)$ rotations to complete the homomorphic multiplication for two $(h\times h)$-dimensional matrices. We note that  POSEIDON \cite{sav2020poseidon} also proposes an ``alternating packing (AP) approach''  to achieve matrix  multiplication with a complexity approximated as $\max_{i\in [\mathbb{L}]}(\omega_i\times\log(h\times \omega_i))$, where  $\omega_i$ denotes the number of weights between layers $i$ and $i+1$.  However, the implementation of this method requires to generate a large number of copies of each element in the matrix (depending on the number of neurons in the neural network layer where the matrix is located), resulting in poor parallel computing performance (see Section~\ref{sec:PERFORMANCE EVALUATION} for more experimental comparison).

\textit{{Remark 3.3}}: We also give the methods of how to perform matrix transposition, rectangular matrix multiplication (i.e., calculating general matrix forms such as $R^{t\times h}\times R^{h\times h}\rightarrow R^{t\times h}$ or $R^{h\times h}\times R^{h\times t}\rightarrow R^{h\times t}$) and parallel matrix operations (using the idleness of the plaintext slots) under packed ciphertext. They follow the similar idea of the above homomorphic matrix multiplication. Readers can refer to Appendix \ref{APP:Matrix Transposition on Packed Ciphertexts}, \ref{APP:Rectangular Matrix Multiplication on Packed Ciphertexts} and \ref{APP:Parallel  Matrix Computation} for more technical details.

\section{Approximation for Sign Function}
\label{Approximation for Comparison Function}
In this section, we describe how to efficiently estimate the sign function, and then use the estimated function to approximate the formulas commonly used in neural network training, including \texttt{ReLU} and \texttt{max} functions.
\subsection{Notations}
 We first introduce some useful symbols. Specifically, all logarithms are base $2$ unless otherwise stated. $\mathbb{Z}$ and $\mathbb{R}$ represent the integer and  real number fields, respectively. For a finite set $M$, we use $U(M)$ to represent the  uniform distribution on $M$. Given a function $g$ defined in the real number field $\mathbb{R}$, and a compact set $I \subset \mathbb{R}$, we say that the infinity norm of $g$ on the set $I$ is defined as $||g||_{\infty, I}:=\max_{m\in I}|g(m)|$, where $|g(m)|$ means the absolute value of $g(m)$.   we use $g^{(k)}:=g\circ g\circ g\circ \cdots \circ g$  to indicate the $k$-times composition of $g$. Besides, the sign function is defined as below.
$$ sgn (m)=\left\{
\begin{aligned}
&1 \quad \mathsf{if}\; m>0;\\
& 0 \quad \mathsf{if}\; m=0;\\
-&1  \quad \mathsf{if}\; m<0.\\
\end{aligned}
\right.
$$
Note that $sgn(m)$ is a discontinuous function at the zero point, so the closeness of $g(m)$ and $sgn(m)$ should be carefully considered in the interval near the zero point. That is, we do not consider the small interval $(-\delta, \delta)$ near the zero point when measuring the difference between $g(m)$ and $sgn(m)$. We will prove that  for some $k_d>0$,  the infinity norm of $g_{d}^{(k)}(m)-sgn(m)$ is small than $2^{-\sigma}$ over $[-1, -\delta]\cup[\delta, 1]$ if $k>k_d$, where the definition of  $g_{d}(m)$ will be explained later.

Given $\sigma>0$ and $0<\delta<1$, we define a function $g_{d}^{(k)}(m)$ that is $(\sigma, \delta)$-\textit{close} to $sgn(m)$ on $[-1, 1]$ if it satisfies
\begin{small}
 \begin{equation}
\begin{split}
||g_{d}^{(k)}(m)-sgn(m)||_{\infty, [-1, -\delta]\cup[\delta, 1]}\leq 2^{-\sigma}.
\end{split}
\end{equation}
\end{small}
Similar to the previous work \cite{cheon2019numerical}, we assume that the input  is limited to a bounded interval $[0, 1]$, since for any $m\in [a_1, a_2]$, where $a_2> a_1$,  we can scale it down to $[0, 1]$ by mapping $m\mapsto (m-a_1)/(a_2-a_1)$. Hence, for simplicity, the domain of $sgn(m)$ we consider in this part is $[-1, 1]$.

\subsection{Composite Polynomial Approximation}
\label{Composite Polynomial Approximate of Sign Function}
As mentioned before, we use a  composite function to approximate  the sign function. This is advantageous, because a composite polynomial function $G$, namely $G=g\circ g \cdots \circ g$, can be calculated with the complexity of $O(\log(deg (G)))$,  while the computation complexity of calculating any  polynomial $G$ is at least $\Theta(\sqrt{deg (G)})$ \cite{paterson1973number}, where $deg (G)$ indicates the degree of $G$. To achieve this, our goal is to find such a $k$ that $g^{(k)}$ is close enough to $sgn(x)$ in the interval $[-1, -\delta]\cup[\delta, 1]$.

Our construction of such a function $g$ comes from the following key observations: for any $m_0\in[-1, 1]$, let $m_i$ be the $i$-th composite value of $g^{(i)}(m_0)$. Then,   we can easily estimate the behavior of $m_i$ through the graph of $g$. Based on this, we ensure that as $i$ increases, $m_i$ should be close to 1 when $m_0\in (0, 1]$, and close to $-1$ when $m_0\in  [-1, 0)$.  Besides, we formally identify three properties of $g$ as follow. First, $g$ should be an odd function so as to be consistent with the sign function. Second, $g(1)=1$ and $g(-1)=-1$. This setting makes $g^{(k)}(m)$ point-wise converge to $sgn(m)$, whose value is $\pm1$ for all $x\neq 0$. In other words, for some $m\in [-1, 1]$, $g^{(k)}(m)$ converges to a value $y$ when increasing with the value of $k$, which means $g(y)=g(\lim_{k\rightarrow \infty}g^{(k)}(m))=\lim_{k\rightarrow \infty}g^{(k)}(m)=y$. Last, to accelerate the convergence of $g^{(k)}$ to the sign function, a satisfactory $g$ should be more concave in the interval $[0, 1]$ and more convex in the interval $[-1, 0]$. Moreover, the  derivative $g'$ of $g$ should have multiple roots at $1$ and $-1$ so as to increase the convexity. These properties are summarized as follows:\\
\textbf{Core Properties of $g$}: \\
Prop. I \;$g(-m)=-g(m)$ \quad  \;\;\;\;\; \;\; \quad \quad \; (Origin Symmetry)\\
Prop. II  \;$g(1)=1, g(-1)=-1$ \quad \quad \; (Convergence to $\pm 1$)\\
Prop. III \;$g'(m)=p (1-m)^{d}(1+m)^{d}$ \quad\; for some $p>0$ \\
  \rightline{(Fast convergence)}

Given a fixed $d\geq 1$,  a polynomial $g$ of degree $(2d+1)$ that satisfies the above three properties can be uniquely determined. We denote this polynomial as $g_d$, where the constant $p$ is indicated as $p_d$. Then, based on   Prop. I  and III, we have $g_d(m)=p_d\int_{0}^{m}(1-t^2)^{d}dt $, where the constant $p_d$ is also determined by Prop. II. To solve this integral formula $g_d(m)$, a common method is to transform the $(1-t^2)$ part of the integral formula with Trigonometric Substitutions, a typical technique which can convert formula $\int(1-t^2)^{d}dt$ to $\int(\cos t)^{3d}dt$. As a result, given   the following identity
 \begin{small}
 \begin{equation*}
\begin{split}
\int_{0}^{m}\cos^{n}t dt=\frac{1}{n}\cos^{n-1}m\cdot \sin m +\frac{n-1}{n}\int_{0}^{m}\cos^{n-2}t dt.
\end{split}
\end{equation*}
\end{small}
which holds for any  $n\geq 1$,  we have
 \begin{small}
 \begin{equation*}
g_{d}(m)=\sum_{i=0}^{i=d}\frac{1}{4^i}\cdot \begin{pmatrix}  2i \\  i \end{pmatrix}\cdot m(1-m^2)^{i}.
\end{equation*}
\end{small}

Therefore, we can compute $g_n$ as follows
\begin{itemize}
\item $g_1(m)=-\frac{1}{2}m^3+\frac{3}{2}m$.
\item $g_2(m)=\frac{3}{8}m^5-\frac{10}{8}m^3+\frac{15}{8}m$.
\item $g_3(m)=-\frac{5}{16}m^7+\frac{21}{16}m^5-\frac{35}{16}m^3+\frac{35}{16}m$.
\item $g_4(m)=\frac{35}{128}m^9-\frac{180}{128}m^7+\frac{378}{128}m^5-\frac{420}{128}m^3+\frac{315}{128}m$.
\end{itemize}
Since $\begin{pmatrix}  2i \\  i \end{pmatrix}= 2 \cdot \begin{pmatrix}  2i-1 \\  i-1 \end{pmatrix}$ is divisible by 2 for $i\geq 1$, each coefficient of $g_d$ can be represented as $n/2^{2d-1}$ for $n\in \mathbb{Z}$, which can be inferred by simply using Binomial Theorem  for the coefficients in $g_d(m)$.\\
\textbf{Size of  constant $p_d$}:  The constant $p_d$ is crucial for $g_d^{(k)}$ to converge to the sign function. Informally, since the coefficient term of $m$  is exactly $p_d$, we can regard $g_d(m)$ as $g_d(m)\simeq p_d\cdot m$ for small $m$. Further we have $1-g_d(m)\simeq 1-p_d\cdot m\simeq (1-m)^{p_d}$. For simplicity, we can obtain $p_d$  as follows:
 \begin{small}
 \begin{equation*}
\sum_{i=0}^{i=d}\frac{1}{4^i}\cdot \begin{pmatrix}  2i \\  i\end{pmatrix},
\end{equation*}
\end{small}
which can be simplified with Lemma~\ref{lemma2}.
\begin{lemma}
\label{lemma2}
It holds that $p_d=\sum_{i=0}^{i=d}\frac{1}{4^i}\cdot \begin{pmatrix}  2i \\  i \end{pmatrix}=\frac{2d+1}{4^d}\begin{pmatrix}  2d \\  d \end{pmatrix}$.
\end{lemma}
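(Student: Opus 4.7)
The plan is to establish the identity $p_d = \sum_{i=0}^{d}\frac{1}{4^i}\binom{2i}{i} = \frac{2d+1}{4^d}\binom{2d}{d}$ by induction on $d$. The base case $d=0$ is immediate, with both sides equal to $1$.

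For the inductive step, I would assume the formula at $d-1$ and use the recursion $p_d = p_{d-1} + \tfrac{1}{4^d}\binom{2d}{d}$ to reduce the claim to verifying
\[
\frac{2d-1}{4^{d-1}}\binom{2d-2}{d-1} + \frac{1}{4^d}\binom{2d}{d} = \frac{2d+1}{4^d}\binom{2d}{d}.
\]
After bringing everything to the common denominator $4^d$ and cancelling the $\binom{2d}{d}$ on the right-hand side, this reduces to the purely algebraic identity $\tfrac{2d}{4^d}\binom{2d}{d} = \tfrac{2d-1}{4^{d-1}}\binom{2d-2}{d-1}$, which follows directly from the elementary binomial ratio $\binom{2d}{d} = \tfrac{2(2d-1)}{d}\binom{2d-2}{d-1}$ combined with the shift $4^{d-1}\to 4^d$.

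As an alternative route, one could recognise $p_d$ as the $d$-th partial sum of the binomial series $(1-x)^{-1/2} = \sum_{i\geq 0}\binom{2i}{i}\tfrac{x^i}{4^i}$ and extract the closed form either by a telescoping argument on the central-binomial coefficients or by writing $\binom{2i}{i}/4^i = |\binom{-1/2}{i}|$ and invoking a standard partial-sum formula for the generalised binomial series. However, the induction above is shorter and entirely self-contained, so I would adopt it as the main proof.

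I do not anticipate any genuine obstacle: the statement is a classical finite-sum identity and the proof amounts to careful bookkeeping. The only mild care point is tracking the factor-of-$4$ when shifting between $4^{d-1}$ and $4^d$ in the inductive step; clearing to the common denominator $4^d$ at the very start of the calculation eliminates any risk of a missing factor of $2$ or a sign slip.
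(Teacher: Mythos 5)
Your proof is correct and follows essentially the same route as the paper: induction on $d$ using the recursion $p_d = p_{d-1} + \tfrac{1}{4^d}\binom{2d}{d}$ (the paper writes the step as $d \to d+1$, but the algebra is identical), and your reduction to $\binom{2d}{d} = \tfrac{2(2d-1)}{d}\binom{2d-2}{d-1}$ checks out. No issues.
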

\begin{proof}
Please refer to Appendix~\ref{AP:pof LE3}.
\end{proof}
\subsection{Analysis on the Convergence of $g_d^{(k)}$}
We now analyze the convergence of $g_d^{(k)}$ to the sign function as $k$ increases. To be precise, we provide a lower bound on $k$, under which $g_d^{(k)}$ is $(\sigma, \delta)$-\textit{close} to the sign function. To accomplish this, we first give two lower bounds about $1-g_d(m)$ as shown below.
\begin{lemma}
\label{lemma3}
It holds that $0\leq 1-g_d(m)\leq (1-m)^{p_d}$ for $m\in [0, 1]$.
\end{lemma}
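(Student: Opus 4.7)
The plan is to handle the two inequalities separately. The lower bound $0 \le 1 - g_d(m)$ is immediate: since $g_d'(m) = p_d(1-m^2)^d \ge 0$ on $[0,1]$, $g_d$ is non-decreasing there, and property~II ($g_d(1)=1$) yields $g_d(m) \le 1$.

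For the upper bound $1 - g_d(m) \le (1-m)^{p_d}$, I would introduce the auxiliary function
\[
\Phi(m) := (1-m)^{p_d} - \bigl(1 - g_d(m)\bigr)
\]
and prove $\Phi \ge 0$ on $[0,1]$ by a shape argument. The endpoints are clear: $\Phi(0) = 1 - 1 = 0$ and $\Phi(1) = 0 - 0 = 0$. Differentiating and using $g_d'(m) = p_d(1-m^2)^d$ gives
\[
\Phi'(m) = p_d(1-m)^{p_d-1}\bigl[(1-m)^{d-p_d+1}(1+m)^d - 1\bigr],
\]
where the factorization relies on the arithmetic fact $p_d \le d+1$, so that the exponent $d-p_d+1$ is nonnegative.

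It then suffices to show that the bracketed factor $\psi(m) := (1-m)^{d-p_d+1}(1+m)^d - 1$ is $\ge 0$ on an initial subinterval $[0,m_\star]$ and $\le 0$ on $[m_\star,1]$. A direct computation yields
\[
\psi'(m) = (1-m)^{d-p_d}(1+m)^{d-1}\bigl[(p_d-1) - (2d-p_d+1)m\bigr],
\]
whose only sign change on $(0,1)$ occurs at $m_\star = (p_d-1)/(2d-p_d+1)$; hence $\psi$ is unimodal, and combined with $\psi(0)=0$ and $\psi(1)=-1$ it has the required sign pattern. Consequently $\Phi$ is increasing on $[0,m_\star]$ and decreasing on $[m_\star,1]$, so since $\Phi$ vanishes at both endpoints, $\Phi \ge 0$ on $[0,1]$, which is the desired bound.

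The main technical input is the arithmetic inequality $p_d \le d+1$. Using the closed form $p_d = (2d+1)\binom{2d}{d}/4^d$ from Lemma~\ref{lemma2}, this is equivalent to $\binom{2d}{d} \le 4^d(d+1)/(2d+1)$, which follows either from the standard estimate $\binom{2d}{d} \le 4^d/\sqrt{\pi d}$ (so $p_d = O(\sqrt{d})$) or by a short induction on $d$ using $\binom{2(d+1)}{d+1} = \frac{2(2d+1)}{d+1}\binom{2d}{d}$; both keep the argument self-contained.
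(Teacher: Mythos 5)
Your proof is correct and follows essentially the same route as the paper's: both work with the auxiliary function $(1-m)^{p_d}-(1-g_d(m))$, which vanishes at $0$ and $1$, and reduce the claim to showing its derivative changes sign exactly once on $(0,1)$ --- you do this by differentiating the factor $\psi$ directly, while the paper establishes uniqueness of the critical point via the monotonicity of $\log(1+m)/\log(1-m)$ and then concludes by a Rolle-type contradiction. One cosmetic point: your $m_\star$ is the critical point of $\psi$, not its sign-change point, so strictly $\psi\le 0$ only on $[m_{\star\star},1]$ for the later zero $m_{\star\star}$ of $\psi$; the unimodality argument you give still yields the intended increasing-then-decreasing shape of $\Phi$, and the strict inequality $p_d<d+1$ (which your binomial bound does deliver) is exactly what guarantees $\psi(1)=-1$.
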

\begin{lemma}
\label{lemma4}
It holds that $0\leq 1-g_d(m)\leq 2^d\cdot (1-m)^{d+1}$ for $m\in [0, 1]$, where the value of $m$ is close to 1.
\end{lemma}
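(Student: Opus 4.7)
The plan is to start from the integral representation of $g_d$ inherited from Prop.~III, namely $g_d'(t)=p_d(1-t^2)^d$, combined with Prop.~II which gives $g_d(1)=1$. This lets me write
\[
1-g_d(m) \;=\; g_d(1)-g_d(m) \;=\; p_d\!\int_m^{1}(1-t^2)^{d}\,dt \;=\; p_d\!\int_m^{1}(1-t)^d(1+t)^d\,dt,
\]
which is the fundamental identity I will manipulate. The non-negativity assertion $0\leq 1-g_d(m)$ is immediate from this identity, since the integrand is nonnegative on $[0,1]$; equivalently, $g_d$ is monotone non-decreasing on $[-1,1]$ because $g_d'\geq 0$.

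For the upper bound, I would exploit the fact that on $[m,1]\subset[0,1]$ the factor $(1+t)^d$ is bounded by $2^d$ (and this bound is nearly tight precisely when $m$ is close to $1$, explaining the regime mentioned in the statement). Pulling $2^d$ out of the integral leaves $\int_m^{1}(1-t)^d\,dt=(1-m)^{d+1}/(d+1)$, so
\[
1-g_d(m) \;\leq\; \frac{p_d\cdot 2^d}{d+1}\,(1-m)^{d+1}.
\]
To recover the clean bound $2^d(1-m)^{d+1}$ claimed in the lemma, what remains is to check $p_d\leq d+1$.

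The main (and really only) obstacle is establishing this inequality $p_d\leq d+1$. I would do this by induction on $d$, using the closed form from Lemma~\ref{lemma2} to derive the one-step recurrence
\[
\frac{p_d}{p_{d-1}} \;=\; \frac{(2d+1)}{4}\binom{2d}{d}\Big/\Big[(2d-1)\binom{2d-2}{d-1}\Big] \;=\; \frac{2d+1}{2d}.
\]
The base case is $p_1=\tfrac{3}{2}\leq 2$. For the inductive step, assuming $p_{d-1}\leq d$, multiplying by $(2d+1)/(2d)$ gives $p_d\leq (2d+1)/2\leq d+1$. Chaining this with the displayed inequality above completes the proof. No further estimates are needed; the only subtlety to flag in the write-up is that Prop.~II is what makes $g_d(1)=1$ available as the upper endpoint value, so the identity in the first display hinges on the constant $p_d$ being calibrated precisely as in Lemma~\ref{lemma2}.
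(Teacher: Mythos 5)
Your proof is correct and is essentially the paper's argument recast in integral form: the paper's auxiliary function $H(y)=\frac{p_d\cdot 2^d}{d+1}y^{d+1}-(1-g_d(1-y))$ with $H'(y)=p_d\cdot 2^d\cdot y^d - p_d\cdot y^d(2-y)^d\ge 0$ is exactly your pointwise bound $(1+t)^d\le 2^d$ applied under the integral sign, and both routes arrive at the same intermediate estimate $1-g_d(m)\le \frac{p_d\cdot 2^d}{d+1}(1-m)^{d+1}$. The one place you go beyond the paper is the final step: the paper passes from $\frac{p_d\cdot 2^d}{d+1}$ to $2^d$ without justifying $p_d\le d+1$, whereas you correctly flag this as the remaining obligation and discharge it cleanly via the recurrence $p_d/p_{d-1}=(2d+1)/(2d)$ and induction from $p_1=\tfrac{3}{2}$ --- a worthwhile addition that makes the argument fully rigorous.
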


\begin{proof}
Please refer to Appendix~\ref{AP:pof LE4} and~\ref{AP:pof LE5}.
\end{proof}

\begin{theorem}
\label{the3.5}
If $k\geq \frac{1}{\log p_d}\cdot \log (1/\delta)+\frac{1}{\log (d+1)}\cdot \log (\sigma-1)+O(1)$, then $g_d^{(k)}(m)$ is an $(\sigma, \delta)$-close polynomial to $sgn(x)$ over $[-1, 1]$.
\end{theorem}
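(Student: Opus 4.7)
The plan is to reduce to the positive side using Prop.~I (odd symmetry): since $g_d(-m)=-g_d(m)$ implies $g_d^{(k)}(-m)=-g_d^{(k)}(m)$, and $sgn(-m)=-sgn(m)$, it suffices to show $|g_d^{(k)}(m)-1|\le 2^{-\sigma}$ for all $m\in[\delta,1]$. Because $g_d'(m)=p_d(1-m^2)^d\ge 0$ with $g_d(0)=0$ and $g_d(1)=1$, the map sends $[0,1]$ monotonically into itself, so the iterates $m_k:=g_d^{(k)}(m)$ remain in $[0,1]$ and the sequence $a_k:=1-m_k\ge 0$ is well defined. The whole argument then tracks $a_k$ via a two-phase analysis that invokes Lemma~\ref{lemma3} and Lemma~\ref{lemma4} in turn, each one recovering exactly one summand of the stated threshold.

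Phase~1 (accounting for $\log(1/\delta)/\log p_d$). While $m_k$ is still far from $1$, Lemma~\ref{lemma3} gives the recursion $a_{k+1}\le a_k^{p_d}$, which iterates cleanly to $a_k\le a_0^{p_d^k}$. Using $a_0\le 1-\delta\le e^{-\delta}$ this becomes $a_k\le\exp(-\delta\,p_d^k)$, so taking $k_1=\lceil \log(1/\delta)/\log p_d\rceil + O(1)$ with the additive constant chosen so that $\delta\,p_d^{k_1}\ge \log 4$ forces $a_{k_1}\le 1/4$. This exhausts the slow-approach regime.

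Phase~2 (accounting for $\log(\sigma-1)/\log(d+1)$). With $a_{k_1}\le 1/4$ we have $m_{k_1}\ge 3/4$, so Lemma~\ref{lemma4} applies and yields $a_{k+1}\le 2^d a_k^{d+1}$. Renormalize by $b_k:=2a_k$; the substitution produces the clean super-exponential recursion $b_{k+1}\le 2\cdot 2^d a_k^{d+1}=(2a_k)^{d+1}=b_k^{d+1}$, which is contractive once $b_{k_1}\le 1/2$. Iterating for $k_2$ steps from $b_{k_1}\le 1/2$ yields $b_{k_1+k_2}\le b_{k_1}^{(d+1)^{k_2}}\le 2^{-(d+1)^{k_2}}$, so the goal $a_{k_1+k_2}\le 2^{-\sigma}$, equivalent to $b_{k_1+k_2}\le 2^{1-\sigma}$, is reached as soon as $(d+1)^{k_2}\ge \sigma-1$, i.e.\ $k_2\ge \log(\sigma-1)/\log(d+1)$. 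Summing $k_1+k_2$ gives exactly the theorem's bound.

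The delicate point I expect is the handoff between the two phases: Lemma~\ref{lemma4}'s bound is advertised only "for $m$ close to $1$", so I must pin down an explicit threshold (the $a_{k_1}\le 1/4$ chosen above) at which it becomes usable, and check that it remains valid for all subsequent iterates (which it does, since $a_k$ is monotonically decreasing once $b_k<1$). A secondary concern is confirming that the hidden $O(1)$ is uniform in $d,\delta,\sigma$: since $p_d\ge p_1=3/2$ (so $\log p_d\ge \log(3/2)>0$) and $d\ge 1$ (so $\log(d+1)\ge 1$), both denominators are bounded below by absolute constants and every constant swallowed into $O(1)$ is independent of $d$, $\delta$, and $\sigma$.
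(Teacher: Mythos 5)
Your proposal is correct and follows essentially the same route as the paper's proof: both split $k$ into a first phase governed by Lemma~\ref{lemma3} (driving $1-g_d^{(k)}(m)$ from $1-\delta$ down to a fixed constant, contributing the $\log(1/\delta)/\log p_d$ term) and a second phase governed by Lemma~\ref{lemma4} with the same renormalization $2(1-g_d^{(k+1)}(m))\le\bigl(2(1-g_d^{(k)}(m))\bigr)^{d+1}$ (contributing the $\log(\sigma-1)/\log(d+1)$ term). The only cosmetic difference is that you fix the handoff threshold at $a_{k_1}\le 1/4$ where the paper keeps a generic constant $\gamma\in(0,1/2)$.
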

\begin{proof}
Here we only consider the case where the input of $g_d^{(k)}$ is non-negative, since $g_d^{(k)}$ is an odd function.  We use Lemma ~\ref{lemma3} and Lemma ~\ref{lemma4} to analyze the lower bound of $k$ when $g_d^{(k)}$  converges to $(\sigma, \delta)$-close polynomial to $sgn(x)$.  Note that when the value of $m$ is close to $0$, Lemma ~\ref{lemma3} is tighter than Lemma ~\ref{lemma4} but vice verse when the value of $x$ is close to 1.  To obtain a tight lower bound on $k$, we decompose the proof into the following two steps, each of which  applies Lemma ~\ref{lemma3} and \ref{lemma4}, separately.

\textit{Step 1}. We consider the case $m\in [\delta, 1]$ instead of $[-1, -\delta]\cup [\delta, 1]$, since $g_d^{(k)}$ is an odd function. Let $k_{\delta}=\lceil \frac{1}{\log(p_d)} \cdot \log (\log(\frac{1}{\gamma})/\delta)\rceil$ for some constant $0<\gamma<1$. Then, with Lemma~\ref{lemma3}, we have the following inequality for $m\in [\delta, 1]$.
\begin{small}
 \begin{equation*}
 \begin{split}
1-g_d^{k_\delta}(m)&\leq (1-m)^{p_{d}^{k_\delta}}\leq (1-\delta)^{\log(\frac{1}{\gamma}/\delta)}< (\frac{1}{e})^{\log(\frac{1}{\gamma})}<\gamma,
\end{split}
\end{equation*}
\end{small}
where $e$ indicates the Euler's constant.

\textit{Step 2}. Let $k_\sigma=\lceil  \frac{1}{\log(d+1)}\cdot \log((\sigma-1/\log(\frac{1}{2\gamma})))\rceil$.  With  Lemma~\ref{lemma4}, we have the following inequality for $m\in [\delta, 1]$.

\begin{small}
 \begin{equation*}
 \begin{split}
2\cdot(1-g_d^{(k_\delta+k_\sigma)}(m))&\leq (2\cdot (1-g_d^{k_\delta}(m)))^{(d+1)^{k_\sigma}}\\
&\leq (2\gamma)^{(d+1)^{k_\sigma}}\leq (2\gamma)^{{\sigma-1}/\log(\frac{1}{2\gamma})}\\
&=2^{-\sigma +1}.
\end{split}
\end{equation*}
\end{small}
Therefore,  $1-g_d^{(k)}(m)\leq 2^{-\sigma}$ for $m\in[\delta, 1]$, if $k\geq k_\delta+k_\sigma$.\\
\end{proof}
\textbf{Comparisons with existing works}. We compare the computation complexity of our method with existing approximation methods for the sign function, including the traditional Minmax based polynomial approximation method \cite{eremenko2007uniform} and the latest  work \cite{cheon2019numerical}. The results are shown in Table~\ref{Complexity of Each Approximation Method}. \textit{Paterson et al.}\cite{paterson1973number} have proven that when the input is within the interval $[-1, 1]$, the minimum degree of a $(\sigma, \delta)$-polynomial function  to approximate a sign function is $\Theta(\sigma/\delta)$. This means at least multiplications with the complexity of $\Theta(\log (1/\delta))+\Theta(\log \sigma)$ are required to complete the approximation of the sign function. Hence, our method achieves an optimality in asymptotic computation complexity. Other works, like \cite{cheon2019numerical} as one of the most advanced solutions for approximating the sign function, only achieve quasi-optimal computation complexity (see TABLE~\ref{comparisons with work116} in APPENDIX~\ref{Experimental comparisons} for more experimental comparisons).
\begin{table}[H]
\centering
\footnotesize
\caption{Complexity of Each Approximation Method}
\label{Complexity of Each Approximation Method}
\begin{tabular}{c|c|c|c}
\Xhline{1pt}
{Parameter}&MinMax Approx.\cite{eremenko2007uniform}&\cite{cheon2019numerical} &\textbf{Ours}\\
\Xhline{1pt}
$\log(\frac{1}{\delta})=\Theta(1)$&$\Theta(\sqrt{\sigma})$& $\Theta(\log^2\sigma)$& ${\Theta(\log\sigma)}$\\
\hline
$\log(\frac{1}{\delta})=\Theta(\sigma)$&$\Theta(\sqrt{\sigma}\cdot2^{\frac{\sigma}{2}})$& $\Theta(\sigma\cdot\log\sigma)$& ${\Theta(\sigma)}$\\
\hline
$\log(\frac{1}{\delta})=2^\sigma$&$\Theta(\sqrt{\sigma}\cdot2^{2^{\sigma-1}})$& $\Theta(\sigma\cdot2^\sigma)$& ${\Theta(2^\sigma)}$\\
\Xhline{1pt}
\end{tabular}
\end{table}

\subsection{Application to Max and Relu Functions}
Given two variables $a$ and $b$, the \texttt{max} function can be expressed as $\max(a,b)=\frac{a+b}{2}+\frac{|a-b|}{2}$.
The \texttt{ReLu} function  $f(x)=\max (0, x)$ can be considered as a special case of the \texttt{max} function.
Specifically, since $|m|=m\cdot sgn(m)$, as long as we give the approximate polynomial about $|m|$, we can directly get the approximate \texttt{max} function. Therefore,  $\max(a,b)$ can be evaluated by  computing $\frac{a+b}{2}+\frac{a-b}{2}\cdot g_d^{(k)}(a-b)$. The detailed algorithm is shown in \textbf{Algorithm}~\ref{algorithm 5}. We also provide the convergence rate to approximate the absolute  function $|m|$ with $m\cdot g_d^{(k)}(m)$ (See Theorem \ref{theh3.6}).
\begin{algorithm}
\caption{Approximation of the maximum function }
\label{algorithm 5}
\begin{algorithmic}[1]
\small
\REQUIRE \texttt{AppMax} $(a, b, d, k)$
\STATE $m\leftarrow a-b$, $y\leftarrow \frac{a+b}{2}$
\FOR{{$k=1$ to $k=n-1$}}
\STATE $m\leftarrow g_d(m)$
\ENDFOR
\STATE $y\leftarrow y+\frac{a-b}{2}\cdot m$
\STATE \textbf{return} $y$
\end{algorithmic}
\end{algorithm}
\begin{theorem}
\label{theh3.6}
If $k\geq\frac{1}{\log p_d}\cdot \log (\sigma-1)$, then the error of  $m\cdot g_d^{(k)}(m)$ compared with   $|m|$ over $[-1, 1]$ is bounded by $2^\sigma$.
\end{theorem}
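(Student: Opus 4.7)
My plan is to mimic the strategy used in the proof of Theorem~\ref{the3.5}, but this time exploit the extra factor of $m$ that multiplies the sign-function error. First I would reduce the problem to the interval $m \in [0,1]$: since $g_d$ is odd (Prop.~I), $g_d^{(k)}$ is also odd, which means $m \cdot g_d^{(k)}(m)$ is even, and $|m|$ is trivially even; so the error is symmetric about the origin. For $m \in [0,1]$ one has $|m| = m$, so the quantity to bound becomes
\[
 \bigl| m \cdot g_d^{(k)}(m) - m \bigr| \;=\; m\bigl(1 - g_d^{(k)}(m)\bigr),
\]
where I have used the fact (a consequence of Prop.~II together with Lemma~\ref{lemma3}) that $g_d$ maps $[0,1]$ into $[0,1]$.

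Next I would iterate Lemma~\ref{lemma3}. A straightforward induction on $k$, exploiting the stability of $[0,1]$ under $g_d$, gives the composite bound
\[
 0 \;\leq\; 1 - g_d^{(k)}(m) \;\leq\; (1-m)^{p_d^{\,k}} \qquad \text{for every } m \in [0,1].
\]
Thus the error on $[0,1]$ is pointwise at most $f_k(m) := m\,(1-m)^{p_d^{\,k}}$. A single-variable calculus argument then locates the maximum: $f_k'(m) = (1-m)^{p_d^{\,k}-1}\bigl(1 - (p_d^{\,k}+1)\,m\bigr)$ vanishes only at $m^{\star} = 1/(p_d^{\,k}+1)$, and plugging back, together with the elementary estimate $(1 - 1/(N+1))^{N} \leq 1/e$ applied at $N = p_d^{\,k}$, yields
\[
 \max_{m\in[0,1]} f_k(m) \;\leq\; \frac{1}{p_d^{\,k}+1}\cdot \Bigl(\tfrac{p_d^{\,k}}{p_d^{\,k}+1}\Bigr)^{p_d^{\,k}} \;\leq\; \frac{1}{e\,p_d^{\,k}}.
\]
Forcing this quantity to be at most $2^{-\sigma}$ and taking logarithms produces a lower bound on $k$ of the form stated, up to constants absorbed into the $O(1)$ slack already present in the analogous Theorem~\ref{the3.5}.

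The main obstacle is matching the precise shape $k \geq \log(\sigma - 1)/\log p_d$ claimed in the statement: the direct optimisation above only delivers $k$ proportional to $\sigma/\log p_d$, which is exponentially weaker than a $\log(\sigma-1)$-type bound. To recover the doubly-exponential dependence I would, mirroring Step~2 of the proof of Theorem~\ref{the3.5}, first run a constant number of $g_d$-iterations so that every iterate becomes ``close to $1$'' in the sense required by Lemma~\ref{lemma4}, and then iterate the sharper bound of Lemma~\ref{lemma4}, which yields $1 - g_d^{(k)}(m) \leq \tfrac{1}{2}(2(1-m))^{(d+1)^{k}}$ and therefore drives the error below $2^{-\sigma}$ in only $O(\log(\sigma-1))$ further steps. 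The crucial point is that the prefactor $m$ in $m\bigl(1-g_d^{(k)}(m)\bigr)$ kills the contribution from the small-$m$ region automatically, so no $\log(1/\delta)$ penalty appears and the bound on $k$ becomes purely logarithmic in $\sigma$.
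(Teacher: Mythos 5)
Your reduction to $[0,1]$, the iteration of Lemma~\ref{lemma3} to obtain $1-g_d^{(k)}(m)\le(1-m)^{p_d^{\,k}}$, and the optimisation of $f_k(m)=m(1-m)^{N}$ with $N=p_d^{\,k}$ are all sound, modulo one small slip: $(1-\tfrac1{N+1})^{N}\le 1/e$ is false (take $N=1$: $\tfrac12>1/e$), though the conclusion $\max_m f_k(m)\le\frac{1}{eN}$ still follows from $(1-x)^{N}\le e^{-Nx}$. This honest part of your argument proves the statement one actually wants here: the error is at most $\frac{1}{e\,p_d^{\,k}}$, hence at most $2^{-\sigma}$ as soon as $k\ge(\sigma-1)/\log p_d$ --- i.e.\ $k$ \emph{linear} in $\sigma$, which matches the corresponding corollary in the cited work \cite{cheon2019numerical}. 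The paper gives no proof of Theorem~\ref{theh3.6} (it is declared to ``easily evolve'' from Theorem~\ref{the3.5} and omitted), and as printed the statement is either vacuous (two quantities in $[-1,1]$ trivially differ by at most $2\le 2^{\sigma}$) or, reading $2^{\sigma}$ as the intended $2^{-\sigma}$, false with the threshold $k\ge\log(\sigma-1)/\log p_d$: the extra $\log$ appears to be spurious, and deleting it turns the theorem into exactly what your Steps 1--4 establish.

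The genuine gap is therefore confined to your final paragraph, where you try to force the $\log(\sigma-1)$ shape; both claims there fail. First, a \emph{constant} number of $g_d$-iterations does not make every iterate ``close to $1$'': for small $m_0$ the iterates grow only geometrically, $g_d^{(i)}(m_0)\approx p_d^{\,i}m_0$, so reaching a constant already costs $\Theta(\log(1/m_0))$ steps, and the iterated Lemma~\ref{lemma4} bound $1-g_d^{(k)}(m)\le\tfrac12(2(1-m))^{(d+1)^{k}}$ is useless until $m>\tfrac12$. Second, the prefactor $m$ does not kill the intermediate regime: take $m=2^{-\sigma/2}$ and $k$ with $p_d^{\,k}\approx\sigma$; then $g_d^{(k)}(m)\approx p_d^{\,k}m\approx\sigma\,2^{-\sigma/2}$ is still tiny, so $m\bigl(1-g_d^{(k)}(m)\bigr)\approx 2^{-\sigma/2}\gg 2^{-\sigma}$. (Concretely, with $d=1$, $\sigma=20$, $k=8$ one gets an error of roughly $8\times10^{-3}$ at $m\approx0.039$, against a target of $2^{-20}$.) So the ``main obstacle'' you flagged is not a defect of your argument but of the theorem statement; the right move is to report the linear-in-$\sigma$ bound you proved rather than to patch the logarithmic one.
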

\begin{proof}
This proof can be easily evolved from Theorem \ref{the3.5}. We omit it for brevity.

\end{proof}

\section{Implementation of \Name}
\label{Implementation of Hercules}
We now describe the technical details of implementing \Name, which  provides privacy-preserving federated neural network training. In particular, model parameters and users' data are encrypted throughout the execution process.  To achieve this, \Name exploits the  MCKKS as the underlying framework and relies on the packed  ciphertext technology to accelerate calculations. Besides,  approximation methods based on composite polynomials are used to approximate  \texttt{ReLU} and \texttt{max} functions, which facilitate the compatibility of HE with complex operations.

From a high-level view,  the implementation of \Name is composed of three phases: \texttt{Prepare}, \texttt{Local Training}, and \texttt{Aggregation}. As shown in \textbf{Algorithm}~\ref{algorithm 6}, we use $[\cdot]_{pk}$ to denote  the encrypted value and ${{\omega}_{j, i}^{k}}$ to represent the weight matrix of the $j$-th layer generated by $P_i$ at the $k$-th iteration. The global weight  matrix is denoted as ${{\omega}_{j}^{k}}$ without index $i$. Similarly, the local gradients  computed by user $P_i$ for each layer $j$ at the $k$-th iteration is denoted as ${\bigtriangledown {\omega}_{j, i}^{k}}$.
\renewcommand{\algorithmicrequire}{\textbf{Input:}}
\renewcommand{\algorithmicensure}{\textbf{Output:}}
\begin{algorithm}
\caption{\small{High-level of federated neural network training}}
\label{algorithm 6}
\begin{algorithmic}[1]
\footnotesize
\REQUIRE $\{x, y\}\in D_i\subseteq D$, for $i\in \{1, \cdots, N \}$
\ENSURE  Encrypted ${\omega}_{1}^{H}, {\omega}_{2}^{H}, \cdots, {\omega}_{\mathbb{L}}^{H}$

\texttt{Prepare}:
\STATE The cloud server $\mathcal{C}$ and every user ${P}_i$ agree on the parameters $\mathbb{L}$, $h_1, \cdots, h_\mathbb{L}$, $\eta$, $\varphi (\cdot)$, $H$ and  $\mathcal{B}$.
The cloud server $\mathcal{C}$ generates  its secret key and public key $\{sk', pk'\} \leftarrow \mathbf{SecKeyGen}(1^{\lambda})$.
\STATE  Each user ${P}_i$  generates $sk_i \leftarrow \mathbf{SecKeyGen}(1^{\lambda})$.
\STATE All users collectively generate $pk\leftarrow\mathbf{DKeyGen}(\{sk_i\})$.
\STATE Each user encodes its input as $\hat{X_i}$, $\hat{Y_i}$. \footnotemark{}
\STATE The cloud server $\mathcal{C}$ initializes ${[\mathbf{\omega}_{1}^{0}}]_{pk}, [\mathbf{{\omega}_{2}^{0}}]_{pk}, \cdots, [\mathbf{{\omega}_{\mathbb{L}}^{0}}]_{pk}$. Then, $\mathcal{C}$ broadcasts them  to all users.

\texttt{Local Training}:
\FOR{{$k=0$ to $k=H-1$}}
\STATE Each user ${P}_i$ computes $[\mathbf{\bigtriangledown {\omega}_{1, i}^{k}}]_{pk}, \cdots, [\mathbf{\bigtriangledown {\omega}_{\mathbb{L}, i}^{k}}]_{pk}$ and sends them to the cloud server.

\texttt{Aggregation}:
\FOR{{$j=1$ to $j=\mathbb{L}$}}
\STATE $\mathcal{C}$   computes $[\mathbf{{\bigtriangledown \omega}_{j}^{k}}]_{pk}=[\sum_{i=1}^{N}\mathbf{\bigtriangledown {\omega}_{j, i}^{k}}]_{pk}$.
\STATE $\mathcal{C}$   computes $[\mathbf{{\omega}_{j}^{k+1}}]_{pk}=[\mathbf{{\omega}_{j}^{k}}-\frac{\eta}{\mathcal{B}\times N}\mathbf{\bigtriangledown {\omega}_{j}^{k}}]_{pk}$ and broadcasts them to all users.
\ENDFOR
\ENDFOR
\end{algorithmic}
\end{algorithm}
\footnotetext{$\hat{X_i}$ and $\hat{Y_i}$ can be vectors composed of a single training sample, or a matrix composed of multiple samples. This depends on the size of a single sample and the value of the degree $\mathcal{N}$ of  the cyclotomic polynomial ring.}

1. \texttt{Prepare}: The cloud server $\mathcal{C}$ needs to agree with all users on the training hyperparameters, including the number $\mathbb{L}$ of layers in the model, the number $h_j$ of neurons in each hidden layer $j$, $j\in[\mathbb{L}]$, the learning rate $\eta$, the number $H$ of global iterations, the number $\mathcal{B}$ of local batches, the activation function $\varphi (\cdot)$ and its approximation. Then, $\mathcal{C}$  generates its own key pair $\{ sk', pk'\}$, and each user  ${P}_i$ generates $sk_i$ for $i\in[N]$. Besides, all users collectively generate $pk$. Finally,   $\mathcal{C}$ initializes ${[\mathbf{\omega}_{1}^{0}}]_{pk}, [\mathbf{{\omega}_{2}^{0}}]_{pk}, \cdots, [\mathbf{{\omega}_{\mathbb{L}}^{0}}]_{pk}$, and broadcasts them  to all users.

2. \texttt{Local Training}: Each user ${P}_i$  executes the mini-batch based SGD algorithm locally and obtains the encrypted local gradients $[\mathbf{\bigtriangledown {\omega}_{1, i}^{k}}]_{pk}, \cdots, [\mathbf{\bigtriangledown {\omega}_{\mathbb{L}, i}^{k}}]_{pk}$, where  $P_i$ is required to execute the forward and backward passes for $\mathcal{B}$ times to compute and aggregate the local gradients.  Then,  ${P}_i$ sends these local gradients to the cloud server $\mathcal{C}$.

3. \texttt{Aggregation}: After receiving all the local gradients from users, $\mathcal{C}$ updates the global model parameters by computing the averaged aggregated gradients.  In our system, training is stopped once the number of iterations reaches $H$. Therefore, after the last iteration, all users need to perform an additional ciphertext conversion operation, i.e., the $\mathbf{DKeySwitch}$ function (shown in Figure~\ref{MCKKS}), which enables  to convert  model $\mathbf{M}$ encrypted under the  public key $pk$ into $[\mathbf{M}]_{pk'}$ under the cloud server's public key $pk'$ without decryption, so that $\mathcal{C}$ can access the final model parameters.

Figure~\ref{Detailed description of the scheme} in Appendix~\ref{AP:Details Implementation of Hercules} presents the details of \Name implementation, which essentially executes \textbf{Algorithm}~\ref{algorithm 1} under the ciphertext.
This helps readers understand how the functions in MCKKS as well as our new matrix parallel multiplication technology are used in FL.

\noindent\textbf{Security of \Name}:
  We demonstrate that  \Name realizes the data and model privacy protection defined in Section~\ref{Threat Model and Privacy Requirements}, even under the collusion of up to $N-1$ users. This is inherited from the property of MCKKS \cite{mouchet2020multiparty}. We give the following \textbf{Theorem}~\ref{theorem1} and provide  the security proof (sketch). The core of our proof is that for any adversary, when only the input and output  of  passive malicious users in \Name are provided, there  exists a simulator  with Probabilistic Polynomial Time  computation ability, which can simulate the view of the adversary and make  the adversary  unable to distinguish the real view from the simulated one.
\begin{theorem}
\label{theorem1}
 \Name realizes the privacy protection of data and model parameters during the FL process, as long as its underlying MCKKS cryptosystem is secure.
\end{theorem}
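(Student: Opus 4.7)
The plan is to adopt the standard real/ideal simulation paradigm of Canetti and handle separately the two corruption patterns identified in Section~\ref{Threat Model and Privacy Requirements}: (i) an adversary $\mathcal{A}$ that corrupts a subset $\mathcal{T}\subset\{P_1,\dots,P_N\}$ of at most $N-1$ users while the server is honest, and (ii) an adversary that additionally corrupts the server. In both cases at least one honest user $P_{i^\star}$ remains, so the collective secret key $sk=\sum_{i=1}^N sk_i$ is information-theoretically hidden from $\mathcal{A}$ given its view of the corrupted shares $\{sk_i\}_{i\in\mathcal{T}}$. This is the single structural fact that drives the entire argument, and I would state it as a preliminary lemma inherited from the key-generation protocol of MCKKS \cite{mouchet2020multiparty}.

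Next I would construct, for each case, a PPT simulator $\mathsf{Sim}$ that, given only the inputs and legitimate outputs of the corrupted parties (the local datasets $\{D_i\}_{i\in\mathcal{T}}$ and, in case (ii), the final plaintext model received via $\mathbf{DKeySwitch}$), reproduces $\mathcal{A}$'s view of the entire execution of Algorithm~\ref{algorithm 6}. The recipe is uniform across the \texttt{Prepare}, \texttt{Local Training}, and \texttt{Aggregation} phases: every ciphertext that originates from (or is jointly produced with) an honest party is replaced by a fresh encryption of $0$ under $pk$, and every message sent by an honest party in a distributed sub-protocol ($\mathbf{DKeyGen}$, $\mathbf{DKeySwitch}$, $\mathbf{DBootstrap}$, $\mathbf{DDec}$) is generated by invoking the sub-simulators provided by the MCKKS construction. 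For case (ii), $\mathsf{Sim}$ additionally programs the output of the final $\mathbf{DKeySwitch}$ so that its decryption under $sk'$ matches the model handed to it by the ideal functionality.

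I would then prove indistinguishability through a hybrid argument with $\Theta(H\cdot \mathbb{L}\cdot N)$ hybrids, moving from the real execution to the simulated one by rewriting the honest parties' ciphertexts one at a time. Every adjacent pair of hybrids differs either in the plaintext underlying a single honest ciphertext, in which case the gap is bounded by the IND-CPA advantage against MCKKS under the reduced key space $\{sk_i\}_{i\notin\mathcal{T}}$, or in the transcript of one invocation of a distributed sub-protocol, in which case the gap is bounded by the statistical distance guaranteed by the sub-simulator. Summing yields a negligible total advantage, which is exactly the statement of Theorem~\ref{theorem1}.

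The hardest part will be case (ii), where the server's view already contains the aggregated encrypted gradients, the bootstrapped ciphertexts, and the final key-switched model. Here one must argue that the distributed protocols release nothing beyond these legitimate outputs, and in particular that the noise-flooding and share-refreshing steps embedded in $\mathbf{DBootstrap}$ and $\mathbf{DKeySwitch}$ are statistically simulatable given only the honest party's share $sk_{i^\star}$ implicitly. I would handle this by citing the simulatability lemmas of \cite{mouchet2020multiparty,sav2020poseidon} verbatim for each sub-protocol and then focusing the novel part of the proof on verifying that the interface used by Algorithm~\ref{algorithm 6}---including the new parallel matrix routines of Section~\ref{Parallelized Matrix Homomorphic Operations} and the composite polynomial approximations of Section~\ref{Approximation for Comparison Function}---never feeds these sub-protocols anything outside their admissible input distribution, so that the black-box simulators remain applicable.
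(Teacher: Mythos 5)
Your proposal is correct and follows essentially the same route as the paper's own (sketch) proof: a real/ideal simulation in which the simulator exploits the fact that the collective secret key of MCKKS cannot be reconstructed without the one remaining honest user, replaces honest parties' ciphertexts with encryptions of dummy values, and invokes the black-box simulatability of $\mathbf{DKeyGen}$, $\mathbf{DDec}$, $\mathbf{DKeySwitch}$ and $\mathbf{DBootstrap}$ proven in \cite{mouchet2020multiparty}. Your version is simply a more rigorous elaboration of the paper's argument (explicit corruption-case split, hybrid argument, and programming of the final $\mathbf{DKeySwitch}$), all of which the paper compresses into its proof sketch.
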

\renewcommand{\proofname}{\indent Proof \rm(Sketch)}
\begin{proof}
  \Name  inherits the security attributes of the MCKKS cryptosystem  proposed by
   \textit{Mouchet et al.} \cite{mouchet2020multiparty}. Compared with the standard CKKS, the multiparty version constructs additional distributed cryptographic functions including $\mathbf{DKeyGen}(\cdot)$, $\mathbf{DDec}(\cdot)$, $\mathbf{DKeySwitch}(\cdot)$ and $\mathbf{DBootstrap}(\cdot)$. All of them have been proven secure against a passive-adversary model with up to $N-1$ colluding parties, under the assumption of the underlying NP  hard problem (i.e., RLWE problem\cite{rosca2018ring} ).  Here we give a sketch of the proof with the simulation paradigm of the real/ideal world.
Let us assume that a real-world simulator $\mathcal{S}$ simulates a computationally bounded adversary composed of $N-1$ users  colluding with each other. Therefore,  $\mathcal{S}$ can access all the inputs and outputs of these $N-1$ users. As mentioned earlier, the MCKKS guarantees the indistinguishability of plaintext under chosen plaintext attacks (i.e., CPA-Secure) even if collusion of $N-1$ users. This stems from the fact that the secret key used for encryption must be recovered with the participation of all users. Therefore, $\mathcal{S}$ can simulate the data sent by honest users by replacing the original plaintext with random messages. Then these random messages are encrypted and sent to the corresponding adversary. Due to the security of CKKS, the simulated view is indistinguishable from the real view to the adversary. Analogously,  the same argument proves that  \Name protects the privacy of the training model, because all model parameters are encrypted with CKKS, and the intermediate and final weights are always in ciphertext during the training process.
\end{proof}

\section{Performance Evaluation}
\label{sec:PERFORMANCE EVALUATION}
 We  experimentally evaluate the performance of  \Name in terms of classification accuracy, computation  communication and storage overhead. We compare \Name with POSEIDON \cite{sav2020poseidon}, which is consistent with our scenario and is also bulit on MCCKS.

\subsection{Experimental Configurations}
We implement the multi-party cryptography operations on the Lattigo lattice-based library \cite{lattigo}, which provides an optimized version of the MCKKS cryptosystem. All the experiments are performed on 10 Linux servers, each of which is equipped with Intel Xeon E5-2699v3 CPUs, 36 threads on 18 cores and 189 GB RAM. We make use of Onet \cite{onts} and build a distributed  system where the parties communicate over TCP with secure channels (TLS).
 We instantiate \Name with the number of users as $N=10$ and $N=50$, respectively.  For parameter settings, the  dimension of the cyclotomic polynomial ring in CKKS is set as $\mathcal{N}=2^{13}$ for the datasets with the dimension of input $h<32$ or $32\times32$ images, and $2^{14}$ for those with inputs $h>32$. The number of initial levels  $\mathcal{L}=6$.  We exploit $g_4(m)$ described in Section~\ref{Composite Polynomial Approximate of Sign Function} as the basic of compound polynomial to approximate the \texttt{ReLU} and \texttt{max} functions, where we require $\sigma=20$, $\delta=2^{-20}$.  For other continuous activation functions, such as sigmoid, we  use the traditional MinMax strategy to approximate it, since it has been proven that a small degree polynomial can fit a non-polynomial continuous function well within a small bounded error.

Consistent with POSEIDON \cite{sav2020poseidon}, we choose 7 public datasets (i.e., BCW \cite{lavanya2011analysis}, MNIST  \cite{deng2012mnist}, ESR \cite{ESR}, CREDIT \cite{CREDIT}, SVHN \cite{SVHN}, CIFAR-10, and CIFAR-100 \cite{CIFAR}) in our experiments, and  design 5 different neural network architectures trained on the corresponding datasets (See Appendices~\ref{AP:DATA}  and~\ref{AP:NNS} for more details of the datasets and models used in our experiments. Note that we train two models, CIFAR-10-N1 and CIFAR-10-N2, over the CIFAR-10 dataset  for comparison).

\subsection{ Model Accuracy}
We first discuss the model accuracy on different datasets when the number of users is 10 and 50 respectively. We choose the following three baselines for comparison. (1) Distributed: distributed  training in plaintext, which is in the plaintext form corresponding to \Name. The datasets are evenly distributed to all users to perform FL in a plaintext environment. (2) Local: local  training in plaintext, i.e., each user only trains the model on the local dataset. (3) POSEIDON \cite{sav2020poseidon}. We reproduce the exact algorithm designed in \cite{sav2020poseidon}.

\begin{table*}[!htbp]
\footnotesize
\centering
\caption{Model accuracy and training Cost with $N=10$ users}
\label{Model accuracy1}
\begin{tabular}{|c|c|c|c|c|c|c|c|c|c|c|}
\Xhline{1pt}
\multirow{3}*{Dataset}&\multicolumn{4}{c|}{Accuracy}&\multicolumn{4}{c|}{Training time (s)}& \multicolumn{2}{c|}{Communication cost (GB)}\\\cline{2-11}
&\multirow{2}*{Distributed}&\multirow{2}*{Local}&\multirow{2}*{POSEIDON}& \multirow{2}*{\Name}&  \multicolumn{2}{c|}{POSEIDON}& \multicolumn{2}{c|}{\Name}&\multirow{2}*{POSEIDON}& \multirow{2}*{\Name}\\\cline{6-9}
&&&&& One-GI& Total& One-GI& Total&&\\ \Xhline{1pt}
BCW&97.8\%&93.9\%&96.1\%&\textbf{97.7\%}&0.40&39.92&0.11&\textbf{11.09}&0.59&\textbf{0.59}\\
\hline
ESR&93.6\%&90.1\%&90.2\%&\textbf{93.3\%}&0.92&553.44&0.29&\textbf{172.95}&562.51&\textbf{3.52}\\ \hline
CREDIT&81.6\%& 79.6\%&80.2\% &\textbf{81.4\%} &0.33&163.07&0.13&\textbf{62.73}&7.32&\textbf{2.93}\\
\hline
MNIST&92.1\%&87.8\%&88.7\%&\textbf{91.8\%}&44.67&4467.25&1.54&\textbf{1540.43}&703.13&\textbf{17.58}\\
\Xhline{1pt}
\end{tabular}
\end{table*}
All the baselines are trained on the same network architecture and learning hyperparameters. The learning rate is adaptive to different schemes to obtain the best training accuracy\footnote{For example, approximating the activation function at a small interval usually requires a small learning rate to avoid divergence.}.

As shown in  Tables~\ref{Model accuracy1} and \ref{Model accuracy2}, we can obtain the following two observations. (1) Compared with local training, FL improves the accuracy of model training, especially with the participation of large-scale users. This is drawn from the comparison between the second and fourth columns of Table~\ref{Model accuracy2}. The reason is obvious: the participation of large-scale users has enriched the volume of training samples, and a more accurate model can be derived from such a fertile composite dataset. (2) Compared with distributed training, \Name has negligible loss in accuracy (less than $0.3\%$) and is obviously better than POSEIDON ($1\%$ to $4\%$ improvement). In POSEIDON, the non-continuous activation function (i.e, \texttt{ReLU}) is converted into a low-degree polynomial using a traditional approximation method based on the least square method. This is computationally efficient but inevitably brings a non-negligible precision loss.  However, given a small error bound, our approximation  based on the composite polynomial can approximate non-continuous functions with high-degree polynomials, but only requires the computation complexity of $O(\log(deg G))$, where $deg G$ is the degree of the  composite polynomial. Therefore, the  accuracy loss caused by the conversion of the activation function is very slight in \Name.

Note that the model accuracy can be further improved by increasing the number of iterations, but we use the same number of iterations for the convenience of comparison. To achieve the expected training accuracy, model training over CIFAR-100 usually requires a special network architecture (such as ResNet) and layers (batch normalization) due to the diversity of its labels. For the training simplicity, we choose a relatively simple network architecture, which is also the main reason for the relatively low training accuracy under CIFAR-10 and CIFAR-100. We leave the model training of more complex architectures and tasks as future work (See Appendix~\ref{AP:Learning Extensions}).
\subsection{Computation Overhead}
\begin{table*}[!htbp]
\footnotesize
\centering
\caption{Model accuracy and training cost with $N=50$ users}
\label{Model accuracy2}
\begin{threeparttable}
\begin{tabular}{|c|c|c|c|c|c|c|c|c|c|c|}
\Xhline{1pt}
\multirow{3}*{Dataset}&\multicolumn{4}{c|}{Accuracy}&\multicolumn{4}{c|}{Training time (hrs)}& \multicolumn{2}{c|}{Communication cost (GB)}\\\cline{2-11}
&\multirow{2}*{Distributed}&\multirow{2}*{Local}&\multirow{2}*{POSEIDON}& \multirow{2}*{\Name}&  \multicolumn{2}{c|}{POSEIDON}& \multicolumn{2}{c|}{\Name}&\multirow{2}*{POSEIDON}& \multirow{2}*{\Name}\\\cline{6-9}
&&&&& One-GI& Total& One-GI& Total&&\\ \Xhline{1pt}
SVHN&68.4\%&35.1\%&67.5\%&\textbf{68.2\%}&0.0013&24.15&0.0005&\textbf{8.78}&12656.25&\textbf{474.61}\\
\hline
CIFAR-10-N1&54.6\%&26.8\%&51.8\%&\textbf{54.3\%}&0.005&126.26&0.0016&\textbf{40.73}&61523.44&\textbf{2050.78}\\\hline
CIFAR-10-N2&63.6\% &28.0\%&60.1\% &\textbf{63.1\%} &0.0059&98.32&0.002&\textbf{33.33}&59062.5&\textbf{1968.75}\\
\hline
CIFAR-100&43.6\%&8.2\%&40.1\%&\textbf{43.4\%}&0.0069&363.11&0.0024&\textbf{126.52}&246796.88&\textbf{8226.56}\\
\Xhline{1pt}
\end{tabular}
\begin{tablenotes}
        \item[] \footnotesize{Note that \Name and POSEIDON produce a relatively high total communication overhead compared to Table~\ref{Model accuracy1}, which stems from the use of a larger number of global iterations over the above datasets (See Appendix~\ref{AP:NNS} for hyperparameter settings).}
      \end{tablenotes}
 \end{threeparttable}
\end{table*}
We further discuss the performance of \Name in terms of computation overhead. As shown in Tables~\ref{Model accuracy1} and \ref{Model accuracy2}, when the number of users is 10, the training time of \Name over BCW, ESR and CREDIT is less than 3 minutes, and the training time over MNIST is also less than 30 minutes. For $N=50$, to  train  specific model architectures over SVHN, CIFAR-10-N1, CIFAR-10-N2 and CIFAR-100,  the total cost of \Name  is  $8.78$ hours, $40.73$ hours, $33.3$ hours and $126.52$  hours, respectively. We also give the running time of one global iteration (One-GI), which can be used to estimate the time required to train these architectures under a larger number of global iterations.  Obviously, for the same model architecture and number of iterations, the execution time of \Name is far less than that of POSEIDON. This stems from the fast SIMD operation under our new matrix multiplication coding method (See Appendix~\ref{Microbenchmarks}
for the comparison of the microbenchmark costs of \Name and POSEIDON under various functionalities). Specifically, POSEIDON designs AP to achieve fast SIMD calculations. AP combines row-based and column-based packing, which means that the rows or columns of the matrix are vectorized and packed into a ciphertext. For the multiplication of two $(h\times h)$-dimensional matrices, the complexity of the homomorphic rotation operations required by AP is $\max_{i\in [\mathbb{L}]}(\omega_i\times\log(h\times \omega_i))$, where  $\omega_i$ denotes the number of weights between layers $i$ and $i+1$. For example, given $h=64$, $\max_{i\in [\mathbb{L}]}\omega_i=64$, AP roughly needs $768$ homomorphic rotation operations to realize the multiplication calculation of two $(64\times 64)$-dimensional matrices. For \Name, as shown in Table~\ref{Complexity of Algorithm 3}, the complexity required for the matrix multiplication is only $3\times 64+ 5\sqrt{64}=232$, which is roughly one third of the overhead required by POSEIDON. Moreover, by comparing the complexity, we can infer that the homomorphic multiplication of the matrices in \Name is only linearly related to the dimension of the matrix, and is independent of the number of neurons in each layer of the model. On the contrary, the complexity of AP increases linearly with $\max_{i\in [\mathbb{L}]}\omega_i$. This implies that \Name  is more suitable for complex network architectures than POSEIDON.

\begin{figure*}[htb]
  \centering
  \subfigure[]{\label{4a}\label{Reprint_accuracy}\includegraphics[width=0.24\textwidth]{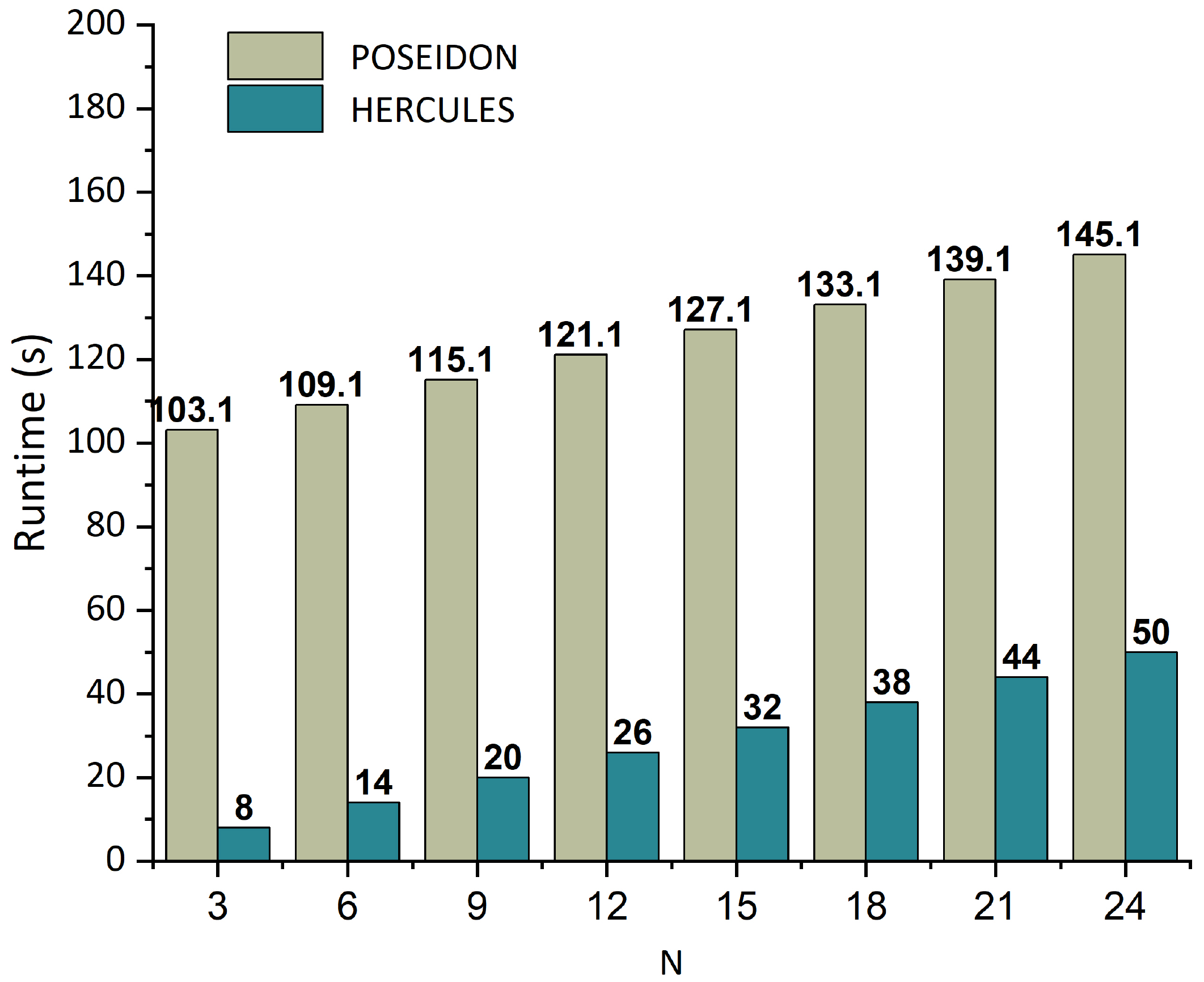}}
 \subfigure[]{\label{4b}\includegraphics[width=0.24\textwidth]{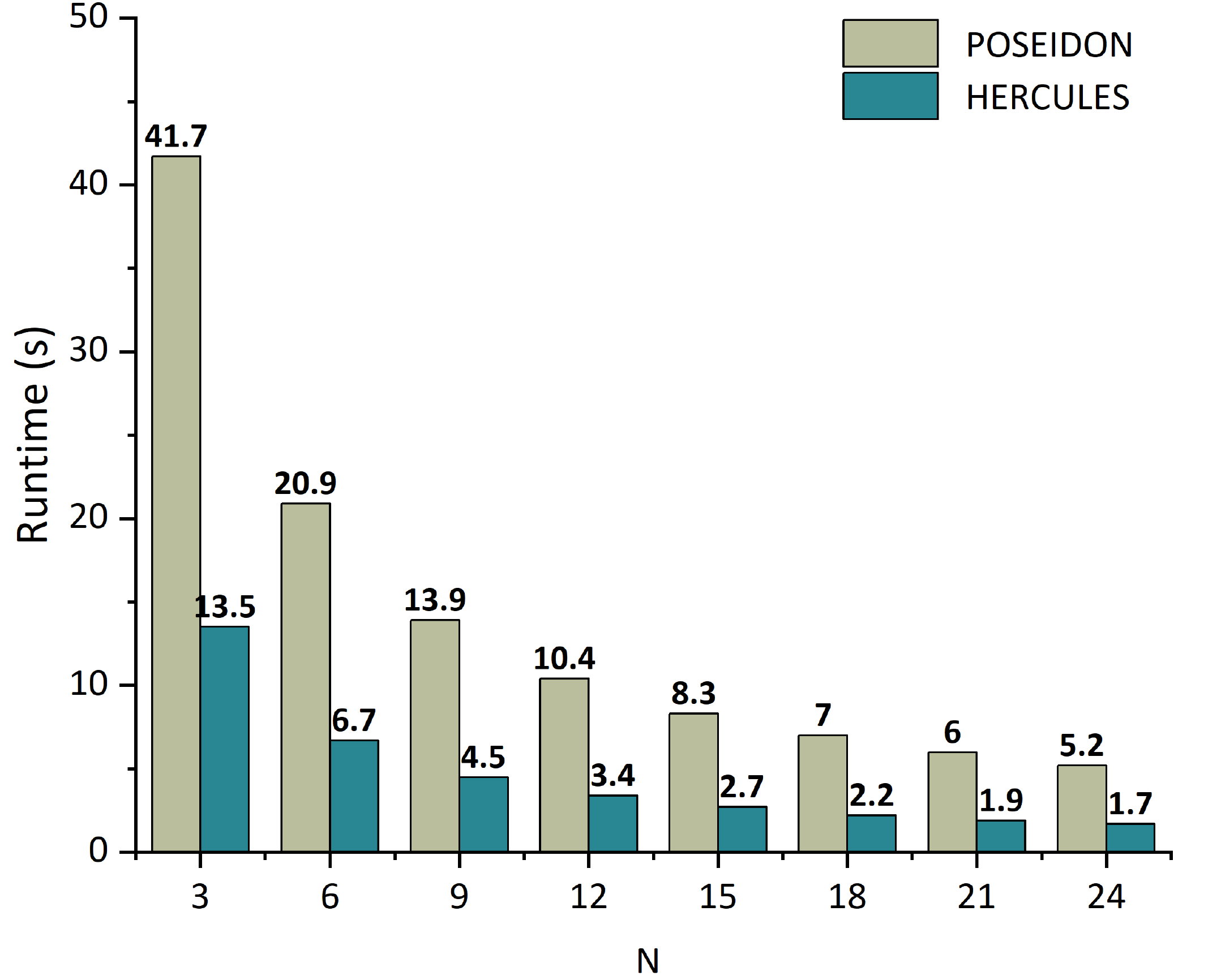}}
  \subfigure[]{\label{4c}\includegraphics[width=0.24\textwidth]{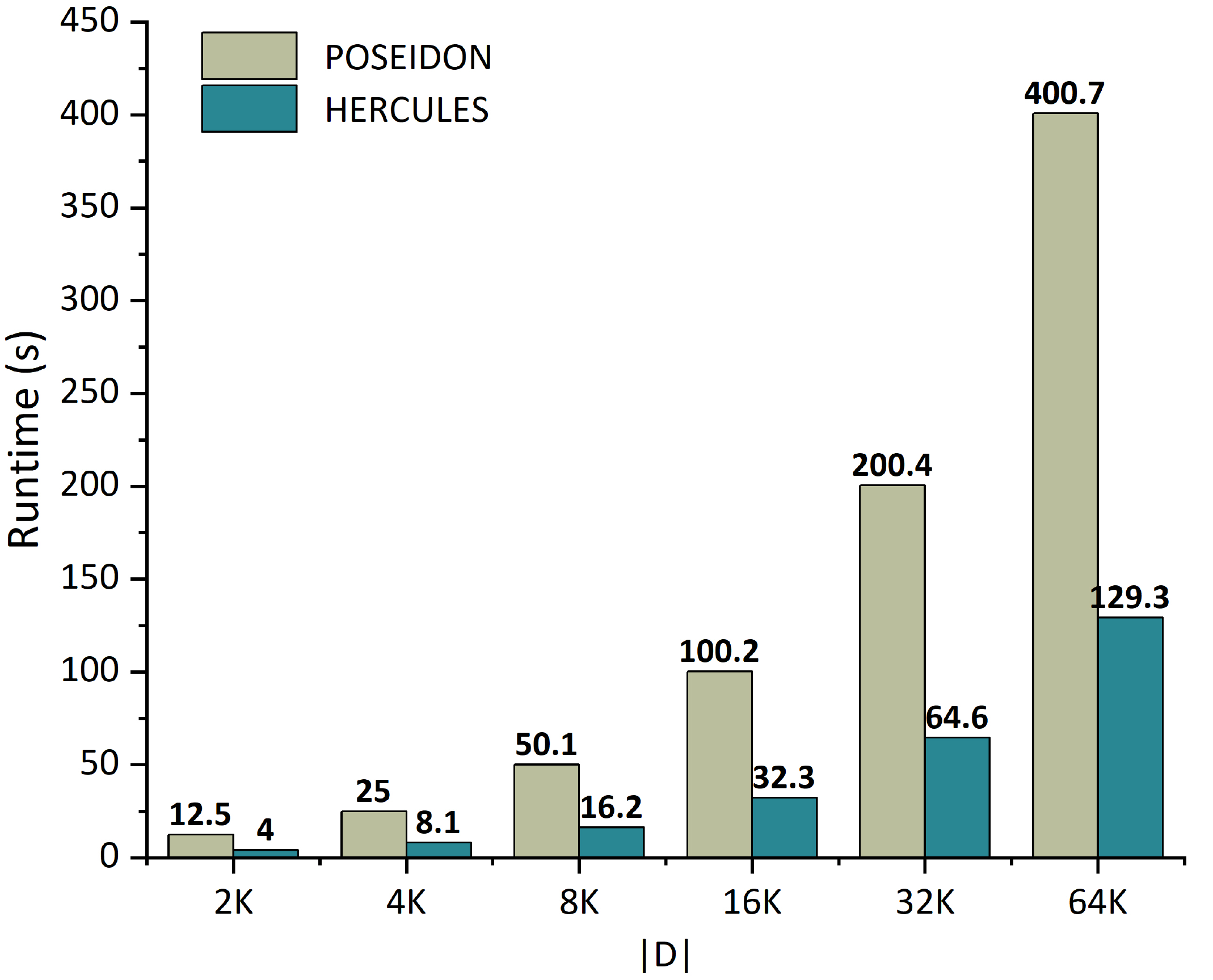}}
  \subfigure[]{\label{4d}\includegraphics[width=0.24\textwidth]{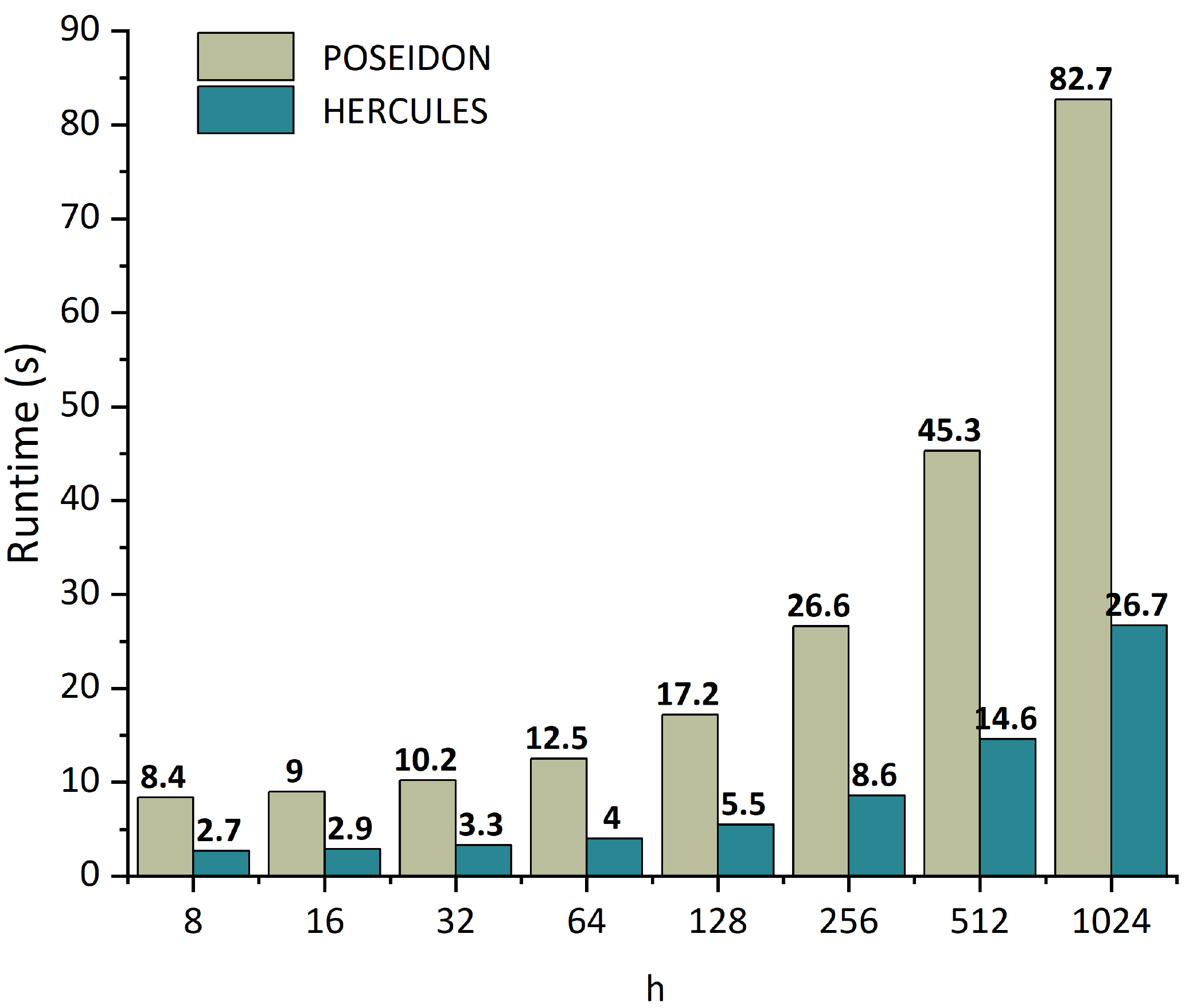}}
  \caption{Running time of one training epoch. (a) Increase the number of users $N$ when the number of samples for each user is $|D_i|=200$.  (b)  Increase the number of users $N$ when the total sample size is $|D|=2000$. (c) Increase the total sample size $|D|$ when  $N=10$. (d) Increase the sample dimension when $N=10$ and $|D|=200\times N$. }
  \label{runtime}
  \vspace{-15pt}
\end{figure*}

\begin{figure*}[htb]
  \centering
  \subfigure[]{\label{5a}\includegraphics[width=0.23\textwidth]{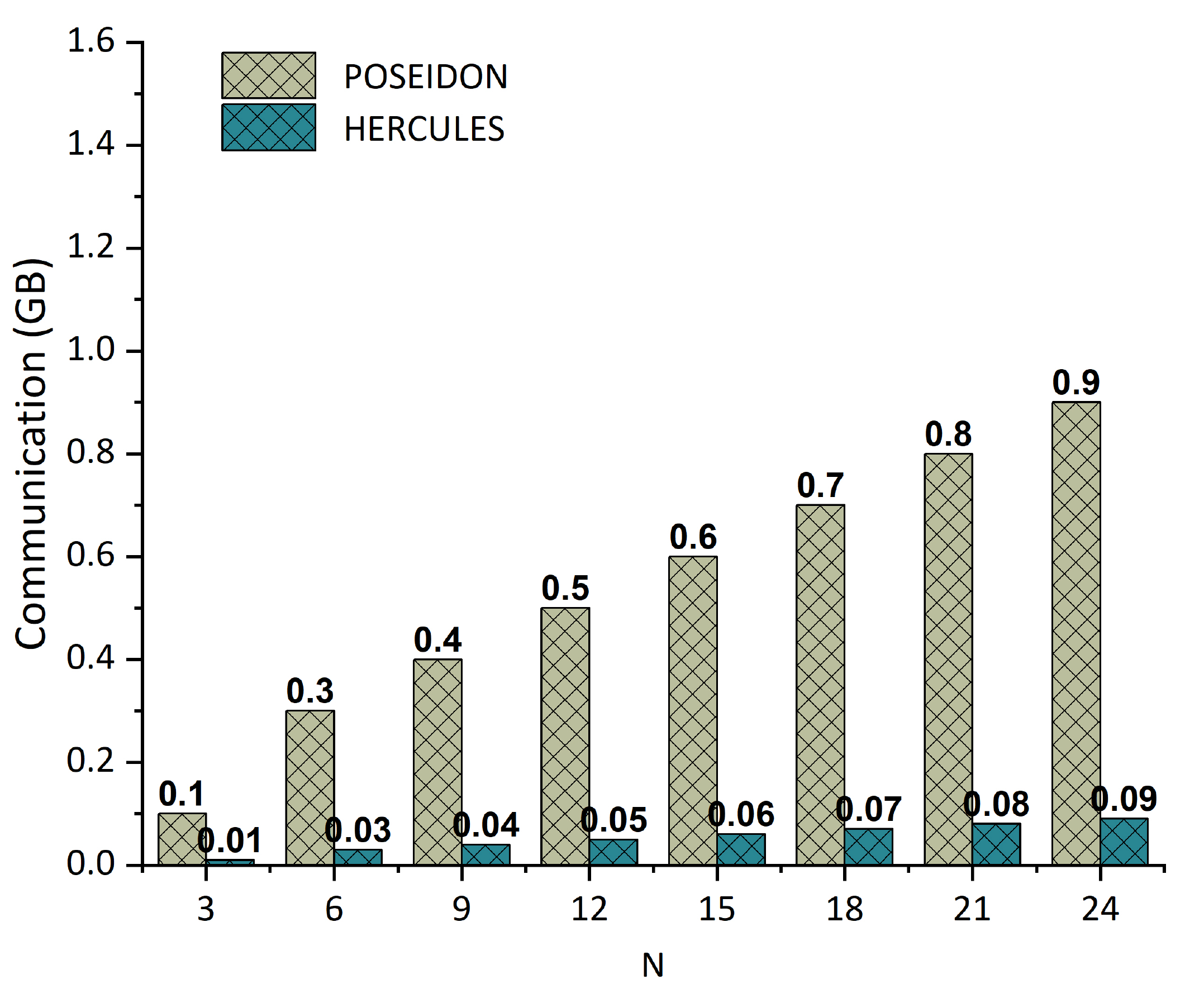}}
 \subfigure[]{\label{5b}\includegraphics[width=0.23\textwidth]{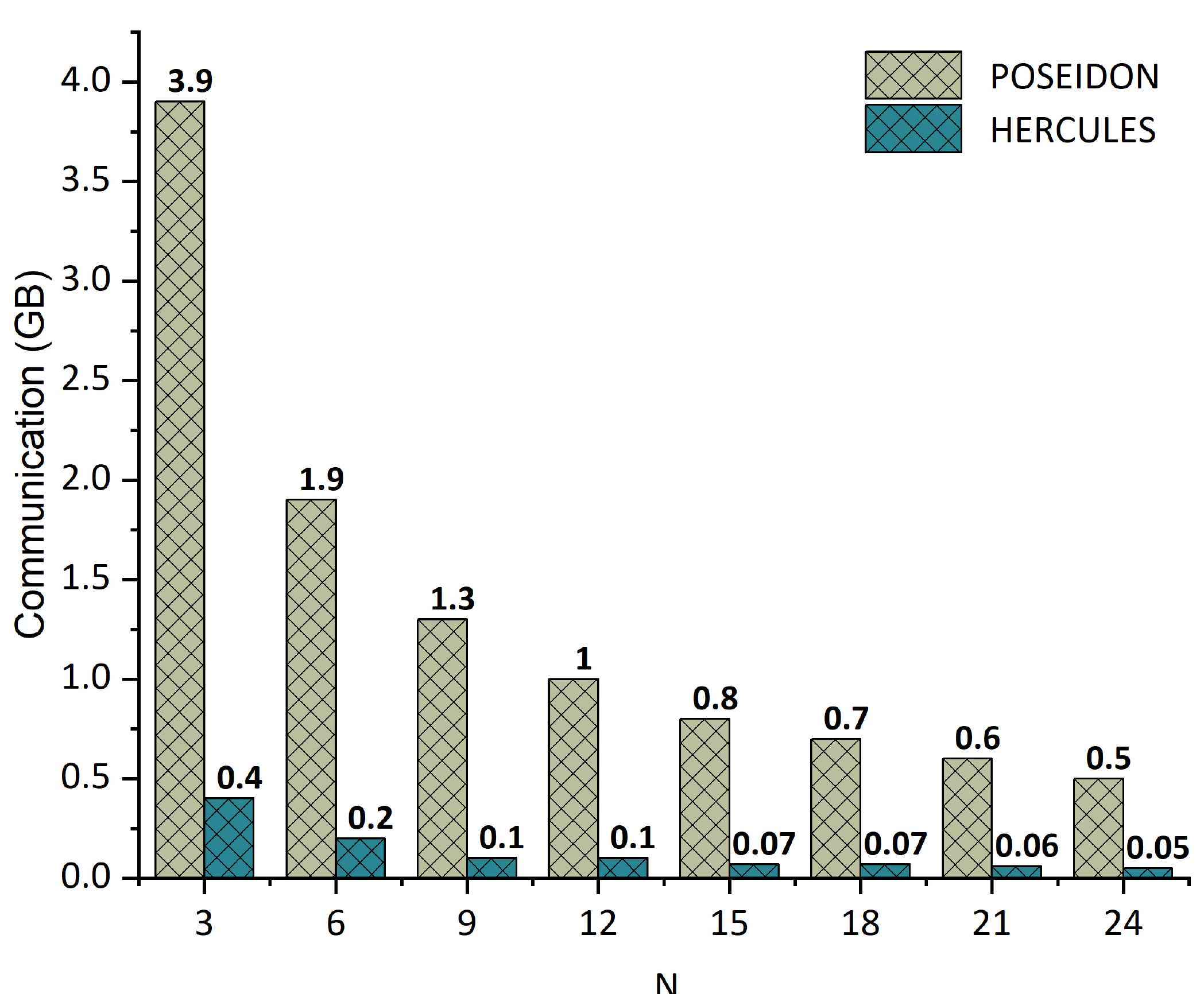}}
  \subfigure[]{\label{5c}\includegraphics[width=0.23\textwidth]{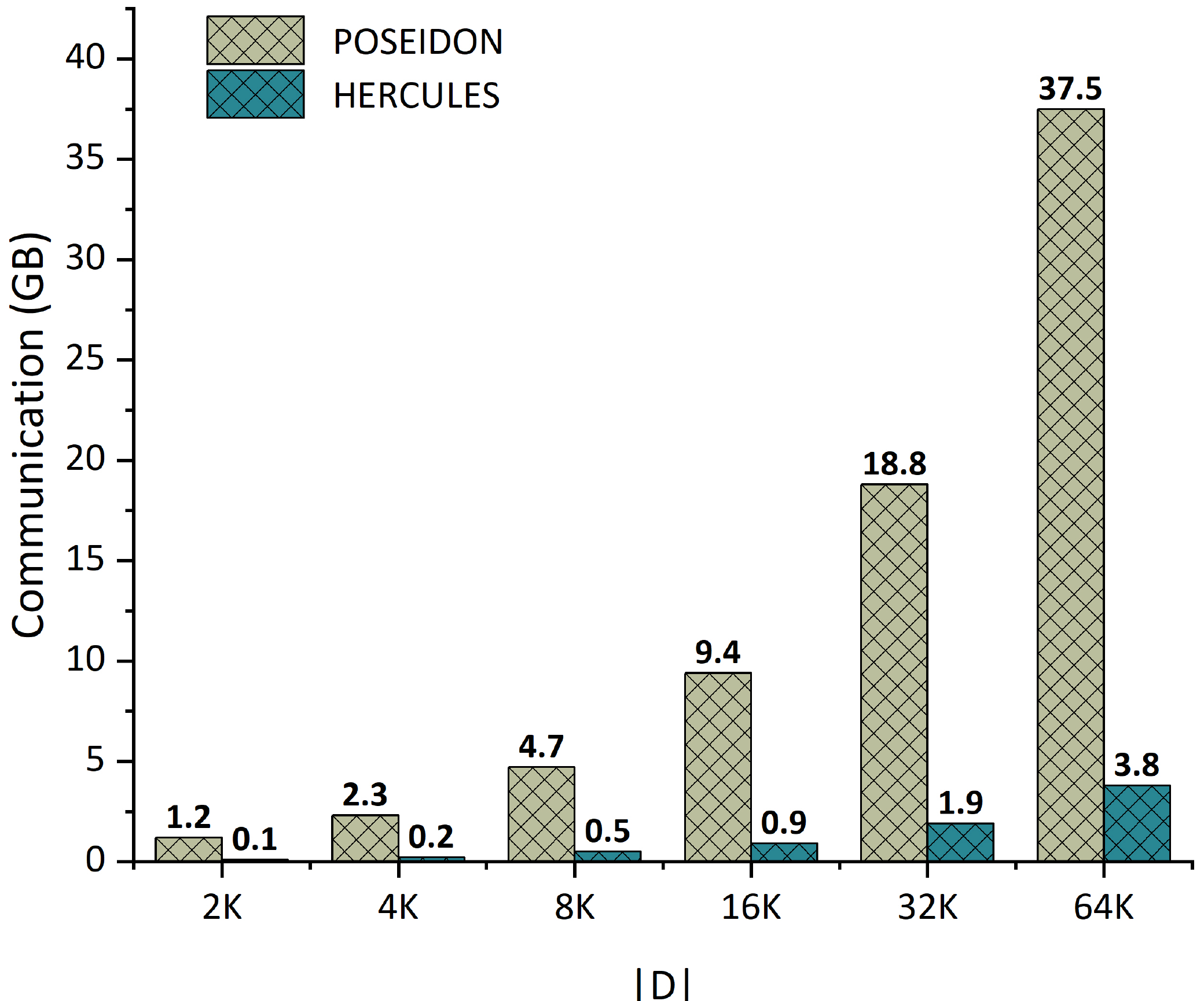}}
  \subfigure[]{\label{5d}\includegraphics[width=0.23\textwidth]{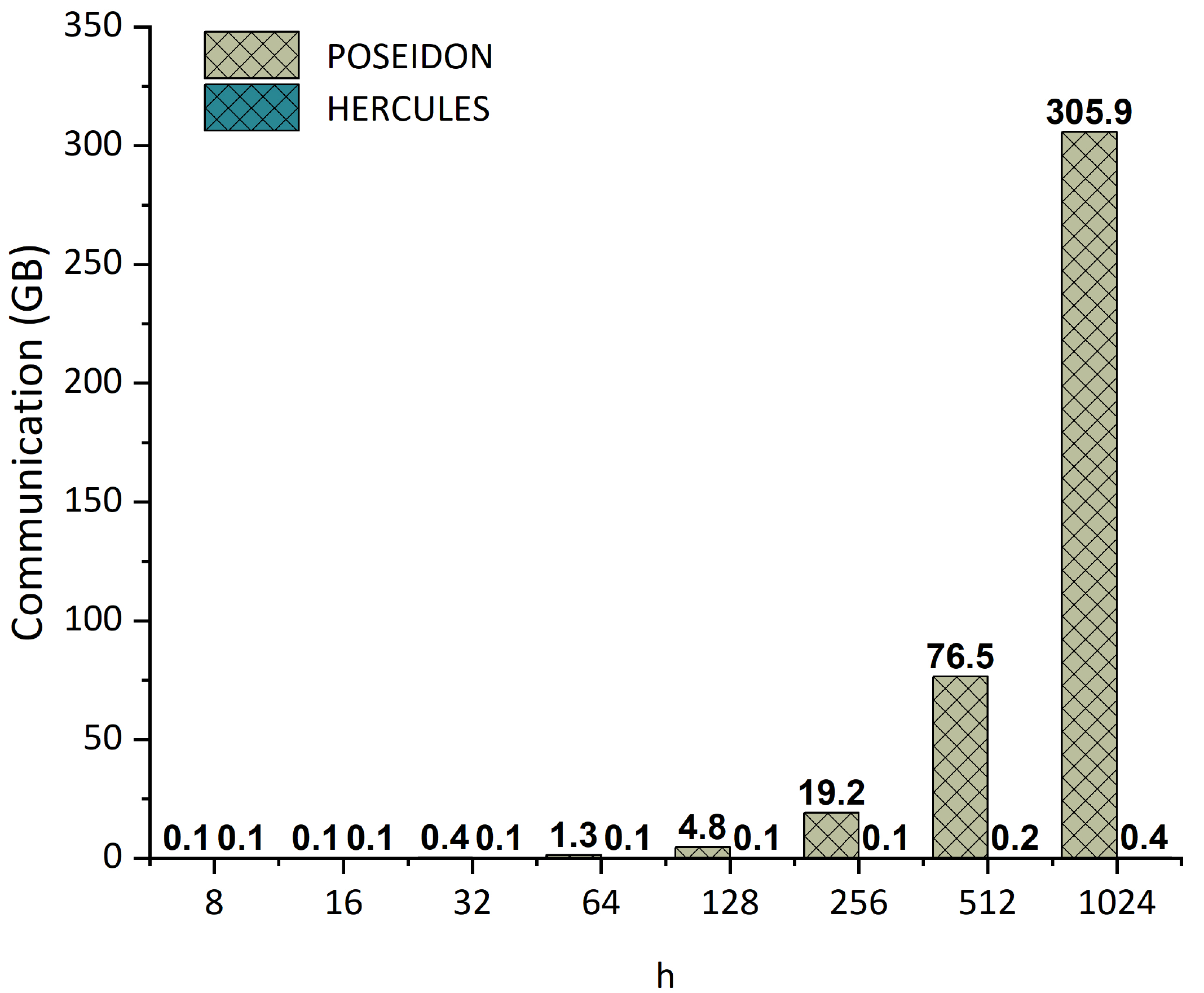}}
  \caption{Evaluation of communication overhead for one training epoch. (a) Increase the number of users $N$ when given  $|D_i|= 200$ for each user $i$.  (b)  Increase the number of users $N$ when the total sample size $|D|=2000$. (c) Increase the total sample size $|D|$ when  given  $N=10$. (d) Increase the dimension of a single sample when given  $N=10$ and $|D|=200\times N$. }
  \label{comm-overhead}
\end{figure*}

We further analyze the scalability of \Name and POSEIDON in terms of the number of users $N$, the number of samples $|D|$, and the number of dimensions $h$ for one sample. Here we use a two-layer architecture with 64 neurons in each layer. The local batch size for each user is 10. Figure~\ref{runtime} shows the experimental results, where we record the execution time of one training epoch, i.e., all the data of each user are processed once.  Specifically, Figure~\ref{4a} shows the execution time as the number of users grows, where we fix the number of samples held by each user as 200, and the dimension of each sample as 64. We can observe that the execution time of \Name and POSEIDON  shows a slight linear increase with the increase of the number of users. This stems from the fact that most of the operations performed by each user are concentrated locally except for the distributed bootstrapping procedure. Obviously, the percentage of $\mathbf{DBootstrap}$ operations over the total operations under ciphertext training is relatively small. We further fix the total number of samples in the system as 2000, and calculate the execution time of each user as the number of users increases. As shown in Figure~\ref{4b},  this causes a linear decrease in execution time since the increase in user data reduces the sample volume held by each user. Given the fixed number of users $N=10$ and $h=64$, Figure~\ref{4c} shows that the execution time of each user increases linearly as $|D|$ increases. It is obvious that the increase in $|D|$ implies an increase in the number of samples in each user. Figure~\ref{4d} also shows similar results under different sample dimensions.

 In general, \Name and POSEIDON show similar relationships in terms of computation cost under different hyper-parameters. However,
 we can observe that the running time of \Name is far less than that of POSEIDON, due to the superiority of our new matrix multiplication method.

\subsection{Communication Overhead}
Tables~\ref{Model accuracy1} and \ref{Model accuracy2} show the total communication overhead required by  \Name and POSEIDON over different datasets. We can observe that during the training process, the ciphertext data that each user needs to exchange with other parties in \Name are much smaller than that of POSEIDON. This also stems from the superiority of the new matrix multiplication method we design. Specifically, In  POSEIDON, AP is used for matrix multiplication to achieve fast SIMD operations. However, as shown in the fourth row of {Protocol 3} in \cite{sav2020poseidon}, this method requires multiple copies and zero padding operations for each row or column of the input matrix, depending on the number of neurons in each hidden layer, the absolute value of the difference between the row or column dimension of the matrix and the number of  neurons in the corresponding hidden layer. In fact, AP is an encoding method that trades redundancy in storage for computation acceleration. On the contrary, our method does not require additional element copy except for a small amount of zero padding in the initial stage to facilitate calculations. Therefore, \Name obviously exhibits smaller communication overhead. For example, given the MNIST dataset, a 3-layer fully connected model with 64 neurons per layer, the communication overhead of each user in  POSEIDON is about 703(GB) to complete 1000 global iterations, while \Name only needs 17.58(GB) per user.

We also analyze the scalability of \Name and POSEIDON in terms of the number of users $N$, the number of samples $|D|$, and the sample dimension $h$. Here we use a two-layer model architecture with 64 neurons in each layer. The local batch size for each user is 10. Figure~\ref{comm-overhead}  shows the experimental results. Similar to the results for computation cost comparison, we can observe that \Name exhibits better scalability compared to POSEIDON under  different hyper-parameters.  In addition, we also show the storage overhead advantage of \Name compared to POSEIDON, and discuss the performance of \Name  compared with other advanced MPC-based solutions.  More details can be found in Table~\ref{Storage overhead}  and Table~\ref{Comparison with existing MPC-based works} in Appendix.

\section{Conclusion}
\label{sec:conclusion}
In this paper, we propose \Name for privacy-preserving FL.  We  design a novel matrix coding technique to accelerate the training performance under ciphertext. Then, we use a novel approximation strategy  to improve the compatibility of \Name for processing non-polynomial functions. Experiments on  benchmark datasets demonstrate the superiority of \Name compared with  existing works. In the future, we will focus on designing more efficient optimization strategies to further reduce the computation overhead of \Name, to make it more suitable for practical applications.

\ifCLASSOPTIONcaptionsoff
\newpage \fi
\balance
\bibliographystyle{IEEEtran}
\bibliography{PPDR}
\clearpage
\appendices
\section*{Appendix}
\setcounter{section}{0}
\section{Security Extensions}
\label{Security Extensions}
\subsection{Active Adversaries} In  \Name, we consider a passive-adversary model with collusion of up to $N-1$ users. However, our work can also be extended to models in the presence of active adversaries. This can be achieved through verifiable computing techniques including zero-knowledge proofs (ZKF) \cite{weng2021mystique} and redundant computation \cite{zheng2019helen}. An intuitive idea is to use ZKF to ensure the format correctness of the ciphertext sent by the adversary, and use redundant calculations such as SPDZ \cite{keller2020mp} to verify the integrity of each party's calculation results. This would come at the cost of an increase in the computation complexity, that will be analyzed as future work.

\subsection{Out-of-the-Scope Attacks} The goal of \Name is to protect the privacy of users' local data and model parameters. However, there are still some attacks that can be launched from the prediction results of the model, e.g., membership inference \cite{shokri2017membership}, model inversion \cite{zhang2020secret}, or model stealing \cite{juuti2019prada}. These attacks can be mitigated by complementary countermeasures that are also easily integrated into \Name. For example, to defense the membership inference attack, on can add a carefully crafted noise vector to a confidence score vector to turn it into an adversarial example that misleads the attacker's classifier \cite{jia2019memguard}. We can also limit the number of prediction
queries for the queriers, thereby reducing the risk of adversaries launching model stealing attacks \cite{juuti2019prada}. We leave the combination of HE-based methods with existing defenses against various types of attacks as future works.

\section{The Baby-Step/giant-Step Algorithm Used in Algorithm~\ref{algorithm 3} }
\label{AP:the baby-step/giant-step algorithm}
 Given an integer $k\in (-h, h)$, we can rewrite $k=\sqrt{h} \cdot i +j$ where $-\sqrt{h}<i<\sqrt {h}$ and $0\leq j<\sqrt{h}$. Therefore,  Eq.(\ref{eq5}) can be parsed as

\begin{small}
 \begin{equation*}
\begin{split}
U^{\mu}\cdot \mathbf{a}&=\sum_{-\sqrt{h}<i<\sqrt{h}, 0\leq j<\sqrt{h}}(\mathbf{u}_{\sqrt{h}\cdot i+j}^{\mu}\odot\mathbf{R}(\mathbf{a},\sqrt{h}\cdot i+j))\\
&=\sum_{-\sqrt{h}<i<\sqrt{h}}\mathbf{R}(\sum_{0\leq j< \sqrt{h}}\mathbf{a}_{i,j}, \sqrt{h}\cdot i)
\end{split}
\end{equation*}
\end{small}
where $\mathbf{a}_{i,j}=\mathbf{R}(\mathbf{u}^{\mu}_{\sqrt{h}\cdot i+j}, -\sqrt{h}\cdot i)\odot \mathbf{R}(\mathbf{a}, j)$. Based on this, we can first compute $\mathbf{R}(\mathbf{a}, j)$ for $0\leq j< \sqrt{h}$, and then use them to compute the encryption of $\mathbf{a}_{i,j}$. In general, Step 1-1 requires homomorphic operations of $2h$ additions, $2h$ constant multiplications, and $3\sqrt{h}$ rotations. Similarly, Step 1-2 can be completed by  $h$ additions, $h$ constant multiplications, and $2\sqrt{h}$ rotations.

The number of constant multiplications required in Step 2 can also be reduced by exploiting two-input multiplexers. We can observe that

\begin{small}
 \begin{equation*}
\begin{split}
&\mathbf{Mul}_{pt}(\mathbf{Rot}([\mathbf{A}^{(0)}]_{pk}, k-h), \mathbf{v}_{k-h})\\
&=\mathbf{Rot}(\mathbf{Mul}_{pt}([\mathbf{A}^{(0)}]_{pk}, \mathbf{R}(\mathbf{v}_{k-h}, h-k)), k-h)\\
&=\mathbf{Rot}([\mathbf{A}^{(0)}]_{pk}-\mathbf{Mul}_{pt}([\mathbf{A}^{(0)}]_{pk}, \mathbf{R}(\mathbf{v}_{k}, -k)), k-h)
\end{split}
\end{equation*}
\end{small}

Then, we can first  compute $\mathbf{Mul}_{pt}([\mathbf{A}^{(0)}]_{pk}, \mathbf{R}(\mathbf{v}_{k}, -k))$ for each $1\leq k<h$. Based on the fact that

 \begin{small}
 \begin{equation*}
\begin{split}
\mathbf{Mul}_{pt}(\mathbf{Rot}([\mathbf{A}^{(0)}]_{pk}, k), \mathbf{v}_{k})=\mathbf{Rot}(\mathbf{Mul}_{pt}([\mathbf{A}^{(0)}]_{pk},  \mathbf{R}(\mathbf{v}_{k}, -k)), k)
\end{split}
\end{equation*}
\end{small}
we can get the ciphertext  $[\mathbf{A}^{(k)}]_{pk}$ with addition and rotation operations.

\section{Matrix Transposition on Packed Ciphertexts}
\label{APP:Matrix Transposition on Packed Ciphertexts}
In this section, we introduce how to homomorphically transpose a matrix under the packed ciphertext. Let $\mathbf{a}=\iota^{-1}(A)\in R^{n}$ indicate the vector representation of a matrix $A$, $U^T$ is defined as the matrix representation of the transpose map $A\mapsto A^T$ on $R^{h\times h} \cong R^{n}$.  Then, for $0\leq i, j< h$, each element in $U^T$ can be expressed as

\begin{small}
$$ {U}_{h\cdot i+j, k}^{T}=\left\{
\begin{aligned}
&1 \quad \mathsf{if}\; k=h\cdot j +i;\\
& 0 \quad \mathsf{otherwise}.\\
\end{aligned}
\right.
$$
\end{small}
 Hence, the $k$-th diagonal vector of $U^T$ is nonzero if and only if $k=(h-1)\cdot i$ for some $i\in \mathbb{Z} \cap (-h, h)$. As a result, $U^T$ has a total of $(2h-1)$ nonzero diagonal vectors as a sparse matrix. Therefore, linear transformation for matrix transpose can be expressed as

 \begin{small}
 \begin{equation}
\begin{split}
U^{T}\cdot \mathbf{a}=\sum_{-h<i<h}(\mathbf{t}_{(h-1)\cdot i}\odot\mathbf{R}(\mathbf{a},(h-1)\cdot i))
\end{split}
\end{equation}
\end{small}
where we use $\mathbf{t}_{(h-1)\cdot i}$ to represent the nonzero diagonal vector of $U^T$. When $i\geq 0$, the $t$-th component of the vector  $\mathbf{t}_{(h-1)\cdot i}$ is computed by

\begin{small}
$$ \mathbf{t}_{(h-1)\cdot i}[t]=\left\{
\begin{aligned}
&1 \quad \mathsf{if}\; t-i= (h+1)\cdot j, 0\leq j< h-i;\\
& 0 \quad \mathsf{otherwise}.\\
\end{aligned}
\right.
$$
\end{small}
If $i\leq 0$, we have
\begin{small}
$$ \mathbf{t}_{(h-1)\cdot i}[t]=\left\{
\begin{aligned}
&1 \quad \mathsf{if}\; t+i= (h+1)\cdot j, 0\leq j< h+i;\\
& 0 \quad \mathsf{otherwise}.\\
\end{aligned}
\right.
$$
\end{small}
In general, the computation cost required for matrix transposition is about $2h$ rotations, which can be further reduced to $3\sqrt{h}$ by the previously described baby-step/giant-step method.

\section{Rectangular Matrix Multiplication on Packed Ciphertexts}
\label{APP:Rectangular Matrix Multiplication on Packed Ciphertexts}
We extend the homomorphic multiplication between square matrices to general rectangular matrices, such as $R^{t\times h}\times R^{h\times h}\rightarrow R^{t\times h}$ or $R^{h\times h}\times R^{h\times t}\rightarrow R^{h\times t}$. Without loss of generality, we consider $t<h$.  For a $(h_1\times h)$-dimensional matrix $A_1$ and a $(h_2\times h)$-dimensional matrix $A_2$, we use $(A_1; A_2)$ to represent the $(h_1+h_2)\times h$  matrix which is obtained by concatenating $A_1$ and $A_2$ in a vertical direction. Similarly, $(A_1|A_2)$ indicates the matrix concatenated in a horizontal direction if matrices $A_1$ and $A_2$ have the same number of rows.

A naive solution to perform multiplication between rectangular matrices is to convert any matrix into a square matrix through zero padding, and then use \textbf{Algorithm}~\ref{algorithm 3} to implement the multiplication between square matrices homomorphically. This leads to a rotation with a complexity of $O(h)$. We provide an improved method through insight into the property of matrix permutation.\\
\circled{1}\textbf{Refined Rectangular Matrix Multiplication}. Given a $t\times h$ matrix $A$ and an $h\times h$ matrix $B$ with $t$ divide $h$, we introduce a new symbol $C_{t_1: t_2}$,  which is denoted  the $(t_2-t_1)\times h$ submatrix of $C$ formed by extracting from $t_1$-th row to the $(t_2-1)$-th row of $C$.  As a result, the product of $AB$ can be expressed as below.

\begin{small}
 \begin{equation}
\begin{split}
A\cdot B&=\sum_{k=0}^{h-1}(\phi^{k}\circ \mu(A) )\odot\left((\pi^{k}\circ \zeta (B))_{0: t}\right)\\
        &=\sum_{0\leq i< t}\sum_{0\leq j<h/t} (\phi^{j\cdot t+i}\circ \mu(A))\odot\left((\pi^{j\cdot t+i}\circ \zeta (B))_{0: t}\right)
\end{split}
\end{equation}
\end{small}
Our key observation is the following lemma, which provides us with ideas for designing fast rectangular matrix multiplication.
\begin{lemma}
\label{lemma1}
Permutation $\phi$ and $\mu$ are commutative.  For $k>0$, we have  $\phi^{k} \circ \mu= \mu \circ\phi^{k}$. Also,   $\pi^{k}\circ \zeta =\zeta\circ \pi^{k}$ for $k>0$.
\end{lemma}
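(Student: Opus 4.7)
The plan is to prove both commutativity relations by direct entry-wise computation, verifying that the composition of permutations in either order produces the same matrix index. Since all four permutations are defined purely by index shifts modulo $h$, and the relevant shifts act on independent coordinates (row vs.\ column), the commutativity should reduce to a simple check that the two-step index transformations coincide.

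First I would unfold $(\phi^{k}\circ\mu)(A)_{i,j}$. By iterating the definition of $\phi$, we have $\phi^{k}(B)_{i,j}=B_{i,j+k}$ for any matrix $B$, so $(\phi^{k}\circ\mu)(A)_{i,j}=\mu(A)_{i,j+k}=A_{i,\,i+(j+k)}=A_{i,\,i+j+k}$. Then I would unfold $(\mu\circ\phi^{k})(A)_{i,j}=\mu(\phi^{k}(A))_{i,j}=\phi^{k}(A)_{i,\,i+j}=A_{i,\,(i+j)+k}=A_{i,\,i+j+k}$. Since the two expressions agree for all $(i,j)$ with indices taken modulo $h$, the identity $\phi^{k}\circ\mu=\mu\circ\phi^{k}$ follows.

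Next I would carry out the analogous computation for $\pi$ and $\zeta$. Using $\pi^{k}(B)_{i,j}=B_{i+k,j}$, we get $(\pi^{k}\circ\zeta)(A)_{i,j}=\zeta(A)_{i+k,\,j}=A_{(i+k)+j,\,j}=A_{i+j+k,\,j}$, while $(\zeta\circ\pi^{k})(A)_{i,j}=\zeta(\pi^{k}(A))_{i,j}=\pi^{k}(A)_{i+j,\,j}=A_{(i+j)+k,\,j}=A_{i+j+k,\,j}$. Both sides coincide, so $\pi^{k}\circ\zeta=\zeta\circ\pi^{k}$.

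There is no real obstacle here; the only subtlety is bookkeeping about which coordinate each permutation touches. The key conceptual point, worth stating explicitly, is that $\mu$ (resp.\ $\zeta$) only shifts column (resp.\ row) indices by an amount depending on the \emph{row} (resp.\ column), while $\phi^{k}$ (resp.\ $\pi^{k}$) performs a uniform shift of column (resp.\ row) indices by the constant $k$. Because the two shifts affect the same coordinate but one of them is independent of position, their composition is associative in the integers modulo $h$ and therefore commutes. I would note this observation briefly to make clear why the lemma is used later to save rotations when evaluating $\phi^{k}\circ\mu(A)$ from a precomputed encryption of $\mu(A)$, and similarly for $\pi^{k}\circ\zeta(B)$.
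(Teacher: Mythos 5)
Your proof is correct: the entry-wise computations $(\phi^{k}\circ\mu)(A)_{i,j}=A_{i,\,i+j+k}=(\mu\circ\phi^{k})(A)_{i,j}$ and $(\pi^{k}\circ\zeta)(A)_{i,j}=A_{i+j+k,\,j}=(\zeta\circ\pi^{k})(A)_{i,j}$ check out against the definitions $\mu(A)_{i,j}=A_{i,i+j}$, $\zeta(A)_{i,j}=A_{i+j,j}$, $\phi(A)_{i,j}=A_{i,j+1}$, $\pi(A)_{i,j}=A_{i+1,j}$ with all indices taken modulo $h$. The paper actually states this lemma without proof, so there is nothing to compare against; your direct index verification (resting on the facts that $\phi^k$ and $\mu$ both act only on the column index, $\pi^k$ and $\zeta$ only on the row index, and addition in $\mathbb{Z}_h$ is commutative) is exactly the argument one would supply.
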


Based on  Lemma~\ref{lemma1}, we define  an $h\times h$-dimensional matrix $\tilde {A}$, which contains $h/t$ copies of $A$ from the vertical direction (i.e., $\tilde {A}=(A; \cdots ; A)$). It means that

\begin{small}
 \begin{equation}
\begin{split}
(\phi^{i}\circ \mu(\tilde{A}))_{j\cdot t: (j+1)\cdot t}&=(\phi^{i}\circ \mu(\tilde{A})_{j\cdot t: (j+1)\cdot t})\\
                                                       &=\phi^{i}\circ \mu \circ \phi^{j\cdot t}(A)\\
                                                       &=\phi^{j\cdot t+i}\circ \mu(A)
\end{split}
\end{equation}
\end{small}
 Then, we can further compute
\begin{small}
 \begin{equation}
\begin{split}
(\pi^{i}\circ \zeta({B}))_{j\cdot t: (j+1)\cdot t}=(\pi^{j\cdot t+i}\circ \zeta(B))_{0:t}
\end{split}
\end{equation}
\end{small}
As a result, the matrix product $AB$ can be rewritten as below.
\begin{small}
 \begin{equation*}
\begin{split}
A\cdot B=\sum_{0\leq j< h/t}\left(\sum_{0\leq i<t} (\phi^{i}\circ \mu(\tilde{A}))\odot(\pi^{i}\circ \zeta({B}))\right)_{j\cdot t: (j+1)\cdot t}
\end{split}
\end{equation*}
\end{small}\\
\circled{2}\textbf{Homomorphic Rectangular Matrix Multiplication}. Given two ciphertexts $[\mathbf{\tilde{A}}]_{pk}$ and $[\mathbf{{B}}]_{pk}$,  we can first compute $\mu{(\tilde{A})}$ and  $\zeta({B})$ utilizing the baay-step/giant-step approach. Then, $\sum_{0\leq i<t} (\phi^{i}\circ \mu(\tilde{A}))\odot(\pi^{i}\circ \zeta({B}))$ can be securely computed  in a similar way to \textbf{Algorithm} \ref{algorithm 3}.
Finally, we get the encryption of ${A} B$  by performing aggregation and rotations  operations. The details are shown in  \textbf{Algorithm}~\ref{Complexity of Algorithm 41}.

 \begin{algorithm}
\caption{ \small{Homomorphic Rectangular Matrix  Multiplication} }
\label{algorithm 41}
\begin{algorithmic}[1]
\footnotesize
\REQUIRE \texttt{HE-RMatMult} $([\mathbf{\tilde{A}}]_{pk}, [\mathbf{B}]_{pk})$
\STATE  $[\mathbf{A}^{(0)}]_{pk}\leftarrow $ \texttt{HE-LinTrans} $([\mathbf{\tilde{A}}]_{pk}, U^{\mu})$
\STATE $[\mathbf{B}^{(0)}]_{pk}\leftarrow $ \texttt{HE-LinTrans} $([\mathbf{A}]_{pk}, U^{\zeta})$
\FOR {$k=1$ to $t-1$}
\STATE   $[\mathbf{A}^{(k)}]_{pk}\leftarrow $ \texttt{HE-LinTrans} $([\mathbf{A}^{(0)}]_{pk}, V^{k})$
\STATE   $[\mathbf{B}^{(k)}]_{pk}\leftarrow $ \texttt{HE-LinTrans} $([\mathbf{B}^{(0)}]_{pk}, P^{k})$
\ENDFOR
\STATE   $[\mathbf{\tilde{A}B}]_{pk} \leftarrow \mathbf{Mul}_{ct}([\mathbf{A}^{(0)}]_{pk}, [\mathbf{B}^{(0)}]_{pk})$
\FOR {$k=1$ to $t-1$}
\STATE $[\mathbf{\tilde{A}B}]_{pk}\leftarrow \mathbf{Add} ([\mathbf{\tilde{A}B}]_{pk}, \mathbf{Mul}_{ct}([\mathbf{A}^{(k)}]_{pk}, [\mathbf{B}^{(k)}]_{pk}))$
\ENDFOR
\STATE  $[\mathbf{AB}]_{pk}\leftarrow [\mathbf{\tilde{A}B}]_{pk}$
\FOR {$k=0$ to $\log(h/t)-1$}
\STATE  $[\mathbf{AB}]_{pk}\leftarrow \mathbf{Add} ([\mathbf{{A}B}]_{pk}, \mathbf{Rot}([\mathbf{{A}B}]_{pk}, t\cdot h\cdot 2^k))$
\ENDFOR
\STATE  \small{\textbf{return}}  $[\mathbf{AB}]_{pk}$
\end{algorithmic}
\end{algorithm}

\begin{table}[H]
\centering
\small
\caption{Complexity of Algorithm 6}
\label{Complexity of Algorithm 41}
\begin{tabular}{|c|c|c|c|c|}
\hline
\textbf{Step}&$Add$& $mul_{pt}$ &$Rot$ & $mul_{ct}$\\
\hline
\textbf{1}&$3h$&$3h$&$5\sqrt{h}$&-\\
\hline
\textbf{2}&$t$&$2t$&$3t$&-\\
\hline
\textbf{3}&$t$&-&-&$t$\\
\hline
\textbf{4}&$\log(h/t)$&-&$\log(h/t)$&-\\
\hline
\textbf{Total}&$3h+2t$&$4h$&$3h+5\sqrt{h}$&$t$\\
\hline
\end{tabular}
\end{table}
\textbf{Table}~\ref{Complexity of Algorithm 41} presents the total complexity of \textbf{Algorithm}~\ref{algorithm 41}. We observe that compared with \textbf{Algorithm}~\ref{algorithm 3}, it reduces the complexity of the rotation operations of steps 2 and 3 to $O(t)$, although additional  operation is required for step 4. Moreover, the final result $[\mathbf{AB}]_{pk}$ is a ciphertext of a $h\times h$-dimensional matrix containing $(h/t)$ copies of the expected matrix product $AB$ in a vertical direction. Hence, the resulting ciphertext remains the format as a  rectangular matrix, and it can be further performed matrix calculations without additional overhead.

\section{ Parallel  Matrix Computation}
\label{APP:Parallel  Matrix Computation}
The above matrix operations are all performed in the message space $R^{n}$, where we assume that $n=h^2$.  Actually, most HE schemes have a lot of plaintext slots (up to thousands) compared to the dimension of the matrix in deep learning, that is, usually $n\gg h^2$. Hence, most plaintext slots will be wasted if a ciphertext is only used to store a matrix. We provide a method of encrypting multiple matrices into a ciphertext, so as to realize parallel matrix calculation in a SIMD manner. Specifically, we assume that $n$ is divisible by $h$ and let  $\beta=n/h^2$. Then the encoding map described in above for singe matrix  will be modified as  $\iota_\beta: R^{n}\rightarrow (R^{h\times h})^{\beta}$. For an input vector $\mathbf{a}=(a_t)_{0\leq t<n}$, $\iota_\beta$ is defined as below.

\begin{small}
 \begin{equation}
\begin{split}
\iota_\beta : \mathbf{a}\mapsto \left(A_k=(a_{\beta (h\cdot i+j)+k})\right)_{0\leq k<\beta}.
\end{split}
\end{equation}
\end{small}
The components of $\mathbf{a}$ with indexes congruent to $k$ modulo $\beta$ are corresponding to the $k$-th matrix $A_k$.
We observe that for an integer $0\leq t<h^2$, the rotation operation $\mathbf{R}(\mathbf{a}, \beta t)$ represents the matrix-wise rotation  by $\beta$ positions. This can be naturally extended to other operations including scalar linear conversion and matrix multiplication. For example, when a single ciphertext is embedded in $\beta$ ($h\times h$)-dimensional matrices, we can immediately perform matrix multiplication between  $\beta$ pairs matrices by using the previously described algorithm on two ciphertexts. This total computation complexity is consistent with \textbf{Algorithm}~\ref{algorithm 3}, but results in a less amortized computation complexity of $O(h/\beta)$ for each matrix.

\section{Proof of Lemma ~\ref{lemma2}}
\label{AP:pof LE3}
\begin{proof}
We use induction to prove it. First, it is true for n=1.  Assume that $p_d=\frac{2d+1}{4^d}\begin{pmatrix}  2d \\  d \end{pmatrix}$ for some $d\leq 1$, Hence, we have

 \begin{small}
 \begin{equation*}
 \begin{split}
p_{d+1}&=p_d+\frac{1}{4^{d+1}}\begin{pmatrix}  2d+2 \\  d+1 \end{pmatrix}\\
       & =\frac{1}{4^{d+1}}\left(\frac{2(2d+2)!}{(d+1)!d!}+\frac{(2d+2)!}{(d+1)!(d+1)!}\right)\\
       &=\frac{2d+3}{4^{d+1}}\begin{pmatrix}  2d+2 \\  d+1 \end{pmatrix}
\end{split}
\end{equation*}
\end{small}

Hence, the lemma is proved by induction.
\end{proof}

\section{Proof of Lemma ~\ref{lemma3}}
\label{AP:pof LE4}

\begin{proof}
Obviously,  $g_d(m)\leq g_d(1)=1$ for $m\in[0, 1]$.  We define $G(m)=(1-m)^{p_d}-(1-g_d(m))$, and then we prove that $G(m)\geq 0$ for $m\in [0, 1]$ by showing \\
1. $G(0)=G(1)=0$.\\
2. there exists $m_0\in (0, 1)$ s.t. $G(m_0)>0$.\\
3. there exists a unique $y_0\in (0, 1)$ s.t. $G'(y_0)=0$.

We first explain that the above three conditions are derived from $G(m)\geq 0$.  Specifically, if there is a point $m\in (0, 1)$ such that $G(m_1)<0$, then according to the continuity of $G$, there is a root $m_2$ of the function $G$ between $m_0$ and $m_1$. From the mean value theorem, it is obvious that there exist $y_1\in (0, m_2)$ and $y_2\in (x_2, 1)$ satisfying $G'(y_1)=G'(y_2)=0$, which contradicts the third condition. We begin to prove these three conditions. The first one is trivial. For the second condition, we observe that $G(0)=0$, $G'(0)=0$ and $G''(0)>0$. We can infer that $G'(m)>0$ for $m\in (0, \epsilon)$ for some $\epsilon>0$, based on the continuity of $G''$. Further, since $G(0)=0$, we have $G(m)>0$ for $m\in (0, \epsilon)$ which implies the second condition.

To prove uniqueness, let $G'(m)=p_d(1-m^2)^d-p_d(1-m)^{p_d-1}=0$. Then we have $(1-m)^{d-p_d+1}\cdot (1+m)^d=1$ for $m\in (0, 1)$. Logarithmically, it holds that

 \begin{small}
 \begin{equation*}
 \begin{split}
\frac{\log(1+m)}{\log (1-m)}=-\frac{d-p_d+1}{d}
\end{split}
\end{equation*}
\end{small}
Since $\frac{\log(1+m)}{\log (1-m)}$ is a strictly increasing function, this is the unique $y_0\in (0,1)$ that satisfies $G'(y_0)=0$.
\end{proof}
\section{Proof of Lemma ~\ref{lemma4}}
\label{AP:pof LE5}

\begin{proof}
Let $y=1-m$, and define
\begin{small}
 \begin{equation*}
 \begin{split}
H(y)=\frac{p_d\cdot 2^d}{d+1}\cdot y^{d+1}-(1-g_d(1-y))
\end{split}
\end{equation*}
\end{small}
Then $H(y)'=p_d\cdot 2^d \cdot y^d -g_d'(1-y)=p_d\cdot 2^d\cdot y^d-p_d\cdot y^d(2-y)^d\geq 0$ for $y\in [0, 1]$.  Based on $H(0)=0$, it holds that $H(y)\geq 0$. Hence, for $m\in [0, 1]$, we have
\begin{small}
 \begin{equation*}
 \begin{split}
1-g_d(m)\leq \frac{p_d\cdot 2^n}{d+1}\cdot (1-m)^{d+1}\leq 2^d\cdot (1-m)^{d+1}
\end{split}
\end{equation*}
\end{small}
\end{proof}

\section{Experimental comparisons with work \cite{cheon2019numerical}}
\label{Experimental comparisons}
\begin{table}
\centering
\small
\caption{ Comparisons with work \cite{cheon2019numerical}}
\label{comparisons with work116}
\begin{tabular}{|c|c|c|}
\Xhline{1pt}
$\sigma$& Work \cite{cheon2019numerical} &\textbf{Ours}\\ \hline
$8$& $208$ s ($50.78$ ms) & $21$ s ($5.12$ ms)\\ \hline
$10$& $307$ s ($74.9$ ms) & $27$ s ($6.59$ ms)\\ \hline
$12$& $532$ s ($129.8$ ms) & $37$ s ($9.03$ ms)\\ \hline
$14$& $823$ s ($200.9$ ms) & $51$ s ($12.45$ ms)\\ \hline
$16$& $1392$ s ($339.8$ ms) & $70$ s ($17.08$ ms)\\ \hline
$18$& $1930$ s ($471.11$ ms) & $76$ s ($18.5$ ms)\\ \hline
$20$& $2740$ s ($668.9$ ms) &$84$ s ($20.05$ ms)\\ \Xhline{1pt}
\end{tabular}
\end{table}
TABLE~\ref{comparisons with work116} shows the running times of  work \cite{cheon2019numerical} and our method for approximation of the sign function, where $\delta$  is used to represent the error bound, i.e.,  $2^{-\sigma}$ means that the value error of the original function and the approximated function is  small than  $2^{-\sigma}$, $\mathcal{N}=2^{14}$ and  $Q_\mathcal{L}=2^{2250}$. We observe that the running time of  our method is much smaller than that of \cite{cheon2019numerical}. For example, the amortized running time  of \cite{cheon2019numerical} to obtain the approximate result of sign function within $2^{-20}$ errors  is 668.9 milliseconds (amortized running time),  while the running time of our method is only 20.05 milliseconds,  which is $33$ times faster than \cite{cheon2019numerical}.

\section{Details Implementation of Hercules}
\label{AP:Details Implementation of Hercules}
Figure~\ref{Detailed description of the scheme}  presents the details of implementing \Name, which essentially executes \textbf{Algorithm}~\ref{algorithm 1} under the ciphertext.   Note that we assume that all matrices involved in training have the same dimensions for simplicity. In actual implementation, the user can adaptively select the dimension so as to use \textbf{Algorithm} ~\ref{algorithm 3} or \ref{algorithm 41} to obtain the ciphertext result. $\mathbf{RS}(\cdot)$ is used to resulting ciphertext after each multiplication. This means that for the ciphertext with an initial layer number of $\mathcal{L}$, the maximum depth of $\mathcal{L}$ ciphertext multiplication can be evaluated.  In practical applications, assuming $Q_\mathcal{L}/\Delta=r$, the ciphertext can be rescaled after $r$ multiplications instead of after each multiplication. On the other hand, Figure~\ref{Detailed description of the scheme} only requires each user to perform the distributed bootstrapping function (i.e.,  $\mathbf{DBootstrap}$)  when calculating the activation function, because this process usually requires more circuit depth. Similarly, in practice, users can adaptively  execute   $\mathbf{DBootstrap}$  as long as the depth of the circuit is about to be exhausted.

\renewcommand\tablename{Fig.}
\renewcommand \thetable{\arabic{table}}
\setcounter{table}{4}

\begin{table*}[!htb]
\small
\centering
\begin{tabular}{|p{17.5cm}|}
\hline\\
\qquad \qquad \qquad \qquad \qquad \qquad \qquad \qquad \qquad \qquad \quad \qquad {Implementation of Hercules} \\
\begin{itemize}
\item \textbf{Prepare}:
  \begin{itemize}
    \item[-] This process is exactly the same as described in \textbf{Algorithm}~\ref{algorithm 6}.
\end{itemize}
\item[] For $k=0$ to $k=H-1$, $\mathcal{C}$ and all users perform the following operations in concert.
  \item \textbf{Feedforward}:
  \begin{itemize}
    \item[-] Each user ${P}_i$ computes  $\hat{X_i}$, $\hat{Y_i}$ by the function $\mathbf{Ecd}(\cdot)$ and encrypts them as $[\mathbf{X_i}]_{pk}$ and $[\mathbf{Y_i}]_{pk}$.
    \item[] For $j=1$ to $j=\mathbb{L}$, each user ${P}_i$ performs following operations.
  \begin{itemize}
     \item [-]  Compute $[\mathbf{E_{j, i}^{k}}]_{pk}=$\texttt{HE-MatMult} $(\mathbf{[\omega_{j}^{k}}]_{pk}, [\mathbf{M_{j-1, i}^{k}}]_{pk})$, where  $E_{j, i}^{k}={\omega}_{j}^{k}\times M_{j-1, i}^{k}$, $\mathbf{RS}([\mathbf{E_{j, i}^{k}}]_{pk})$.
    \item[-]Compute $[\mathbf{M_{j, i}^{k}}]_{pk}$, where $M_{j, i}^{k}={\varphi(E_{j, i}^{k})}$, $\mathbf{RS}([\mathbf{M_{j, i}^{k}}]_{pk})$.  $\varphi(\cdot)$ is approximated as a polynomial in advance.
    \item[-] All users collaboratively compute $\mathbf{DBootstrap}([\mathbf{M_{j, i}^{k}}]_{pk}, \mathcal{L}_{[\mathbf{M_{j, i}^{k}}]_{pk}}, \Delta_{[\mathbf{M_{j, i}^{k}}]_{pk}}, \{sk_i\})$.
  \end{itemize}
  \end{itemize}
\item \textbf{Backpropagation}:
 \begin{itemize}
    \item[-] Each $P_i$ computes  $[\mathbf{L_{\mathbb{L}, i}^{k}}']_{pk}=\mathbf{Sub}([\mathbf{Y_i}]_{pk}, \mathbf{ [M_{\mathbb{L},i}^{k}}]_{pk})$, $[\mathbf{L_{\mathbb{L}, i}^{k}}]_{pk}=\mathbf{Mul}_{ct}([L_{\mathbb{L}, i}^{k}]_{pk}, [L_{\mathbb{L}, i}^{k}]_{pk})$, where $L_\mathbb{L}^{k}=||y[t]-M_\mathbb{L}^{k}||_{2}$.
    \item[-]Each $P_i$ computes $[\mathbf{L_{\mathbb{L},i}^{k}}]_{pk}$ with basic SIMD operations, where $L_\mathbb{L}^{k}=\varphi'(E_\mathbb{L}^{k})\odot L_\mathbb{L}^{k}$.  Then, computes $\mathbf{RS}([\mathbf{L_{\mathbb{L},i}^{k}}]_{pk})$.
    \item[-]  Each $P_i$ computes $[\mathbf{(M_{\mathbb{L}-1, i}^{k})}^{T}]_{pk}= \texttt{HE-LinTrans} ({\mathbf{[M_{\mathbb{L}-1, i}^{k}]}_{pk}}, U^{(\mathbb{L}-1)})$, where $U^{(\mathbb{L}-1)}$  is the  permutation representation matrix of ${(\mathbf{M_{\mathbb{L}-1, i}^{k})}}^{T}$.
    \item[-] Each $P_i$ computes $[\bigtriangledown \mathbf{{\omega}_{\mathbb{L}, i}^{k}}']_{pk}=$\texttt{HE-MatMult} $([\mathbf{(M_{\mathbb{L}-1, i}^{k})}^{T}]_{pk}, [\mathbf{{L}_{\mathbb{L}, i}^{k}}]_{pk})$. Then, computes $\mathbf{RS}([\bigtriangledown \mathbf{{\omega}_{\mathbb{L}, i}^{k}}']_{pk})$.
    \item[-]  Each $P_i$ computes $[\bigtriangledown \mathbf{{\omega}_{\mathbb{L}, i}^{k}}]_{pk}=\mathbf{Add}([\bigtriangledown \mathbf{{\omega}_{\mathbb{L}, i}^{k}}']_{pk}, \mathbf{[\bigtriangledown \mathbf{{\omega}_{\mathbb{L}, i}^{k}}]}_{pk})$, and $\mathbf{RS}([\bigtriangledown \mathbf{{\omega}_{\mathbb{L}, i}^{k}}]_{pk})$, where $\bigtriangledown{\omega}_{\mathbb{L}, i}^{k}=\bigtriangledown{{\omega}_{\mathbb{L}, i}^{k}}+ {(M_{\mathbb{L}-1, i}^{k})}^{T}\times {L_{\mathbb{L}, i}^{k}}$.
    \item[] For $j=\mathbb{L}-1$ to $j=1$, each user ${P}_i$ performs following operations.
    \begin{itemize}
    \item[-] Compute $[\mathbf{\omega_{j+1, i}^{k})}^{T}]_{pk}= \texttt{HE-LinTrans} ([\mathbf{\omega_{j+1, i}^{k})}]_{pk}, U^{(\mathbb{L}-1)})$, where $U^{(j+1)})$  is the  permutation representation matrix of $(\omega_{j+1, i}^{k})^{T}$.
     \item[-]  Compute $[\mathbf{L_j^{k}}]_{pk}=$\texttt{HE-MatMult} $([\mathbf{L_{j+1}^{k}}]_{pk}, [\mathbf{\omega_{j+1, i}^{k}}^{T}]_{pk})$.Then, computes $\mathbf{RS}([\mathbf{L_j^{k}}]_{pk})$, where $L_j^{k}=L_{j+1}^{k}\times (\omega_{j+1}^{k})^{T}$.
    \item[-] Compute $[\mathbf{L_{j, i}^{k}}]_{pk}$ with basic SIMD operations, where $L_{j, i}^{k}=\varphi'(E_{l, i} ^{k})\odot L_{j, i}^{k}$.  Then, computes $\mathbf{RS}([\mathbf{L_{j, i}^{k}}]_{pk})$.
      \item[-] Compute $[\bigtriangledown \mathbf{{\omega}_{j, i}^{k}}']_{pk}=$\texttt{HE-MatMult} $([\mathbf{(M_{j-1, i}^{k})}^{T}]_{pk}, [\mathbf{L}_{j, i}^{k}]_{pk})$. Then, computes $\mathbf{RS}([\bigtriangledown \mathbf{{\omega}_{j, i}^{k}}']_{pk})$.
    \item[-]  Compute $[\bigtriangledown \mathbf{{\omega}_{j, i}^{k}}]_{pk}=\mathbf{Add}([\bigtriangledown \mathbf{{\omega}_{j, i}^{k}}']_{pk}, \mathbf{[\bigtriangledown \mathbf{{\omega}_{j, i}^{k}}]}_{pk})$, and $\mathbf{RS}([\bigtriangledown \mathbf{{\omega}_{j, i}^{k}}]_{pk})$, where $\bigtriangledown{\omega}_{j, i}^{k}=\bigtriangledown{{\omega}_{j, i}^{k}}+ {(M_{j-1, i}^{k})}^{T}\times {L_{j, i}^{k}}$.
  \end{itemize}
  \end{itemize}
 \item \textbf{Aggregation}:
  \begin{itemize}
    \item[] For $j=1$ to $j=\mathbb{L}$,  the cloud server $\mathcal{C}$ performs following operations.
  \begin{itemize}
     \item [-] Compute $[\mathbf{{\bigtriangledown \omega}_{j}^{k}}]_{pk}=[\sum_{i=1}^{N}\mathbf{\bigtriangledown {\omega}_{j, i}^{k}}]_{pk}$ with the basic $\mathbf{Add}$ function.
  \item[-] Compute $[\mathbf{{\omega}_{j}^{k+1}}']_{pk}=\mathbf{Mul}_{pt}([\mathbf{\bigtriangledown {\omega}_{j}^{k}}]_{pk}, \frac{\eta}{\mathcal{B}\times N})$.
\item[-]  Compute $[\mathbf{{\omega}_{j}^{k+1}}]_{pk}=\mathbf{Sub}([\mathbf{{\omega}_{j}^{k}}]_{pk}, [\mathbf{{\omega}_{j}^{k+1}}']_{pk})$ and broadcasts them to all users, where ${\omega}_{j}^{k+1}={\omega}_{j}^{k}-\frac{\eta}{\mathcal{B}\times N}\bigtriangledown {\omega}_{j}^{k}$.
  \end{itemize}
  \end{itemize}
\end{itemize}
\\
\hline
\end{tabular}
\caption{{Detailed description of  Hercules}}
\label{Detailed description of the scheme}
\vspace{-10pt}
\end{table*}
\renewcommand\tablename{TABLE}
\renewcommand \thetable{\Roman{table}}
\setcounter{figure}{4}
\setcounter{table}{6}

\section{Datasets Used in Experiments}
\label{AP:DATA}
Consistent with POSEIDON \cite{sav2020poseidon}, we choose the following  public datasets in our experiments.
\begin{packeditemize}
\item [a.]The Breast Cancer Wisconsin dataset (BCW) \cite{lavanya2011analysis}, which has a total of $|D|=699$ samples, and the dimension of each sample is $h=9$, with  $\omega_{\mathbb{L}}=2$, where $\omega_{\mathbb{L}}$ represents the number of labels (also the number of neurons in the last layer of the neural network (NN)).

\item[b.]MNIST dataset \cite{deng2012mnist} with $|D|=70,000$, $h=28\times 28$, and $\omega_{\mathbb{L}}=10$.

\item [c.]The Epileptic seizure recognition (ESR) dataset \cite{ESR} with $|D|=11,500$, $h=179$, and $\omega_{\mathbb{L}}=2$.

\item [d.]The default of credit card clients (CREDIT) \cite{CREDIT} with $|D|=30,000$, $h=23$, and $\omega_{\mathbb{L}}=2$.
\item [e.]The street view house numbers (SVHN) dataset \cite{SVHN} with colored images (3 channels),  with $|D|=600,000$, $h=3\times 32\times 32$, and $\omega_{\mathbb{L}}=10$.
\item [f.]CIFAR-10 and CIFAR-100 with  $|D|=60,500$,  $h=3\times 32\times 32$, $\omega_{\mathbb{L}}=10$, and $\omega_{\mathbb{L}}=100$, respectively.
\end{packeditemize}
We convert SVHN to gray-scale to reduce the number of channels. In addition, we make the dimension of each weight matrix to the nearest power of 2 by padding zeros. As a result, we actually train the NN on the CREDIT, ESR, and MNIST datasets with the feature dimensions of 32, 256, 1024, respectively. Since SVHN is grayed out, the dimension of its features already meets the power of 2. On the other hand,  we generate synthetic data to evaluate the scalability of the system, where we record the system performance under different features or the number of samples. For simplicity,  we uniformly and randomly distribute all the above datasets to users. Note that the distribution of data among different users may affect the accuracy of the model, but this is orthogonal to the focus of this paper. Since  \Name is designed for  general federated learning scenario, all existing processing strategies for data distribution can be seamlessly integrated into  \Name.

\begin{table*}
\footnotesize
\centering
\caption{ {Comparison with existing MPC-based works}}
\label{Comparison with existing MPC-based works}
\begin{threeparttable}
\begin{tabular}{|c|c|c|c|c|c|c|c|c|c|c|}
\Xhline{1pt}
\multirow{2}*{\diagbox{Index} {Schemes}}&ABY2.0&Trident&SWIFT&CRYPTFLOW&XONN&BLAZE&ABY$^3$ &SecureML&GALA&\multirow{2}*{\Name}\\
&\cite{patra2021aby2}&\cite{Trident}&\cite{koti2021swift}&\cite{kumar2020cryptflow}&\cite{riazi2019xonn}&\cite{patrablaze}& \cite{mohassel2018aby3}&\cite{mohassel2017secureml}&\cite{zhang2021gala}&\\ \hline
MPC Setup&2PC&4PC&3/4PC&3PC&2PC&3PC&3PC&2PC&2PC& N-party\\\hline
Inference&\Checkmark&\Checkmark&\Checkmark&\Checkmark&\Checkmark&\Checkmark&\Checkmark&\Checkmark&\Checkmark&\Checkmark*\\\hline
Training&\Checkmark&\Checkmark&\Checkmark&\XSolidBrush&\XSolidBrush&\XSolidBrush&\Checkmark&\Checkmark&\XSolidBrush&\Checkmark\\\hline
Adversarial model &1 P&1 A/P&2 A/P&2 P&1P& 1A&1 A/P&1P&1P&N-1P\\\hline
Collusion & \XSolidBrush &\XSolidBrush&2&\XSolidBrush&\XSolidBrush&\XSolidBrush&\XSolidBrush&\XSolidBrush&\XSolidBrush&N-1\\\hline
Techniques & GC, SS &GC, SS&SS&GC, SS&GC, SS&GC, SS&GC, SS&GC, SS&GC, HE&HE\\\hline
 Linear operation& \Checkmark&\Checkmark &\Checkmark&\Checkmark&\Checkmark&\Checkmark&\Checkmark&\Checkmark&\Checkmark&\Checkmark\\\hline
 Conv. operation& \Checkmark& \Checkmark&\Checkmark&\Checkmark&\Checkmark&\XSolidBrush&\Checkmark&\Checkmark&\Checkmark&\Checkmark\\\hline
 Pooling operation& \Checkmark& \Checkmark&\Checkmark&\Checkmark&\Checkmark&\XSolidBrush&\Checkmark&\Checkmark&\Checkmark&\Checkmark\\\hline
\end{tabular}
\begin{tablenotes}
        \item[] \footnotesize{2PC  represents the secure multi-party computing protocol between the two parties, and the rest can be deduced by analogy. A/P means active or passive adversary. \Checkmark* indicates that \Name also supports N-party distributed inference, because inference can be seen as a sub-process of training.}
      \end{tablenotes}
      \end{threeparttable}
\end{table*}
\section{Neural Network Structure}
\label{AP:NNS}
We use the following NN structures to train specific datasets.
\begin{packeditemize}
\item [a.]We train a 2-layer fully connected NN with BCW, ESR and CREDIT data sets, where each layer contains $64$ neurons. Similarly, the same structure was used to test the cost of \Name on the synthetic dataset.

\item[b.] We train a three-layer fully connected NN with 64 neurons in each layer for the MNIST and SVHN datasets.

\item [c.] We train two models for CIFAR-10 dataset. Specifically, (i) a CNN structure contains $2$ convolutional layers (CV), an average pooling layer, a max pooling layer with kernel size of $2\times 2$, and two fully connected layer (FC) which contains $128$ and $10$ neurons, respectively. We called this structure as CIFAR-10-N1. (ii) a CNN structure contains $4$ CV with kernel size of $3\times 4$,  an average pooling layer and a max pooling layer with kernel size of $2\times 2$, and 2 FC which contain $128$ and $10$ neurons, respectively. We called this structure as CIFAR-10-N2.

\item [d.]For CIFAR-100, we train a  CNN structure consisting of  $6$ CV with kernel size of $3\times 4$,  an average pooling layer and a max pooling layer with kernel size of $2\times 2$, and 2 FC  which contain $128$  neurons for each layer.
\end{packeditemize}
We changed the number of filters from 3 to 16 for all CV layers. The number of global iterations used to train the above models is 100; 500; 600; 1000; 18,000; 16,800; 25,000 and 54,000 for BCW, CREDIT, ESR, MNIST, SVHN,   CIFAR-10-N2, CIFAR-10-N1, and CIFAR-100, respectively. The local batch size of each user is set to 10, thus the global batch size is 100 for 10 users, and 500 for 50 users.
\section{Learning Extensions}
\label{AP:Learning Extensions}
\subsection{Asynchronous distributed learning with uneven data distribution}
In \Name, we rely on all users online to complete the global update of the gradient under the ciphertext. However, \Name also supports asynchronous neural network learning, which can be realized by the server only receiving data from users within a certain threshold time. In this case, the training of the neural network mainly benefits from the local data of users with good network status.  Noted that the smooth execution of the distributed bootstrapping operation  $\mathbf{DBootstrap}$ requires the participation of all users. This operation instead of being done in a centralized manner usually requires more computation and communication overhead.

In  \Name, we evenly distribute the data to each user for the simplicity of the experiment. In fact, the uneven distribution of users' data and asynchronous gradient descent will inevitably affect the accuracy of the model. How to deal with these problems has been extensively studied \cite{tang2018d,tang2019doublesqueeze},  which are orthogonal to this work. Note that our privacy protection method is independent of the distribution of user data and the way of gradient update. Hence, all existing works can be seamlessly integrated into \Name.

\subsection{Training on Other Neural Networks}

 In this work, we focus on deploying \Name on MLPs and CNNs, and demonstrate the performance of our packing scheme and function approximation method under these NNs. For other structures, such as long short-term memory (LSTM)\cite{baytas2017patient}, recurrent neural network (RNN)\cite{li2018independently} and residual neural network (ResNet) \cite{fablet2017bilinear}, \Name needs to modify the local gradient update process according to their forward and backward pass operations. For example, ResNet has skip connections to skip certain layers, so after skipping a layer, the shape of the encrypted ciphertext should be aligned according to the weight matrix to be multiplied. This can be ensured by using the rotation function of CKKS (rearrange the slots of the ciphertext). We believe that these modifications are not significant. Although POSEIDON is tailored for MLPs and CNNs, in essence, \Name can be used to quickly calculate any mathematical operation under ciphertext, which constitutes the main body of other neural network structures.

\setlength{\abovecaptionskip}{10pt}
\setlength{\belowcaptionskip}{5pt}
\section{Microbenchmarks of \Name and POSEIDON}
\label{Microbenchmarks}
\begin{table}
\footnotesize
\caption{Microbenchmarks of \Name and POSEIDON}
\label{Microbench}
\begin{tabular}{l|l|l}
\hline
Functionality& Execution time (s)& Comm. (MB)\\
\midrule
Approx\_Sigmoid & 0.017 | \textbf{0.017} & --  \\
Appro\_ReLu & 0.01 | \textbf{0.09} & -- \\
Appro\_Softmax & 0.07 | \textbf{0.07} &  -- \\
Average-pooling &0.034|\textbf{0.034} & -- \\
Max-pooling &5.73|\textbf{1.91} &23.5|\textbf{6.9}\\
DBootstrap &  0.09 | \textbf{0.09} & 4.5 | \textbf{1.5} \\
FC layer & 0.097 | \textbf{0.0015} &  -- \\
FC layer-backprop & 0.14 | \textbf{0.002} & -- \\
CV layer & 0.04 | \textbf{0.0006}  & -- \\
CV layer-backprop & 0.06 | \textbf{0.0009}& -- \\
DKeySwitch & 0.06 | \textbf{0.06} &  22.14 | \textbf{22.14}\\\hline
\end{tabular}
\end{table}
We present the microbenchmark  cost of \Name and POSEIDON in performing different functions. As shown in Table~\ref{Microbench}, the bold font  represents the overhead of  \Name.  For  experimental configurations, We consider the number of users $N=10$, the dimension of each sample is $h=32$, the number of neurons in each layer is 64 or with kernel size  $3\times 3$, and the  dimension of  cyclotomic polynomial ring  used in CKKS is set $\mathcal{N}=2^{13}$. All the costs represent the processing of 1 sample per party.

We record the computation and communication costs (in an amortized way) of calculating FC, CV, FC backpropagation, CV backpropagation, and different activation functions in \Name and POSEIDON. For the activation functions sigmoid and sofmax, \Name and POSEIDON exhibit the same overhead because they both use the same function approximation method. For \texttt{ReLU}, since \Name uses a 20 bit-precision  composite polynomial to fit the original function, while POSEIDON simply uses a polynomial with a degree of 3 to approximate the original \texttt{ReLU} (using the least squares method),  it makes the cost of  \Name larger than POSEIDON but derives a higher approximation accuracy. Note that given the same error bound, the time required to calculate the polynomial generated by the least square method under ciphertext is much longer than ours.  For max pooling, POSEIDON replaces it with Chebyshev function interpolation, thereby obtaining a polynomial of degree 31 with a precision of 7 bits. Experimental results show that our 7-bit precision composite polynomial is much less expensive than POSEIDON. This is mainly due to the characteristics of composite polynomials. As discussed before, a composite function with degree of $deg(G)$ can be calculated with a complexity of $O(\log(deg (G)))$,  while the computation complexity of calculating any function $G$ is at least $\Theta(\sqrt{deg (G)})$.

The operations in FC, CV and their backpropagation are mainly composed of matrix multiplication. As shown in Table~\ref{Microbench}, \Name is far superior to POSEIDON in terms of calculating homomorphic matrix multiplication. Specifically, POSEIDON adopts AP to achieve fast SIMD calculations.  For the multiplication of two $h\times h$-dimensional matrices, the complexity of the homomorphic rotation operation required by AP is $\max_{i\in [\mathbb{L}]}(\omega_i\times\log(h\times \omega_i))$.  For \Name, as shown in Table~\ref{Complexity of Algorithm 3}, the complexity required for the matrix multiplication is only $3h+5\sqrt{h}$. Moreover, AP requires multiple copies and zero padding operations for each row or column of the input matrix, (depending on the number of neurons in each hidden layer, and the absolute value of the difference between the row or column dimension of the matrix and the corresponding hidden layer neuron. On the contrary, our method does not require additional element copying except for a small amount of zero padding in the initial stage to facilitate calculations. Therefore, \Name obviously exhibits lower communication overhead.

\section{Storage Overhead}
\label{Comparison of storage cost}
\begin{table}
\centering
\small
\caption{Comparison of storage cost}
\label{Storage overhead}
\begin{tabular}{|c|c|c|}
\Xhline{1pt}
\multirow{2}*{Datasets}& \multicolumn{2}{c|}{Storage cost (GB) }\\ \cline{2-3}
&\Name &POSEIDON\\ \hline
BCW& $0.0012$  & 0.095\\ \hline
ESR& $0.184$ & $11.778$ \\ \hline
CREDIT& $0.061$ & $3.948$\\ \hline
MNIST& $4.906$ & $314.025$\\ \hline
SVHN& $54.9$ & $3515.6$\\ \hline
CIFAR-10& $16.47$ & $2109.37$\\ \hline
CIFAR-100& $16.47$ & $2109.37$\\ \Xhline{1pt}
\end{tabular}
\end{table}
As shown in TABLE~\ref{Storage overhead}, we also count the storage overhead required for training each scheme on different datasets,  where the storage overhead is mainly dominated by the size of the ciphertext that each user needs to save locally. We observe that the storage overhead required by \Name on each dataset is much lower than that of POSEIDON. Since POSEIDON designs a  alternating packing (AP)  method to  pack ciphertext in the encryption process,  this leads to multiple copies and zero padding operations for each ciphertext, depending on the number of neurons in
each hidden layer.  On the contrary, our method does not require additional element copying except for a small amount of zero padding in the initial stage to facilitate calculations.

\section{Comparison with Other Prior Works}
 Beyond the HE-based schemes, a wealth of MPC solutions based on interaction between multiple servers have been proposed to design privacy-preserving neural network training and prediction frameworks. These solutions may be not practical in the real-world setting without multiple servers. Nevertheless, we still make a rough comparison between these works and \Name to demonstrate the merits and demerits of each scheme. As shown in Table~\ref{Comparison with existing MPC-based works} in the Appendix, MPC protocol with multiple servers relies on splitting the training task into two or more servers, where the servers are usually assumed to be non-colluding.  Then, the state-of-the-art secret sharing methods, including arithmetic sharing \cite{patra2021aby2}, boolean sharing \cite{mohassel2018aby3}, and Yao's garbled circuit \cite{kumar2020cryptflow} are carefully integrated to efficiently implement various mathematical operations under the ciphertext. It is computationally cost-effective, avoiding the high communication overhead among large-scale users. For example,  given the MNIST dataset, we roughly compare the computational cost of  \Name, ABY$^3$ \cite{mohassel2018aby3}, and XONN  \cite{riazi2019xonn} in training the same network architecture. Here we use a three-layer architecture with 128 neurons in each layer. The number of users in \Name is set to 3 and the global training epoch is set to 15. Experimental results show that \Name requires a total of 24.3 hours to train such a model, while  ABY$^3$  and XONN need 1.02 and 0.58 hours, respectively.  We remind that \Name is operated in a different  scenario (FL) and threat model from those MPC-based schemes. It supports more participants with collusion, and the cost only grows linearly with the number of users.  In contrast, those MPC-based solutions require to outsource training tasks among limited computing servers, which may be impractical in some scenarios.

\end{document}